
\documentclass[12pt,a4paper,amsmath,amssymb]{amsart}

\usepackage{amsthm}
\usepackage{latexsym}
\usepackage{amsfonts}
\usepackage{bbm,bm,dsfont}
\usepackage{color} 

\newtheorem{theorem}{Theorem}

\newtheorem{remark}{Remark}
\newtheorem{corollary}{Corollary}
\newtheorem{proposition}{Proposition}
\newtheorem{example}{Example}

%VARIT:

\newcommand{\blue}[1]{\textcolor{blue}{#1}}

%numbers
\newcommand{\N}{\mathbb N}
\newcommand{\R}{\mathbb R}
\newcommand{\Z}{\mathbb Z}
\newcommand{\C}{\mathbb C}

\newcommand{\T}{\mathbb T}

%Hilbert space
\newcommand{\hi}{\mathcal{H}} %Hilbert space
\newcommand{\hd}{{\mathcal{H}_\oplus}} %direct integral Hilbert space?
\newcommand{\ki}{\mathcal{K}} %Hilbert space
\newcommand{\li}{\mathcal{L}} %other Hilbert space

%operator classes
\newcommand{\lh}{\mathcal{L(H)}} %bounded linear operators
\newcommand{\lk}{\mathcal{L(K)}} %bounded linear operators
\renewcommand{\th}{\mathcal{T(H)}} %trace class operators
\newcommand{\tk}{\mathcal{T(K)}} %trace class operators
 %trace class operators
\newcommand{\sh}{\mathcal{S(H)}} %states
 %effects
 %unitary operators
 %projections

%trace, ket, ketbra, etc
\newcommand{\tr}[1]{\mathrm{tr}\left[#1\right]} %trace
 %ket
\newcommand{\kb}[2]{|#1\,\rangle\langle\,#2|} %ketbra
 %range

%observables
\newcommand{\Mo}{\mathsf{M}} %generic observable
\newcommand{\sfe}{\mathsf{E}} %generic observable
\newcommand{\Po}{\mathsf{P}} %generic observable
\newcommand{\M}{\mathcal M} %projection
\renewcommand{\O}{\mathrm{Obs}} %observables
\newcommand{\In}{\mathrm{Ins}} %observables
 %projection
 %Fourier transform
\newcommand{\ext}{\mathrm{Ext}\,} %extremals
\newcommand{\A}{\mathsf{A}}

\newcommand{\K}{{\mathbf K}(X,V)}

%other
\def\<{\langle}
\def\>{\rangle}
\def\d{{\mathrm d}}
\newcommand{\bo}[1]{\mathcal{B}(#1)} %Borel sigma-algebra
\newcommand{\fii}{\varphi}
\newcommand{\la}{\lambda}
\newcommand{\CHI}[1]{\ensuremath{ \chi\raisebox{-1ex}{$\scriptstyle #1$} }}
\newcommand{\CHII}[1]{\ensuremath{{\hat\chi}\raisebox{-1ex}{$\scriptstyle #1$} }}
\newcommand{\ov}{\overline}

\newcommand{\lin}{{\rm lin}}
\newcommand{\e}{{\bf h}}
\renewcommand{\k}{{\bf k}}

%POISTA:
\setlength{\textwidth}{17cm}
\addtolength{\textheight}{2.0cm}
\addtolength{\voffset}{-2.4cm}
\addtolength{\hoffset}{-2.0cm}

\begin{document} 

\title{Completely positive maps on modules, instruments, extremality problems, and applications to physics}

\author{Juha-Pekka Pellonp\"a\"a}
\email{juhpello@utu.fi}
\address{Turku Centre for Quantum Physics, Department of Physics and Astronomy, University of Turku, FI-20014 Turku, Finland}

\begin{abstract}
Convex sets of completely positive maps and positive semidefinite kernels are considered in the most general context of modules over $C^*$-algebras and a complete charaterization of their extreme points is obtained. As a byproduct, we determine extreme quantum instruments, preparations, channels, and extreme autocorrelation functions. Various applications to quantum information and measurement theories are given. The structure of quantum instruments is analyzed thoroughly.

%\newline

%\noindent
%PACS numbers: 03.65.Ta, 03.67.--a
\end{abstract}

\maketitle

%%%%%%%%%%%%%%%%%%%%%%%%%%%%%%%%%%%%%%%%%%%%%%%%%%%%%%%%%%%

\section{Introduction}

There is no question about the importance of completely positive maps (instruments, POVMs, channels) in quantum information and measurement theories \cite{BuLaMi,Da,Teikonkirja,Ho,Kr}.
One could say that they form a core of modern mathematical analysis of quantum theory.
For example, a normalized positive operator valued measure (POVM) describes the statistics of the outcomes of a quantum measurement and can be identified with a quantum observable.

In 1970, Davies and Lewis \cite{DaLe} introduced the concept of instrument which turned out to be crucial in developing quantum measurement theory since,
besides measurement statistics,
it also describes the state change due to a quantum measuring process. In 1984,
Ozawa \cite{Oz84} proved that any completely positive (CP) instrument can be dilated to a quantum measuring process, that is, any instrument can be realized as a measurement model of a POVM. The role of completely positivity is essential in this characterization, see also \cite{BuLa}.
Later Holevo \cite{Ho98} analyzed the structure of instruments and showed that any CP instrument has a pointwise Kraus decomposition \cite{Kr}.
Recently, applications of quantum instruments and their extremality problems have been studied extensively, see e.g.\ \cite{CaHeTo,ChDAPe,DAPeSe,HaLaSc,HeWo,Ho98,Yl} and references therein.

Since instruments (and hence POVMs and channels) are special cases of CP maps, their mathematical analysis is based on the celebrated Stinespring dilation theorem \cite{St}. This theorem has been generalized in many directions, the most general extension being \cite[Theorem 4.3]{PeYl} where the CP maps are defined on a (unital) $C^*$-algebra and get their values in the vector space of $A$-sesquilinear $A$-valued forms on an $A$-module $V$ (where $A$ is a $C^*$-algebra). In quantum mechanics, one typically chooses $A=\C$ and let $V$ be a Hilbert space, but in geometric theories of physics (general relativity, gauge field theory, etc.)
one uses more complicated algebras $A$.
For example, when theory is based on a vector bundle structure over a  manifold $\Omega$,
one may take $A=C_0(\Omega)$, the continuous functions $\Omega\to\C$ vanishing at infinity. Then $V$ could be the linear space of continuous vector fields.
This type of structures (especially Hilbert $C^*$-modules \cite{Manuilov}) 
are used in noncommutative geometry \cite{Gracia} which forms a link between geometric theories and quantum theory.  
Sometimes noncommutative geometry is viewed as a route to quantum gravity and spacetime. A nice application of CP maps to the problem of quantum spacetime is given in \cite{HaMaSa}.

In this article, we define convex sets of CP maps and positive definite kernels in the most general context and charaterize completely their extreme points. We apply this result e.g.\ to arbitrary CP instruments. 
Here are the results of this paper (some of them are known in the discrete finite-dimensional cases but our results are also valid in `nondiscrete' cases and in infinite dimensions):

\begin{itemize}
\item The structure of an arbitrary instrument $\M$ is determined in several different ways (Theorem \ref{th2}) by using e.g.\ structure vectors $\psi_m^t(x)$, generalized vectors $d_k^t(x)$, pointwise Kraus operators $\A_k(x)$, and setwise Kraus operators $\A_k(X)$. See also the Appendix.

\item Any instrument has a {\it minimal} pointwise Kraus decomposition (item (2) of Theorem \ref{th2}).

\item The Dirac formalism is extended to instruments (Remark \ref{dirac}) so that it can be used to find compatible instruments of POVMs (Section \ref{luku4}).

\item We characterize the extreme points of the convex set of instruments (see (4) and (5) of Theorem \ref{th2} and Remark \ref{exrem}).

\item The extreme point characterization of instruments is applied to observables, preparations, the discrete case, channels, and the finite dimensional case (Subsection \ref{examples}).

\item Extreme instruments are discrete in finite dimensions (Proposition \ref{lksjdf}).

\item For any POVM $\Mo$, we show that the $\Mo$-compatible instruments can be identified with the decomposable CP channels (Theorem \ref{compa}) and can be viewed as  combinations of L\"uders operations and channels (Corollary \ref{corolla1}).

\item We present a complete characterization for pure realizations (measurement models) of instruments (Theorem \ref{purereali} and Remark \ref{rem7}), and for minimal pure realizations (Corollary \ref{purecorolla}).

\item The standard model of quantum measurement theory is generalized for arbitrary POVMs (Example \ref{stmo}).

\item We determine the posterior (i.e.\ the post measurement) states (of a measurement) for arbitrary input states by using a minimal pointwise Kraus form of an instrument (Subsection \ref{postpost}).

\item Any instrument can be maximally refined into a rank-1 instrument, and if an instrument is extreme then its  rank-1 refinement is also extreme (Proposition \ref{jgjhfjoooot}).

\item For any rank-1 POVM $\Mo$, we prove that the $\Mo$-compatible instruments are all nuclear and their associate channels are entanglement-breaking (Theorem \ref{hnasgcbnsjcd} and Example \ref{nuclearex}).

\item We determine the extreme points of the convex set of `very general' positive definite kernels (Theorem \ref{extker}) and, as an application, characterize extreme autocorrelation functions of stochastic processes (Proposition \ref{auto}).

\item A complete characterization of the extreme points of the convex set of `very general' CP maps on modules is given in Theorem \ref{seiska}.

\item Finally, we present a generalization for Choi isomorphism widely used in quantum information (Theorem \ref{CJyleistys}).

\end{itemize}

\section{Basic notations and definitions}

For any Hilbert space $\hi$ we let $\lh$ [resp.\ $\th$] denote the set of bounded [resp.\ trace-class] operators on $\hi$. 
We let the innerproduct $\<\,\cdot\,|\,\cdot\,\>$ of a Hilbert space (or any sesquilinear form) be linear with respect to its second argument.
We say that a positive operator $\rho\in\th$ of trace 1 is a state (or a density operator) and denote the set of states by $\sh$.
The identity operator of any Hilbert space $\hi$ is denoted by $I_\hi$.
Throughout this article, we let $\hi$ and $\ki$ be {\it separable} (complex) nontrivial Hilbert spaces
and $(\Omega,\Sigma)$ be a measurable space (i.e.\ $\Sigma$ is a $\sigma$-algebra of subsets of a set $\Omega$). If a fixed measure $\mu$ is given on $(\Omega,\Sigma)$, without restricting generality, we assume that $\Sigma$ is complete with respect to $\mu$ (i.e.\ contains $\mu$--null sets). Hence, the concepts of $\Sigma$--measurability and $\mu$--measurability coincide and we may just speak about measurability of a function $f:\,\Omega\to\C$.
As usual we define an empty sum to be 0, e.g.\ $\sum_{k=1}^0(\ldots):=0$, and $\N:=\{0,1,\ldots\}$. Moreover, %for each $k\in\N$, we let $\Z_k$ denote the set $\{1,2,\ldots,k\}$, and 
$\N_\infty:=\N\cup\{\infty\}$ and $\N_+:=\{1,2,\ldots\}$.

\subsection*{Operator measures}
Let $\Mo:\,\Sigma\to\lh $ be an {\it operator (valued) measure,} i.e.\ (ultra)weakly $\sigma$-additive mapping.
We call $\Mo$ \emph{positive} if for all $X\in\Sigma$,  $\Mo(X)\geq 0$, \emph{normalized} if $\Mo(\Omega)=I_\hi$, and \emph{projection valued} if $\Mo(X)^2=\Mo(X)^*=\Mo(X)$ for all $X\in\Sigma$.
Normalized positive operator valued measures (POVMs) are identified with \emph{(quantum) observables} whereas normalized projection valued measures (PVMs) are called \emph{spectral measures} or \emph{sharp observables}. The convex set of POVMs $\Mo:\Sigma\to\lh $ is denoted by $\O(\Sigma,\,\hi)$ and its extreme points by $\ext\O(\Sigma,\,\hi)$.
A convex combination (observable) $t\Mo_1+(1-t)\Mo_2$, $0<t<1$, can be viewed as a randomization of measuring procedures represented by the observables $\Mo_1$ and $\Mo_2$. An extreme observable $\Mo\in\ext\O(\Sigma,\,\hi)$ cannot be obtained as a (nontrivial) combination; this means that the measurement of $\Mo$ involves no redundancy caused by mixing different measuring schemes.

\subsection*{Instruments}
We say that a map $\M:\,\Sigma\times \lk\to \lh$ is a {\it (CP quantum) instrument} if
\begin{enumerate}
\item for all $X\in\Sigma$, the mapping $\lk\ni B\mapsto \M(X,B)\in\lh$ is linear, {\it completely positive} (CP), and ultraweakly continuous (normal), 
\item $\M(\Omega,I_\ki)=I_\hi$, %[and $0\le \M(X,I_\ki)\le I_\hi$ for all $X\in\Sigma$],
\item $\tr{\rho\M(\cup_{i=1}^\infty X_i,B)}=\sum_{i=1}^\infty\tr{\rho\M(X_i,B)}$ for any disjoint sequence $\{X_i\}_{i=1}^\infty\subseteq\Sigma$ and for all $\rho\in\th$, $B\in\lk$.
\end{enumerate}
%(Usually, in physical applications, $\ki=\hi$.)
For any $B\in\lk$,
we define  an operator measure $$\Mo_B:\,\Sigma\to\lh,\,X\mapsto \Mo_B(X):=\M(X,B).$$ 
It is positive if $B\ge 0$ and normalized if $B=I_\ki$. Hence, $\Mo_{I_\ki}$ is a POVM, the {\it associate observable of $\M$.}
An instrument can be seen as a certain collection of operator measures (indexed by bounded operators $B$).
Any $B\in\lk$ can be (nonuniquely) decomposed into positive parts, that is, $B=\sum_{k=0}^3 i^kB_k$
where operators $B_k$ are bounded and positive. 
%Obviously, such a decomposition is not unique so that we must make a choice for the above decomposition as follows: First, write $B=B_{\rm Re}+i B_{\rm Im}$ where the operators $B_{\rm Re}:=\frac12(B+B^*)$ and $B_{\rm Im}:=\frac1{2i}(B-B^*)$ are self-adjoint. Secondly, decompose any self-adjoint operator $C=C^*$ into natural positive parts, that is, write $C=C_+-C_-$ where $C_\pm$ are positive and defined as $C_+:=\int_0^\infty x\,\d\mathsf C(x)$ and $C_-:=-\int_{-\infty}^0 x\,\d\mathsf C(x)$ where $\mathsf C$ is the spectral measure of $C$.
%For example, if $\fii,\,\psi\in\ki$,
%$$
%\kb\fii\psi= 
%\frac14\sum_{k=0}^3 i^k\kb{\fii+i^k\psi}{\fii+i^k\psi}.
%$$ 
Hence, by linearity, any instrument $\M$ is uniquely determined already by positive operator measures $\Mo_B$ where $B\ge0$.

The convex set of instruments $\M:\,\Sigma\times \lk\to \lh$ is denoted by $\In(\Sigma,\,\ki,\,\hi)$ and its extreme points by $\ext\In(\Sigma,\,\ki,\,\hi)$. Recall that any $\M\in\In(\Sigma,\,\ki,\,\hi)$ defines a `predual map' $\M_*:\,\Sigma\times\th\to\tk$ by $\tr{\M_*(X,\rho)B}:=\tr{\rho\M(X,B)}$ for all
$X\in\Sigma$, $\rho\in\th$, and $B\in\lk$. 
Sometimes $\M$ is referred as a Heisenberg instrument and $\M_*$ a Schr\"odinger instrument.
Obviously, the map $\M\mapsto\M_*$ is an affine bijection so that $\M$ is an extreme Heisenberg instrument if and only if $\M_*$ is an extreme Schr\"odinger instrument.

%\begin{enumerate}
%\item for all $X\in\Sigma$, the mapping $\th\ni \rho\mapsto \M_*(X,\rho)\in\tk$ is linear, CP, and (trace norm) continuous, 
%\item $\tr{\M_*(\Omega,\rho)}=\tr\rho$ for all $\rho\in\th$,
%\end{enumerate}

\begin{remark}\rm
On the first hand,
any instrument $\M$ defines a CP channel,\footnote{A map $T:\,\lk\to\lh$ is a {\it (quantum) operation} if it is linear, positive, ultraweakly continuous, and $T(I_{\ki})\le I_\hi$. If, moreover, $T(I_\ki)=I_\hi$ then $T$ is said to be a {\it (quantum) channel.} 
The maps $B\mapsto\M(X,B)$ are CP operations.}
 $B\mapsto\M(\Omega,B)$,  the {\it associate channel of $\M$}.
On the other hand, if $T:\,\lk\to\lh$ is a CP channel, then by choosing
$\Omega=\{0\}$ and $\Sigma=2^{\{0\}}=\big\{\emptyset,\{0\}\big\}$, one can define an instrument 
$\M_T(\{0\},B):=T(B)$, $B\in\lk$.
Similarly, for any POVM $\Mo:\,\Sigma\to\lh$
there exist an instrument $\M^\Mo:\,\Sigma\times\mathcal \C\to\lh$ defined by
$\M^\Mo(X,c):=c\, \Mo(X)$ where $c\in\C\cong\mathcal L(\C)$ (via $c\mapsto c\kb11$).
We call the instruments $\M_T$ and $\M^\Mo$ {\it trivial instruments} associated with $T$ and $\Mo$, respectively.
Thus, it follows that all general results for instruments are applicable to channels and POVMs.

Finally, recall the following nontrivial result of Davies and Lewis \cite[Theorem 1]{DaLe}: For any POVM $\Mo\in\O(\Sigma,\,\hi)$ there exists a (nonunique) CP instrument $\M\in\In(\Sigma,\,\hi,\,\hi)$, an {\it $\Mo$-compatible instrument}, such that its associate observable $\Mo_{I_\hi}=\Mo$, that is, 
$\tr{\M_*(X,\rho)}\equiv\tr{\rho\Mo(X)}$. 
Let $\M\in\In(\Sigma,\,\hi,\,\hi)$ be an $\Mo$-compatible instrument of a projection valued measure $\Mo\in\O(\Sigma,\,\hi)$.
Then $\M(X,B)\equiv\Mo(X)T(B)$ where $T$ is a CP channel such that $\Mo(X)T(B)\equiv T(B)\Mo(X)$ \cite[Prop.\ 4.3 and 4.4]{Oz84}.
It is then obvious  that for a POVM $\Mo$ and a channel $T$ there does not necessarily exist an instrument $\M$ such that $\M(X,I_\ki)\equiv\Mo(X)$ and $\M(\Omega,B)\equiv T(B)$. %(see also example \ref{POVM&channel}).
\end{remark}

\section{Diagonalization and extremality results for POVMs and instruments}

Let $\e=\{h_n\}_{n=1}^{\dim\hi}$ %and $\e'=\{e'_m\}_{m=1}^{\dim\hi}$ 
be an orthonormal (ON) basis of $\hi$ 
%We define an equivalence relation $\sim$ as follows: $\e\sim\e'$ if and only if any $e'_m$ is a {\it finite} linear combination of vectors $h_n$. Let $[\e]$ be the corresponding equivalence class of bases.
%If $\e'\in[\e]$ we say that $\e'$ is an $\e$--admissible basis. 
and $$V_\e:=\lin_\C\{h_n\,|\,1\le n<\dim\hi+1\}.$$ %so that $V_\e=V_{\e'}$ if $\e\sim\e'$ and we have a well-defined mapping $[\e]\mapsto V_\e$. 
Note that $V_\e$ is dense in $\hi$.
Let $V_\e^\times$ be the algebraic antidual of the vector space $V_\e$. Recall that $V_\e^\times$ can be identified with the linear space of formal series $c=\sum_{n=1}^{\dim\hi}c_nh_n$ where $c_n$'s are arbitrary complex numbers. 
Hence, $V_\e\subseteq\hi\subseteq V_\e^\times$.
Denote the dual pairing $\<\psi|c\>:=\sum_{n=1}^{\dim\hi} \<\psi|h_n\>c_n$ and $\<c|\psi\>:=\overline{\<\psi|c\>}$ for all $\psi\in V_\e$ and $c\in V_\e^\times$.
We say that a mapping $c:\,\Omega\to V_\e^\times,\,x\mapsto \sum_{n=1}^{\dim\hi}c_n(x)h_n$ is {\it (weak${}^*$-)measurable} if its components $x\mapsto c_n(x)$ are measurable. Note that, if $c:\,\Omega\to\hi\subseteq V_\e^\times$ is weak${}^*$-measurable then
the maps $x\mapsto\<\psi|c(x)\>$ are measurable for all $\psi\in\hi$.
For any linear map $A:\,V_\e\to H$, where $H$ is a Hilbert space, we let $A^*$ denote the adjoint (transpose) linear map from $H$ to $V_\e^\times$ defined by $\<\psi|A^*\fii\>:=\<A\psi|\fii\>$, $\psi\in V_\e$, $\fii\in H$.
Note that $A^*$ is not necessarily the usual Hilbert space adjoint of $A$ given by the Fr\'echet-Riesz representation theorem. However, for bounded operators between Hilbert spaces and for elements of $C^*$-algebras we use the same symbol ${}^*$ for the usual adjoint and involution.

Let $\hd$ denote a direct integral $\int_\Omega^\oplus\hi_{n(x)}\d\mu(x)$
of {separable} Hilbert spaces $\hi_{n(x)}$ such that $\dim\hi_{n(x)}=n(x)\in\N_\infty$; here $\mu$ is a $\sigma$-finite nonnegative measure\footnote{Note that $\mu$ can be a probability measure everywhere in this paper; any $\sigma$-finite measure is equivalent with a probability measure.} 
on $(\Omega,\Sigma)$. Let $\li$ be an infinite dimensional separable Hilbert space with an ON basis ${\bf b}=\{b_n\}_{n=1}^\infty$.
By choosing a measurable field of ON bases, $\{b_n(x)\}_{n=1}^{n(x)},$ one gets a decomposable unitary operator $U:\,\hd\to\hi_\oplus'$, $U(x)b_n(x):=b_n$, where $\hi_\oplus':=\int_\Omega^\oplus\hi'_{n(x)}\d\mu(x)$ with fibers $$\hi'_{n(x)}:={\lin_\C\{b_n\,|\,1\le n\le n(x)\}}$$
if $0<n(x)<\infty$, $\hi'_0:=\{0\}$, and $\hi'_\infty:=\li$.
Hence, without restricting generality, we simply assume that $\hi_{n(x)}\equiv\hi'_{n(x)}$ and thus $\hd=\hi_\oplus'$. Now $\hd$ can be considered as a closed subspace of $L^2(\mu,\li)\cong L^2(\mu)\otimes\li$, the $\mu$-square integrable functions $\Omega\to\li$, 
and one has a decomposable projection $P=\int^\oplus_\Omega P(x)\d\mu(x)$ from $L^2(\mu,\li)$ onto $\hd$, where $P(x)=\sum_{n=1}^{n(x)}\kb{b_n}{b_n}$. We say that $\hd$ {\it is embedded in $L^2(\mu,\li)$} and write 
$\hd\subseteq L^2(\mu,\li)$.

For each $f\in L^\infty(\mu)$, we denote briefly by $\hat f$ the multiplicative (i.e.\ diagonalizable) bounded operator $(\hat f\psi)(x):=f(x)\psi(x)$ on {\it any} direct integral Hilbert space $\hd=\int_\Omega^\oplus\hi_{n(x)}\d\mu(x)$. Especially, one has the {\it canonical spectral measure} $\Sigma\ni X\mapsto\CHII X\in\mathcal L(\hd)$ (where $\CHI X$ is the characteristic function of $X\in\Sigma$).
We will use the following proposition \cite[Theorem 1, p.\ 187]{Di} several times:
\begin{proposition}\label{prop1}
Let $\hi^i_\oplus=\int_\Omega^\oplus\hi^i_{n^i(x)}\d\mu(x)$, $i=1,\,2$, be two direct integral Hilbert spaces, and let $D:\,\hi^1_\oplus\to\hi^2_\oplus$ be a bounded operator. Then, $D\CHII X=\CHII X D$ for all $X\in\Sigma$ if and only if $D\hat f=\hat f D$ for all $f\in L^\infty(\mu)$ if and only if $D$ is decomposable, i.e.\ $D=\int_\Omega^\oplus D(x)\d\mu(x)$ where $D(x):\,\hi^1_{n^1(x)}\to\hi^2_{n^2(x)}$ are bounded and, for any $\psi\in\hi^1_\oplus$, $(D\psi)(x)=D(x)\psi(x)$ for $\mu$-almost all $x\in\Omega$, and
$\|D\|=\mu\text{\rm-ess sup}_{x\in\Omega}\|D(x)\|<\infty$.
\end{proposition}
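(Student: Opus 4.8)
The plan is to prove the three stated conditions equivalent by running the cycle (3)$\Rightarrow$(2)$\Rightarrow$(1)$\Rightarrow$(2)$\Rightarrow$(3), where the only substantial implication is recovering a decomposable structure for $D$ from the mere fact that $D$ commutes with every diagonalizable operator $\hat f$.

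First I would dispose of the cheap implications. If $D=\int_\Omega^\oplus D(x)\,\d\mu(x)$ is decomposable, then for $\psi\in\hi^1_\oplus$ and $f\in L^\infty(\mu)$ one has $(D\hat f\psi)(x)=D(x)f(x)\psi(x)=f(x)(D\psi)(x)=(\hat f D\psi)(x)$ for $\mu$-almost every $x$, which is (2); the norm identity $\|D\|=\mu\text{\rm-ess sup}_x\|D(x)\|$ is the standard computation, the bound $\le$ being immediate from $\|D\psi\|^2=\int_\Omega\|D(x)\psi(x)\|^2\,\d\mu(x)$ and the reverse bound coming from concentrating $\psi$ on a small set where $\|D(x)\|$ is nearly maximal (a routine measurable-selection argument). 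The implication (2)$\Rightarrow$(1) is trivial, since $\CHII X=\hat f$ for $f=\CHI X\in L^\infty(\mu)$. For (1)$\Rightarrow$(2), I would write an arbitrary $f\in L^\infty(\mu)$ as a uniform limit of simple functions $f_k$; linearity of $D$ gives $D\hat f_k=\hat f_k D$, and $\hat f_k\to\hat f$ in operator norm on both $\hi^1_\oplus$ and $\hi^2_\oplus$, so the commutation relation survives the limit.

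The heart of the argument is (2)$\Rightarrow$(3). Embedding $\hi^1_\oplus$ and $\hi^2_\oplus$ in $L^2(\mu,\li)$ with decomposable projections $P_1,P_2$ as set up above, and using that $\mu$ may be taken finite, each constant field $x\mapsto P_1(x)b_n$ lies in $\hi^1_\oplus$, so I can form the measurable fields $g_n:=D\big(P_1(\cdot)b_n\big)\in\hi^2_\oplus$ and aim to set $D(x)b_n:=g_n(x)$ for $n\le n^1(x)$, extended by linearity. The crucial point is a localization step: applying (2) with $f=\CHI Y$ gives, for every finite $F\subseteq\N_+$, scalars $(c_n)_{n\in F}$, and $Y\in\Sigma$, the identity $D\big(\CHII Y\textstyle\sum_{n\in F}c_n P_1(\cdot)b_n\big)=\CHII Y\sum_{n\in F}c_n g_n$, whence $\int_Y\|\sum_{n\in F}c_n g_n(x)\|^2\,\d\mu(x)\le\|D\|^2\int_Y\|\sum_{n\in F}c_n P_1(x)b_n\|^2\,\d\mu(x)$; since $Y$ is arbitrary this forces the pointwise estimate $\|\sum_{n\in F}c_n g_n(x)\|\le\|D\|\,\|\sum_{n\in F}c_n P_1(x)b_n\|$ for $\mu$-a.e.\ $x$. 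Taking the union over all finite $F$ and all rational-complex coefficient tuples of the (countably many) exceptional null sets, I obtain one $\mu$-null set off which the assignment $\sum c_n P_1(x)b_n\mapsto\sum c_n g_n(x)$ is well defined (the case $\sum c_n P_1(x)b_n=0$ forcing $\sum c_n g_n(x)=0$) and linear on $\lin_\C\{P_1(x)b_n\,|\,n\in\N_+\}$ and bounded there by $\|D\|$; hence it extends to a bounded operator $D(x)\colon\hi^1_{n^1(x)}\to\hi^2_{n^2(x)}$ with $\|D(x)\|\le\|D\|$ (note $g_n(x)\in\hi^2_{n^2(x)}$ a.e.).

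To finish, I would observe that $x\mapsto\<b_m|D(x)b_n\>=\<b_m|g_n(x)\>$ is measurable for all $m,n$, so $x\mapsto D(x)$ is a measurable essentially bounded field and $\widetilde D:=\int_\Omega^\oplus D(x)\,\d\mu(x)$ is a bounded decomposable operator; by construction $\widetilde D$ agrees with $D$ on the dense subspace of finite combinations $\sum_{n\in F}\hat{c}_n\,P_1(\cdot)b_n$ with $c_n\in L^\infty(\mu)$, hence $\widetilde D=D$, and the norm formula then follows from the already-established (3)$\Rightarrow$(2). I expect the main obstacle to be precisely this localization step — passing from the integrated bound supplied by $\|D\|$ to a pointwise almost-everywhere bound — together with the measurability bookkeeping needed to reduce the uncountably many coefficient choices and the fibre-dimension conditions to a countable dense family of exceptional sets; everything else is routine.
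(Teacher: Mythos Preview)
The paper does not prove this proposition at all: it is stated with the citation \cite[Theorem 1, p.\ 187]{Di} (Dixmier, \emph{Von Neumann Algebras}) and then used as a black box throughout. Your argument is correct and is precisely the classical proof one finds in Dixmier: the easy implications are disposed of by simple-function approximation, and the substantive direction (2)$\Rightarrow$(3) is handled by applying $D$ to the fundamental sequence of fields $x\mapsto P_1(x)b_n$, using commutation with $\CHII Y$ to convert the global norm bound into a pointwise-a.e.\ bound, and then passing through a countable dense set of coefficients to assemble a single null set off which $D(x)$ is well defined and bounded by $\|D\|$. There is nothing to add; your identification of the localization step as the only place requiring care is exactly right.
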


\subsection{Observables}\label{POVM}

We have the following theorem proved in \cite{HyPeYl,Pe11}:

Let $\Mo:\,\Sigma\to\lh$ be a positive operator measure and $\mu:\,\Sigma\to[0,\infty]$ a $\sigma$-finite measure such that $\Mo$ is absolutely continuous with respect to $\mu$. Let $\e$ be an ON basis of $\hi$.
\begin{theorem} \label{th1}
The exists  a direct integral $\hd=\int_\Omega^\oplus\hi_{n(x)}\d\mu(x)$ (with $n(x)\le\dim\hi$) 
such that,
for all $X,\,X'\in\Sigma$,
\begin{enumerate}
\item
$\Mo(X)=Y^*\CHII X Y$ where $Y=\sum_{m=1}^{\dim\hi}\kb{\psi_m}{h_m}$ is a bounded operator and
$\{\psi_m\}_{m=1}^{\dim\hi}\subseteq\hd$ is such that
the set of linear combinations of vectors $\CHI X \psi_m$ is dense in $\hd$
(a minimal Naimark dilation for $\Mo$). Hence, by defining $Y(X):=\CHII X Y=\sum_{m=1}^{\dim\hi}\kb{\CHI X\psi_m}{h_m}$,
\begin{eqnarray*}
\Mo(X\cap X')=Y(X)^*Y(X')=
\sum_{n,m=1}^{\dim\hi}\int_{X\cap X'}\<\psi_n(x)|\psi_m(x)\>\d\mu(x)\kb{h_n}{h_m} 
\end{eqnarray*}
weakly (a minimal Kolmogorov decomposition for $\Mo$). 
\item
There are measurable maps $d_k:\,\Omega\to V_\e^\times$ such that, for all $x\in\Omega$, 
the vectors $d_k(x)\ne 0$, $k<n(x)+1$ are linearly independent, and
$$
\<\fii|\Mo(X)\psi\>=\int_X \sum_{k=1}^{n(x)} \<\fii|d_k(x)\>\<d_k(x)|\psi\>\d\mu(x),\hspace{0.5cm}\fii,\,\psi\in V_\e,
$$
(a minimal diagonalization of $\Mo$). 
\item $\Mo$ is normalized if and only if $\{\psi_m\}_{m=1}^{\dim\hi}$ is an ON set of $\hd$. Then $Y$ is an isometry.
\item $\Mo$ is a spectral measure if and only if $\{\psi_m\}_{m=1}^{\dim\hi}$ is an ON basis of $\hd$. Then $Y$ is a unitary operator and $\hd$ can be identified with $\hi$.
\item Let $\Mo\in\O(\Sigma,\hi)$. Then $\Mo\in\ext\O(\Sigma,\hi)$ if and only if, for any decomposable operator $D=\int_\Omega^\oplus D(x)\d\mu(x)\in\mathcal L(\hd)$, the condition $Y^*DY=0$ implies $D=0$.
\end{enumerate}
\end{theorem}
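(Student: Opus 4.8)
The plan is to prove both implications of the equivalence in contrapositive form, using two facts from Theorem~\ref{th1}: minimality of the Naimark dilation means that the closed linear span of the vectors $\CHII{X}Y\fii$, $X\in\Sigma$, $\fii\in\hi$, is all of $\hd$; and, $\Mo$ being normalized, $Y$ is an isometry, $Y^*Y=I_\hi$. I will also use Proposition~\ref{prop1} in the form: a bounded operator on $\hd$ is decomposable if and only if it commutes with $\CHII{X}$ for every $X\in\Sigma$.

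\emph{From a perturbation to a splitting.} Suppose there is a nonzero decomposable $D$ with $Y^*DY=0$. Writing $D=D_1+iD_2$ with $D_1,D_2$ self-adjoint, decomposable, and $Y^*D_1Y=Y^*D_2Y=0$, I replace $D$ by whichever of $D_1,D_2$ is nonzero and rescale, so $D=D^*$, $\|D\|\le1$, $I_\hd\pm D\ge0$. Since $D$ commutes with every $\CHII{X}$,
\[
\Mo_\pm(X):=Y^*\CHII{X}(I_\hd\pm D)Y=(\CHII{X}Y)^*(I_\hd\pm D)(\CHII{X}Y)
\]
defines positive maps that are $\sigma$-additive (because $\CHII{\cdot}$ is) and normalized, $\Mo_\pm(\Omega)=Y^*Y\pm Y^*DY=I_\hi$; hence $\Mo_\pm\in\O(\Sigma,\hi)$ and $\Mo=\tfrac12(\Mo_++\Mo_-)$. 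They are distinct, for if $Y^*\CHII{X}DY=0$ for every $X$ then $\<\CHII{X}Y\fii|DY\psi\>=0$ for all $X,\fii,\psi$, so $DY=0$ by density, hence $D\CHII{X}Y=\CHII{X}DY=0$ and, again by density, $D=0$, contradicting $D\ne0$. Thus $\Mo$ is a nontrivial convex combination and $\Mo\notin\ext\O(\Sigma,\hi)$.

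\emph{From a splitting to a perturbation.} Now suppose $\Mo=t\Mo_1+(1-t)\Mo_2$ with $0<t<1$, $\Mo_1,\Mo_2\in\O(\Sigma,\hi)$, $\Mo_1\ne\Mo_2$, and seek a nonzero decomposable $D$ with $Y^*DY=0$. The core step is to build a positive decomposable operator $\tilde D$ on $\hd$ with $Y^*\CHII{X}\tilde DY=t\,\Mo_1(X)$ for all $X$. I would introduce the sesquilinear form that assigns to the pair $\big(\sum_i\CHII{X_i}Y\fii_i,\ \sum_j\CHII{X'_j}Y\psi_j\big)$ the number $\sum_{i,j}t\,\<\fii_i|\Mo_1(X_i\cap X'_j)\psi_j\>$. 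Refining finitely many sets into the atoms of the finite algebra they generate shows $\sum_{i,j}\<\fii_i|N(X_i\cap X_j)\fii_j\>\ge0$ for every positive operator measure $N$; applied to $N=t\Mo_1$ and to $N=\Mo-t\Mo_1=(1-t)\Mo_2$ this bounds the form between $0$ and the inner product of $\hd$, so the form is well defined on the dense span and extends to a bounded operator $\tilde D$ with $0\le\tilde D\le I_\hd$. A direct check shows $\tilde D$ commutes with every $\CHII{X}$, so it is decomposable by Proposition~\ref{prop1}, and $Y^*\CHII{X}\tilde DY=t\,\Mo_1(X)$, in particular $Y^*\tilde DY=tI_\hi$. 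Then $D:=\tilde D-tI_\hd$ is decomposable and self-adjoint, $Y^*DY=tI_\hi-tI_\hi=0$, and $D\ne0$: otherwise $t\Mo_1(X)=Y^*\CHII{X}(tI_\hd)Y=t\,\Mo(X)$, so $\Mo_1=\Mo$, forcing $\Mo_2=\Mo$ and contradicting $\Mo_1\ne\Mo_2$.

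I expect the main obstacle to be precisely the construction of $\tilde D$ in the last paragraph---a Radon--Nikodym-type step---namely checking that the form is well defined on the dense span; this is where the domination $0\le t\Mo_1\le\Mo$ and the minimality of the dilation really enter. The remaining points (positivity and $\sigma$-additivity of $\Mo_\pm$, the identity $Y^*\CHII{X}\tilde DY=t\,\Mo_1(X)$, and the uses of Proposition~\ref{prop1}) are routine.
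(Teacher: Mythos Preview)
Your proof of item (5) is correct and is essentially the same argument the paper uses. The paper does not prove Theorem~\ref{th1} directly (it is quoted from \cite{HyPeYl,Pe11}), but the extremality statement is recovered as the $\ki=\C$ case of Theorem~\ref{th2}(4), which in turn is deduced from the general Arveson-type result Theorem~\ref{seiska}. Your two directions match that proof line by line: the ``perturbation $\Rightarrow$ splitting'' step is exactly the $B_\pm=I\pm B$ construction of Theorem~\ref{seiska}, and your ``splitting $\Rightarrow$ perturbation'' step is the Radon--Nikod\'ym construction that the paper phrases via the minimal Kolmogorov decomposition of $E_\pm$ (the maps $G_\pm$ and $B_\pm=G_\pm^*G_\pm$ in the proof of Theorem~\ref{extker}). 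Your direct sesquilinear-form construction of $\tilde D$ is the same object; the atom-refinement argument giving $0\le t\Mo_1\le\Mo$ in the kernel sense is precisely what guarantees $\|G_\pm\|\le\sqrt{2}$ there. The only cosmetic difference is that you work with a general barycentre $t$ while the paper normalizes to $t=\tfrac12$.
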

By using the embedding $\hd\subseteq L^2(\mu,\li)$ and an ON basis $\bf b$ of $\li$, the relation between vectors $\psi_n(x)$ and $d_k(x)$ can be chosen to be $\<\psi_n(x)|b_k\>=\<h_n|d_k(x)\>$ so that we may then write
$$
\<\fii|\Mo(X)\psi\>=\int_X \<\fii| \A^1(x)^* \A^1(x)\psi\>\d\mu(x)=\int_X \sum_{k=1}^{n(x)} \<\fii|\A_k(x)^*\A_k(x)\psi\>\d\mu(x),\hspace{0.5cm}\fii,\,\psi\in V_\e,
$$
where $ \A^1(x):=\sum_{k=1}^{n(x)}\kb{b_k}{d_k(x)}=\sum_{n=1}^{\dim\hi}\kb{\psi_n(x)}{h_n}$ and $\A_k(x):=\kb1{d_k(x)}$ are (possibly unbounded) operators $V_\e\to\hi_{n(x)}$ and $V_\e\to\C$, respectively. In addition,
$$
 \A^1(x)^* \A^1(x)= \sum_{k=1}^{n(x)} \A_k(x)^*\A_k(x)= \sum_{k=1}^{n(x)}\kb{d_k(x)}{d_k(x)}=\sum_{n,m=1}^{\dim\hi}\<\psi_n(x)|\psi_m(x)\>\kb{h_n}{h_m} 
$$
is an operator $V_\e\to V_\e^\times$ (or a sesquilinear form $V_\e\times V_\e\to\C$).
Also one sees that 
$\Mo\in\ext\O(\Sigma,\hi)$ if and only if, for any decomposable operator $D=\int_\Omega^\oplus D(x)\d\mu(x)\in\mathcal L(\hd)$, the condition 
$$
\int_\Omega \<\fii| \A^1(x)^*D(x) \A^1(x)\psi\>\d\mu(x)=0,\hspace{0.5cm}\fii,\,\psi\in V_\e,
$$
implies $D=0$.

Let $\M^\Mo:\,\Sigma\times\C\to\lh$ be the trivial instrument associated with a POVM $\Mo$. Then, for example,
$$
\<\fii|\M^\Mo(X,c)\psi\>=\int_X \sum_{k=1}^{n(x)} \<\fii|\A_k(x)^*c\A_k(x)\psi\>\d\mu(x),\hspace{0.5cm}\fii,\,\psi\in V_\e,\;X\in\Sigma,\;c\in\C,
$$
so that we have obtained a minimal pointwise Kraus form for $\M^\Mo$. The next theorem generalizes the above constructions to arbitrary instruments.

\subsection{Instruments}

Let $\M:\,\Sigma\times \lk\to \lh$ be an instrument and $\mu:\,\Sigma\to[0,\infty]$ a $\sigma$-finite measure such that $\Mo_{I_\ki}$ is absolutely continuous with respect to $\mu$.\footnote{ If the POVM $\Mo_{I_\ki}$ associated to an instrument $\M$ is absolutely continuous with respect to a $\sigma$-finite nonnegative measure $\mu$ then all operator measures $\Mo_B$, $B\in\lk$, are absolutely continuous with respect to $\mu$ (since $\M(X,B)\le\M(X,\|B\|I_{\ki})$ by positivity).}
Let $\e=\{h_n\}$ [resp.\ $\k=\{k_s\}$] be an ON basis of $\hi$ [resp.\ $\ki$]. Denote
$B_{st}:=\<k_s|B k_t\>$ for all $B\in\lk$ and $1\le s,\,t<\dim\ki+1$. 
\begin{theorem}\label{th2}
The exists  a direct integral $\hd=\int_\Omega^\oplus\hi_{n(x)}\d\mu(x)$
(with $n(x)\le\dim\hi\dim\ki$) 
such that, for all $X,\,X'\in\Sigma$ and $B,\,B'\in\lk$,
\begin{enumerate}
\item
$\M(X,B)=Y^*(B\otimes\CHII X)Y$ where $Y:\,\hi\to\ki\otimes\hd$, 
$$Y=\sum_{m=1}^{\dim\hi}\sum_{t=1}^{\dim\ki}\kb{k_t\otimes\psi^t_m}{h_m},$$ is an isometry (i.e.\ $\sum_{t=1}^{\dim\ki}\<\psi^t_n|\psi^t_m\>=\delta_{nm}$) and
$\{\psi^t_m\}_{m,t}\subseteq\hd$ is such that
the set of linear combinations of vectors $\CHI X \psi^t_m$, $X\in\Sigma$, $m<\dim\hi+1$, $t<\dim\ki+1$, is dense in $\hd$
 (a minimal Stinespring dilation for $\M$). Hence, by defining 
 $Y(X,B):=(B\otimes\CHII X) Y=\sum_{m=1}^{\dim\hi}\sum_{t=1}^{\dim\ki}\kb{(Bk_t)\otimes(\CHI X\psi^t_m)}{h_m}$,
\begin{eqnarray*}
\M(X\cap X',B^*B')&=&Y(X,B)^*Y(X',B') \\
&=&\sum_{n,m=1}^{\dim\hi}\sum_{s,t=1}^{\dim\ki}(B^*B')_{st}\int_{X\cap X'}\<\psi^s_n(x)|\psi^t_m(x)\>\d\mu(x)\kb{h_n}{h_m} \\
\end{eqnarray*}
weakly (a minimal Kolmogorov decomposition for $\M$). 
\item
The are measurable maps $d^t_k:\,\Omega\to V_\e^\times$ such that, for all $x\in\Omega$, the operators 
$$\A_k(x):=\sum_{t=1}^{\dim\ki}\kb{k_t}{d^t_k(x)},\hspace{0.5cm}1\le k<n(x)+1,$$ 
from $V_\e$ to $\ki$ are linearly independent, and
\begin{eqnarray*}
\<\fii|\M(X,B)\psi\>&=&\sum_{s,t=1}^{\dim\ki} B_{st}\int_X \sum_{k=1}^{n(x)} \<\fii|d^s_k(x)\>\<d^t_k(x)|\psi\>\d\mu(x) \\
&=& \int_X \sum_{k=1}^{n(x)} \<\fii|\A_k(x)^*B\A_k(x)\psi\>\d\mu(x),
\hspace{0.5cm}\fii,\,\psi\in V_\e,
\end{eqnarray*}
(a minimal pointwise Kraus form of $\M$).
By defining operators
$$
\A^t(x):=\sum_{k=1}^{n(x)} \kb{b_k}{d_k^t(x)},%=\sum_{n=1}^{\dim\hi}\kb{\psi_n^s(x)}{h_n}
%=\sum_{k=1}^{n(x)}\kb{b_k}{k_s}\A_k(x)
\hspace{0.5cm}
1\le t<\dim\ki+1,\;\;x\in\Omega,
$$
from $V_\e$ to $\hi_{n(x)}$ one gets
$$
\<\fii|\M(X,B)\psi\>=\sum_{s,t=1}^{\dim\ki} B_{st} \int_X \<\fii|\A^s(x)^*\A^t(x)\psi\>\d\mu(x),
\hspace{0.5cm}\fii,\,\psi\in V_\e.
$$
\item 
$\M(X,B)=\sum_{k=1}^{n(X)} \A_k(X)^*B\A_k(X)$ ultraweakly
 where $n(X)\le\dim\hi\dim\ki$ and, for all $X\in\Sigma$, bounded operators $\A_k(X):\,\hi\to\ki$, $1\le k<n(X)+1$, are linearly independent  %such that $\sum_{k=1}^{n(X)} \A_k(X)^*\A_k(X)\le I_\hi$
(a minimal setwise Kraus form of $\M$).
\item
$\M\in\ext\In(\Sigma,\ki,\hi)$ if and only if, for any decomposable operator $D=\int_\Omega^\oplus D(x)\d\mu(x)\in\mathcal L(\hd)$, the condition $Y^*(I_{\ki}\otimes D)Y=0$ implies $D=0$.
\item
$\M\in\ext\In(\Sigma,\ki,\hi)$ if and only if, for any decomposable operator $D=\int_\Omega^\oplus D(x)\d\mu(x)\in\mathcal L(\hd)$, the condition 
$$
\sum_{s=1}^{\dim\ki} \int_\Omega\<\fii|\A^s(x)^*D(x)\A^s(x)\psi\>\d\mu(x)=0,
\hspace{0.5cm}\fii,\,\psi\in V_\e,
$$ 
implies $D=0$.
\end{enumerate}
\end{theorem}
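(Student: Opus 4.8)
The plan is to realise the instrument $\M$ as a single normal completely positive map on a von Neumann algebra, apply the Stinespring dilation together with the structure theory of normal representations, and read all five assertions off the dilation. Concretely: since $\Mo_{I_\ki}$ --- hence, by the footnote estimate, every $\Mo_B$ --- is absolutely continuous with respect to $\mu$ and each $B\mapsto\M(X,B)$ is normal, the assignment $B\otimes\hat f\mapsto\int_\Omega f(x)\,\d\Mo_B(x)$ extends to a normal unital CP map $\Phi\colon\lk\,\bar\otimes\,L^\infty(\mu)\to\lh$ with $\Phi(B\otimes\CHII X)=\M(X,B)$ (the standard correspondence between instruments and normal CP maps on $\lk\,\bar\otimes\,L^\infty(\mu)$, cf.\ \cite{DaLe,Oz84,Ho98}). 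A minimal Stinespring dilation \cite{St} of $\Phi$ gives a normal unital representation $\pi$ of $\lk\,\bar\otimes\,L^\infty(\mu)$ on some $\ki'=\overline{\lin}\,\pi(\lk\,\bar\otimes\,L^\infty(\mu))Y\hi$ and an isometry $Y\colon\hi\to\ki'$ with $\Phi(a)=Y^*\pi(a)Y$; since normal representations of $\lk$ are amplifications and those of the abelian part $L^\infty(\mu)$ are diagonalizable, one identifies $\ki'$ with $\ki\otimes\hd$ for a direct integral $\hd=\int_\Omega^\oplus\hi_{n(x)}\,\d\mu(x)$ and $\pi(B\otimes\hat f)=B\otimes\hat f$, whence $\M(X,B)=Y^*(B\otimes\CHII X)Y$. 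Writing $Yh_m=\sum_t k_t\otimes\psi_m^t$ turns $Y^*Y=I_\hi$ (the normalization $\M(\Omega,I_\ki)=I_\hi$) into $\sum_t\<\psi_n^t|\psi_m^t\>=\delta_{nm}$; a one-line computation with $B=\kb{k_s}{k_t}$ identifies minimality of $\ki'$ with density of $\{\CHI X\psi_m^t\}$ in $\hd$, and $n(x)\le\dim\hi\dim\ki$ follows because the $\dim\hi\dim\ki$ vectors $\psi_m^t(x)$ must span $\hi_{n(x)}$ for $\mu$-a.e.\ $x$. Expanding $Y^*(B\otimes\CHII X)Y$ as a form on $V_\e$ yields the Kolmogorov decomposition, proving item~(1).

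For item~(2) I decompose everything along the direct integral: choose a measurable field of ON bases $\{b_n(x)\}_{n=1}^{n(x)}$, regard $\psi_m^t(x)\in\hi_{n(x)}$, and \emph{define} $d_k^t(x)\in V_\e^\times$ by $\<h_n|d_k^t(x)\>:=\<\psi_n^t(x)|b_k\>$, together with $\A^t(x):=\sum_k\kb{b_k}{d_k^t(x)}$ and $\A_k(x):=\sum_t\kb{k_t}{d_k^t(x)}$; then $\A^t(x)=\sum_m\kb{\psi_m^t(x)}{h_m}$ and $\sum_t\A^s(x)^*\A^t(x)=\sum_k\A_k(x)^*\A_k(x)$. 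The two displayed Kraus formulas then come from the fibrewise expansion of $Y^*(B\otimes\CHII X)Y$, and the linear independence of $\A_1(x),\dots,\A_{n(x)}(x)$ is, after unwinding the definitions, precisely the assertion that $\{\psi_m^t(x)\}_{m,t}$ spans $\hi_{n(x)}$ for $\mu$-a.e.\ $x$ --- the fibrewise form of the minimality from item~(1). For item~(3) I instead fix $X$ and apply the Kraus representation theorem \cite{Kr} to the single normal CP operation $B\mapsto\M(X,B)=Y^*(B\otimes\CHII X)Y$ compressed to $\ki\otimes\CHI X\hd$; passing to the minimal such decomposition produces the linearly independent bounded operators $\A_k(X)\colon\hi\to\ki$ with $k\le n(X)\le\dim\hi\dim\ki$.

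For items~(4) and~(5) I use the Radon--Nikodym calculus relative to the \emph{minimal} dilation: completely positive maps dominated by $\Phi$ correspond bijectively to positive contractions $T$ in the commutant $\pi(\lk\,\bar\otimes\,L^\infty(\mu))'$ via $a\mapsto Y^*T\pi(a)Y$, and the dominated map is again an instrument of the required normalized type iff $Y^*TY=I_\hi=Y^*Y$. Hence a nontrivial splitting $\M=\tfrac12(\M_1+\M_2)$ in $\In(\Sigma,\ki,\hi)$ exists iff there is a nonzero self-adjoint $D\in\pi(\lk\,\bar\otimes\,L^\infty(\mu))'$ with $Y^*DY=0$ (take $D=T_1-I$; self-adjointness may be dropped by splitting into real and imaginary parts). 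By Proposition~\ref{prop1} together with the commutation theorem, $\pi(\lk\,\bar\otimes\,L^\infty(\mu))'=\{I_\ki\otimes D : D=\int_\Omega^\oplus D(x)\,\d\mu(x)\text{ decomposable}\}$, which is item~(4); and since $Y^*(I_\ki\otimes D)Y=\sum_s\int_\Omega\A^s(x)^*D(x)\A^s(x)\,\d\mu(x)$ as a form on $V_\e$ --- the identity already used for item~(2) --- item~(4) is equivalent to item~(5). The main obstacle is not the extremality step, which is formal once the commutant is identified, but the direct-integral bookkeeping underlying items~(1)--(2): constructing the measurable fields $\psi_m^t(x),d_k^t(x),\A^t(x),\A_k(x)$, controlling their possibly unbounded action on the dense domain $V_\e$, and --- crucially --- showing that global minimality of the Stinespring space descends to pointwise linear independence $\mu$-almost everywhere.
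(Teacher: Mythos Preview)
Your proposal is correct and follows the same overall architecture as the paper: Stinespring dilation on $\lk\,\bar\otimes\,L^\infty(\mu)$, identification of the dilation space as $\ki\otimes\hd$ with $\hd$ a direct integral, fibrewise decomposition to obtain the pointwise Kraus operators, and the Arveson/Radon--Nikodym argument via the commutant for extremality. The extremality steps (4)--(5) are exactly what the paper does, only the paper packages the commutant argument into its abstract Theorem~\ref{seiska} and then specialises it in Example~\ref{instruex}; your direct proof is the same content.

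The one genuine technical difference is in how you obtain the direct integral form in item~(1). You go straight from a minimal Stinespring dilation of the normal CP map $\Phi$ and invoke the structure theory of normal representations of $\lk\,\bar\otimes\,L^\infty(\mu)$ (amplification on the type~I factor, multiplication on the abelian part) to identify the dilation space with $\ki\otimes\hd$. The paper instead starts from Ozawa's dilation $\M(X,B)=W^*(B\otimes\sfe(X))W$ with a spectral measure $\sfe$ on a possibly nonseparable $\hi'$, restricts to the separable subspace $\hi''$ spanned by the countably many structure vectors $\fii_m^t$, and then applies its Theorem~\ref{th1} to diagonalise the compressed POVM $P\sfe(\cdot)P$; the minimal $\hd$ is then cut out as the closed span of the $\CHI X\psi_m^t$ inside the resulting direct integral, using Proposition~\ref{prop1} to see that the projection onto this span is decomposable. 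Your route is more conceptual and avoids the detour through Theorem~\ref{th1}, at the cost of importing more representation theory; the paper's route is more self-contained within the framework it has already built. Both land on the same minimal dilation, and your identification of pointwise linear independence of the $\A_k(x)$ with the fibrewise spanning property of the $\psi_m^t(x)$ is precisely what the paper's contradiction argument in the proof of (2) establishes.
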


\begin{proof}
(1)
Let $\M(X,B)=W^*\big(B\otimes\sfe(X)\big)W$ be a Stinespring dilation of $\M$, where 
$W:\,\hi\to\ki\otimes\hi'$ is an isometry and $\sfe:\,\Sigma\to\mathcal L(\hi')$ a spectral measure acting on a possibly {\it nonseparable} Hilbert space $\hi'$ (see, e.g.\ \cite{Da,Ho98,Oz84,St}). Write $W$ of the form
$$
W=\sum_{m=1}^{\dim\hi}\sum_{t=1}^{\dim\ki}\kb{k_t\otimes\fii^t_m}{h_m},
$$
where $\fii^t_m\in\hi'$ and $\sum_{t=1}^{\dim\ki}\<\fii^t_n|\fii^t_m\>=\delta_{nm}$, and let $\hi''\subseteq\hi'$ be a {\it separable} Hilbert space spanned by countable set of vectors $\fii^t_m$. Denote by $P$ the projection from $\hi'$ onto $\hi''$.
Then $$\M(X,B)=W^*\big(B\otimes P\sfe(X)P\big)W$$ 
where $X\mapsto P\sfe(X)P$ can be viewed as a POVM $\Sigma\to\mathcal L(\hi'')$, and hence it can be diagonalized by Theorem \ref{th1}. We have 
$\M(X,B)=  Y^*(B\otimes\CHII X)  Y$ where $  Y:\,\hi\to\ki\otimes\hi_\oplus'$, 
$$  Y=\sum_{m=1}^{\dim\hi}\sum_{t=1}^{\dim\ki}\kb{k_t\otimes  \psi^t_m}{h_m},$$ is an isometry and $\hi_\oplus'=\int_\Omega^\oplus\hi_{n'(x)}\d\mu(x)\subseteq L^2(\mu,\li)$ a direct integral Hilbert space. 

Let then $\hd\subseteq\hi_\oplus'$ be a closure of the linear span of vectors $\CHI X  \psi^t_m$.
Especially, any $\psi^t_m\in\hd$.
Since 
$\hd$ is a Hilbert (sub)space we may define a projection $R$ from $\hi_\oplus'$ onto $\hd$. It is easy to see that $R$ commutes with any $\CHII X$ so that it is decomposable, $R=\int_\Omega^\oplus R(x)\d\mu(x)$, by Proposition \ref{prop1}.
Hence, $\hd$ is a direct integral and unitarily equivalent to some $\int_\Omega^\oplus\hi_{n(x)}\d\mu(x)\subseteq L^2(\mu,\li)$ (defined as before) so that we may set
$\hd=\int_\Omega^\oplus\hi_{n(x)}\d\mu(x)$.
Now $Y$ is actually an isometry from $\hi$ to $\ki\otimes\hd$.
Note that (almost everywhere) $n(x)\le n'(x)\le\dim\hi''\le\dim\hi\dim\ki$.
Obviously vectors $(B\otimes\CHII X) Y\psi$, $B\in\lk$, $X\in\Sigma$, $\psi\in\hi$, span
$\ki\otimes\hd$ so that the above Stinespring dilation is minimal.
Since 
\begin{eqnarray*}
\<h_n|\M(X,B)h_m\> &=& \<h_n|Y^*(B\otimes\CHII X)  Yh_m\> =\sum_{s,t=1}^{\dim\ki}B_{st}\<\psi^s_n|\CHI X\psi^t_m\>\\
&=&\sum_{s,t=1}^{\dim\ki}B_{st}\int_{X}\<\psi^s_n(x)|\psi^t_m(x)\>\d\mu(x)
\end{eqnarray*}
and the first part of the proof follows.

(2) Pick a representative $\Omega\ni x\mapsto\psi_n^s(x)\in\hi_{n(x)}$ from any class $\psi_n^s$ 
such that $$\sum_{s=1}^{\dim\ki}\|\psi_n^s(x)\|^2<\infty$$ for {\it all} $x\in\Omega$ and $n<\dim\hi+1$. This is possible since the set $\{\psi_n^s\}_{n,s}$ is countable and
$$\<h_n|\M(X,I_\ki)h_n\>=\sum_{s=1}^{\dim\ki}\int_{X}\|\psi_n^s(x)\|^2\d\mu(x)\le1.$$
Then, for all $x\in\Omega$, define $d_k^s(x):=\sum_{n=1}^{\dim\hi}\<\psi_n^s(x)|b_k\>h_n\in V_\e^\times$ and  
$
\A_k(x):=\sum_{s=1}^{\dim\ki}\kb{k_s}{d^s_k(x)}
$
so that
$\A_k(x)h_n=\sum_{s=1}^{\dim\ki}\<b_k|\psi_n^s(x)\>{k_s}$
implying
\begin{eqnarray*}
\|\A_k(x)h_n\|^2&=&\sum_{s=1}^{\dim\ki}|\<b_k|\psi_n^s(x)\>|^2 \\
&\le&\sum_{k=1}^{n(x)}\sum_{s=1}^{\dim\ki}|\<b_k|\psi_n^s(x)\>|^2=\sum_{s=1}^{\dim\ki}\|\psi_n^s(x)\|^2<\infty
\end{eqnarray*}
and thus $\A_k(x)\psi\in\ki$ for all $\psi\in V_\e$.
Moreover,
\begin{eqnarray*}
\<\psi^s_n(x)|\psi^t_m(x)\>&=&\sum_{k=1}^{n(x)}\<\psi_n^s(x)|b_k\>\<b_k|\psi_m^t(x)\>
=\sum_{k=1}^{n(x)} \<h_n|d^s_k(x)\>\<d^t_k(x)|h_m\> \\
&=&\sum_{k=1}^{n(x)}\<h_n|\A_k(x)^*k_s\>\<k_t|\A_k(x)h_m\>.
\end{eqnarray*}
For all $x\in\Omega$, the operators $\A_k(x)$, $k<n(x)+1$, can be chosen to be linearly independent:
Indeed, suppose that the exists a set $X'\in\Sigma$ such that, for all $x\in X'$,  $n(x)>0$ and the set $\{\A_k(x)\}_{k=1}^{n(x)}$ is linearly dependent. Then, for all $x\in X'$, there exists complex numbers $c^x_k$, $k<n(x)+1$, such that $c^x_k\ne 0$ for finitely many $k$'s and $\sum_{k=1}^{n(x)}c_k^x\A_k(x)=0$. This implies that, by defining a nonzero $\fii_x:=\sum_{k=1}^{n(x)}\overline{c_k^x} b_k\in\hi_{n(x)}$, 
$$
\<\fii_x|\psi_n^s(x)\>=\sum_{k=1}^{n(x)}c_k^x\<b_k|\psi_n^s(x)\>= \sum_{k=1}^{n(x)}c_k^x\<k_s|\A_k(x)h_n\>=0
$$
for all $x\in X'$, $n<\dim\hi+1$, and $s<\dim\ki+1$. 
Let $\hi_x$ be the closure of $${{\rm lin}\{\psi_n^s(x)\,|\,n<\dim\hi+1,\,s<\dim\ki+1\}}$$ in $\hi_{n(x)}.$
Since, for all $x\in X'$,  the orthogonal complement $\hi_x^\perp$ of $\hi_x$ in $\hi_{n(x)}$ is nonzero, we may choose a $\fii\in\hd$ such that $0\ne \fii(x)\in\hi_x^\perp$ for all $x\in X'$.
But then $\<\CHI{X'}\fii|\CHI X\psi_n^s\>=0$ for all $X,\,n,\,s$ implying that $\CHI{X'}\fii=0$ by the density of the linear combinations of the vectors $\CHI X\psi_n^s$. Hence, $\mu(X')=0$ and we may simply redefine $n(x)$ to be zero for all $x\in X'$.

Finally, $\A^t(x):=\sum_{k=1}^{n(x)} \kb{b_k}{d_k^t(x)}=\sum_{n=1}^{\dim\hi}\kb{\psi_n^t(x)}{h_n}$ 
is obviously an operator from $V_\e$ to $\hi_{n(x)}$ and the last claim follows easily.

(3) The equation $\M(X,B)=\sum_{k=1}^{n(X)} \A_k(X)^*B\A_k(X)$ is just the usual Kraus decomposition \cite{Kr} of a completely positive map $B\mapsto\M(X,B)$.

(4) From Theorem \ref{seiska} and Example \ref{instruex} we see that, since 
the unital ${}^*$-homomorphism is now
$$
\pi:\,\lk\otimes L^\infty(\mu)\to\li(\ki\otimes\hd),\;B\otimes f\mapsto B\otimes\hat f,
$$ 
the instrument $\M$ is extreme if and only if, for all $F\in\mathcal L(\ki\otimes\hd)$ such that $F\pi(B\otimes f)=\pi(B\otimes f)F$ for all  $B\in\lk$, $f\in L^\infty(\mu)$, the condition
$Y^*FY=0$ implies $F=0$.
But if $F(B\otimes\hat f)=(B\otimes\hat f)F$ for all $B$ and $f$ then $F=I_\ki \otimes D$ for some decomposable $D\in\mathcal L(\hd)$ (since the commutant of the tensor product of two von Neumann algebras is the tensor product of the communtants of the algebras in question; the commutant of $\lk$ is $\C I_\ki$, see also Proposition  \ref{prop1}).
Finally, (5) follows immediately from (4).
\end{proof}

We have collected the basic operators and vectors related to an instrument in the Appendix, see Remark \ref{remu}. Moreover, we show there that the results of Theorem \ref{th2} do not essentially depend, e.g., on the choices of the bases $\bf h$ and $\bf k$.

\begin{remark}\rm\label{exrem}
The extremality condition $\sum_{s=1}^{\dim\ki} \int_\Omega\<\fii|\A^s(x)^*D(x)\A^s(x)\psi\>\d\mu(x)=0,$ $\fii,\,\psi\in V_\e,$ of item (5) of Theorem \ref{th2} can equivalently be written in the following forms:
\begin{eqnarray*}
&&\sum_{s=1}^{\dim\ki} \int_\Omega\<\psi_n^s(x)|D(x)\psi_m^s(x)\>\d\mu(x)=0, \hspace{0.5cm}1\le n,\,m<\dim\hi+1,\\
&& \sum_{s=1}^{\dim\ki}\int_\Omega\sum_{k,l=1}^{n(x)}D(x)_{kl}\<\fii|d_k^s(x)\>\<d_l^s(x)|\psi\>\d\mu(x)=0,
\hspace{0.5cm} \fii,\,\psi\in V_\e,\\
&&\int_\Omega\sum_{k,l=1}^{n(x)}D(x)_{kl}\<\fii|\A_k(x)^*\A_l(x)|\psi\>\d\mu(x)=0,
\hspace{0.5cm} \fii,\,\psi\in V_\e,
\end{eqnarray*}
where $D(x)_{kl}:=\<{b_k}|D(x){b_l}\>$.
A mathematically elegant characterization of extreme instruments is the following:
Let $\mathcal D(\hd)\subseteq\mathcal L(\hd)$ be the $C^*$-algebra of decomposable operators $D=\int_\Omega^\oplus D(x)\d\mu(x)\in\mathcal L(\hd)$. It is the commutant of the $C^*$-algebra $L^\infty(\mu)\subseteq\mathcal L(\hd)$ by Proposition \ref{prop1}.
For any $\M\in\In(\Sigma,\ki,\hi)$ (with $\hd$ and $Y$ as in Theorem \ref{th2}) define its bilinear `extension' 
$$
\ov\M:\,\mathcal D(\hd)\times\lk\to\lh,\;(D,B)\mapsto\ov\M(D,B):=Y^*(B\otimes D)Y
$$
for which $\ov\M(\CHII X,B)=\M(X,B)$.
Then, $\M\in\ext\In(\Sigma,\ki,\hi)$ if and only if $D\mapsto\ov\M(D,I_\ki)$ is injective.
\end{remark}

\begin{remark}[Dirac formalism]\rm\label{dirac}
Let $S=S^*$ be a (possibly unbounded) self-adjoint operator on $\hi$ and $\mathsf M$ its spectral measure (defined on the Borel $\sigma$-algebra of $\R$). Let $d_k(x)$ be the generalized vectors of Theorem \ref{th1} associated with $\Mo$.
As shown in \cite{HyPeYl}, there exists an ON basis $\e$ of $\hi$ such that $SV_\e\subseteq V_\e$,
and if $S^\times:\,V_\e^\times\to V_\e^\times$ is the extension of $S$, one gets $S^\times d_k(x)=xd_k(x)$ for almost all $x$ in the spectrum of $S$. Hence, Theorem \ref{th1} can be viewed as a generalization of {\it Dirac formalism} for POVMs and we may call $n(x)$ the multiplicity of a measurement outcome $x$.

Solutions $d_k(x)$ of the `eigenvalue' equation $S^\times d_k(x)=xd_k(x)$ turn out to be extremely useful for determining the spectral measure $\Mo$ of $S$ in many practical situations.
Similarly, the generalized vectors $d_k^t(x)$ of
item (2) of Theorem \ref{th2} are useful, e.g., for determining $\Mo$-compatible instruments $\M$ as we will see later.
For example, if vectors $d_k^t(x)$ are related with $\M$ then the generalized vectors $d_k(x)$ of the associate POVM $\Mo$ of $\M$ must satisfy
$$
\sum_{k=1}^{n'(x)}\kb{d_k(x)}{d_k(x)}=
\sum_{t=1}^{\dim\ki}\sum_{k=1}^{n(x)}\kb{d^t_k(x)}{d^t_k(x)}
$$
(weakly on $V_\e$) where the multiplicities $n(x)$ and $n'(x)$ are not necessarily the same.
\end{remark}

\begin{remark}\rm
It should be stressed that the direct integral Hilbert space $\hd$ of Theorem \ref{th2} is not necessarily separable. However, when $\Sigma$ is countably generated then $\hd$ is separable (see, e.g.\ \cite{Ca}).
This holds, for instance, when $\Sigma$ is the Borel $\sigma$-algebra $\bo\Omega$ of a second countably topological space $\Omega$. Hence, in the physically relevant examples, one can assume that $\hd$ is separable.
Note that in the proof of (1) of Theorem \ref{th2}, one cannot assume that the spectral measure $\sfe$ of a Stinespring dilation acts on a separable Hilbert space. Thus, in nonseparable cases, it is questionable whether one can directly diagonalize $\sfe$.
\end{remark}

\subsection{Examples}\label{examples}

In this subsection, we consider some special cases of Theorem \ref{th2}.
We assume that $\M\in\In(\Sigma,\ki,\hi)$ and use the notations of Theorem \ref{th2}.

\begin{example}[Observables: $\dim\ki=1$] \rm \label{ExPOVM}
Applying Theorem \ref{th2} to the trivial instrument $\M=\M^\Mo$ of a POVM $\Mo$,
we see that Theorem \ref{th1} is a special case of Theorem \ref{th2}.
Indeed, now $\ki=\C$, $\ki\otimes\hd\cong\hd$, indices $t$ and $s$ run from 1 to $\dim\ki=1$, $k_1=1$, and $B_{11}=c\in\C$. It is easy to see that $\Mo\mapsto \M^\Mo$ is an affine bijection from $\O(\Sigma,\hi)$ onto
$\In(\Sigma,\C,\hi)$, and that $\M^\Mo\in\ext\In(\Sigma,\C,\hi)$ if and only if $\Mo\in\ext\O(\Sigma,\hi)$.
\end{example}

\begin{example}[Preparations: $\dim\hi=1$] \rm
Let $\hi=\C$ (and $\lh\cong\C$) so that indices $m$ and $n$ run from $1$ to $1$ and $h_1=1$, and let $\M\in\In(\Sigma,\ki,\C)$.
Drop indices $n$ and $m$ out from the notations.
Then one has the following identifications: $\psi^s:=\psi^s_1$, $d^s_k(x)=\<\psi^s(x)|b_k\>\in\C$, $\A_k(x)=a_k(x):=\sum_s\overline{d^s_k(x)}k_s\in\ki$ (linearly independent vectors), $\A^s(x)=\psi^s(x)$, and
$$
\M(X,B)=\sum_{s,t=1}^{\dim\ki}B_{st}\int_X\<\psi^s(x)|\psi^t(x)\>\d\mu(x)=\int_X\sum_{k=1}^{n(x)}\<a_k(x)|Ba_k(x)\>\d\mu(x)=\tr{\rho(X)B}
$$
where 
$$
\rho:\,\Sigma\to\mathcal T(\ki),\;X\mapsto \rho(X):=\int_X\sum_{k=1}^{n(x)}\kb{a_k(x)}{a_k(x)}\d\mu(x)
$$
is a positive trace class valued operator measure for which
$$
1=\tr{\rho(\Omega)}=\sum_{s=1}^{\dim\ki}\int_\Omega\|\psi^s(x)\|^2\d\mu(x)=\int_\Omega\sum_{k=1}^{n(x)}\|a_k(x)\|^2\d\mu(x),
$$
i.e.\ $\rho(\Omega)\in\mathcal S(\ki)$ is a state. By defining linearly independent unit vectors $\fii_k(x):=a_k(x)/\|a_k(x)\|$ (when $a_k(x)\ne0$) 
and $\lambda_k(x):=\|a_k(x)\|^2\in[0,1]$ one sees that 
$$
\rho(\Omega)=\int_\Omega\sum_{k=1}^{n(x)}\lambda_k(x)\kb{\fii_k(x)}{\fii_k(x)}\d\mu(x)
$$
is a {\it (possibly uncountable or continuous) `convex combination' of pure states $\kb{\fii_k(x)}{\fii_k(x)}$.}
Physically, $\rho(\Omega)$ could be associated with some preparation procedure which produces convex combinations of pure states. For example, a radiation source emits pure states $\kb{\fii_k(x)}{\fii_k(x)}$ randomly so that the output state must be assumed to be mixed with weights $\lambda_k(x)$.

From Theorem \ref{th1} one sees that, any positive operator measure $\Mo:\,\Sigma\to\lk$ for which $\Mo(\Omega)\in\mathcal T(\ki)$ and $\tr{\Mo(\Omega)}=1$, 
is of the form
$$
\Mo(X)=\sum_{t,s=1}^{\dim\ki}\int_X\<\psi_t(x)|\psi_s(x)\>\d\mu(x)\kb{k_t}{k_s}\in\mathcal T(\ki)
$$
where $\tr{\Mo(\Omega)}=\sum_{s=1}^{\dim\ki}\int_\Omega\|\psi_s(x)\|^2\d\mu(x)=1$. 
Define then a positive operator measure $\rho':\,\Sigma\to\mathcal T(\ki)$ by
$$
\rho'(X):=\sum_{t,s=1}^{\dim\ki}\int_X\<\psi_s(x)|\psi_t(x)\>\d\mu(x)\kb{k_t}{k_s},
$$
and an instrument 
$\M'\in\In(\Sigma,\ki,\C)$ by $\M'(X,B):=\tr{\rho'(X)B}$. The correspondence $\Mo\mapsto\M'$ is an affine bijection.

Note that $\M\in\ext\In(\Sigma,\ki,\C)$ if and only if $\sum_{s=1}^{\dim\ki}\int_\Omega\<\psi^s(x)|D(x)\psi^s(x)\>\d\mu(x)$
implies $D=0$. We have two special cases:

\noindent
a) Also $\ki=\C$. Then $\In(\Sigma,\C,\C)$ is just a convex set of probability measures (classical states) $\mu:\,\Sigma\to[0,1]$ (for which $\psi^1(x)\equiv1$) and we see that $\mu\in\ext\In(\Sigma,\C,\C)$ if and only if, for all $d(x)\in L^\infty(\mu)$, the condition $\int_\Omega d(x)\d\mu(x)=0$ implies $d=0$, if and only if $\mu(\Sigma)\in\{0,1\}$.

\noindent
b) Let $\Omega=\{0\}$ and consider $\M\in\In(2^{\{0\}},\ki,\C)$. Then the corresponding $\rho(\{0\})\in\mathcal T(\ki)$ is a state and $\In(2^{\{0\}},\ki,\C)$ is a convex set of states on $\ki$. One sees immediately that $\M$ is extreme if and only if $\rho(\{0\})$ is pure, i.e.\ $\rho(\{0\})=\kb{a_1(\{0\})}{a_1(\{0\})}$.
\end{example}

\begin{example}[The discrete case]\rm\label{ChEx}
Let $\M\in\In(\Sigma,\ki,\hi)$ and $\Mo_{I_\ki}$ the corresponding POVM.
Suppose that there exists a finite or countably infinite set $\ov X=\{x_i\}_{i=1}^N\subseteq\Omega$, $N\in\N_\infty$, such that $\{x_i\}\in\Sigma$, $\Mo_{I_\ki}(\{x_i\})>0$ for all $i$ and $\Mo_{I_\ki}(\ov X)=I_{\hi}$.
Then $\mu$ can be chosen to be such that $\mu(\{x_i\})=1$ for all $i$ and $\mu(\Omega\setminus\ov X)=0$.
It follows that 
$$
\M(X,B)=\sum_{1\le i<N+1\atop x_i\in X}T_i(B)
$$
where any $T_i:\,\lk\to\lh,\,B\mapsto T_i(B):=\M(\{x_i\},B)$ is a CP operation and all integrals in Theorem \ref{th2} reduce to sums. By replacing each $x_i$ by $i$ in the notations [e.g.\ $\A_k(i):=\A_k(x_i)=\A_k(\{x_i\})$] we have 
$\hd=\bigoplus_{i=1}^N\hi_{n(i)}$, $\psi_m^t(i)\in\hi_{n(i)}$, $d_k^t(i)\in\hi$, and the extended operators $\A_k(i):\,\hi\to\ki$ and $\A^t(i):\,\hi\to\hi_{n(i)}$ are bounded. Hence, we have $\sum_{t,i}\<\psi_n^t(i)|\psi_m^t(i)\>=\delta_{nm}$ and
(weakly)
\begin{eqnarray*}
T_i(B)&=&\sum_{n,m=1}^{\dim\hi}\sum_{s,t=1}^{\dim\ki}B_{st}\<\psi_n^s(i)|\psi_m^t(i)\>\kb{h_n}{h_m}=\sum_{s,t=1}^{\dim\ki}B_{st}\sum_{k=1}^{n(i)}\kb{d_k^s(i)}{d_k^t(i)}\\
&=&\sum_{k=1}^{n(i)}\A_k(i)^*B\A_k(i)=\sum_{s,t=1}^{\dim\ki}B_{st}\A^s(i)^*\A^t(i)
\end{eqnarray*}
weakly. We say that $n(i)$ is the {\it rank} of $T_i$.
Immediately one gets a generalization of Theorem 5 of \cite{DAPeSe}:
 $\M\in\ext\In(\Sigma,\ki,\hi)$ if and only if $\sum_{i,k,l}D(i)_{kl}\A_k(i)^*\A_l(i)=0$ implies $D=0$ (where $D=\bigoplus_{i=1}^N D(i)$ is bounded).
In the next example, we concentrate on the channel case $N=1$ and drop $(i)$ out from the notations.
\end{example}

\begin{example}[Channels]\rm
Let $T:\,\lk\to\lh$ be a CP channel and 
$T(B)=\sum_{k=1}^n\A_k^*B\A_k$, $B\in\lk$, its minimal Kraus decomposition (where the bounded operators $\A_k:\,\hi\to\ki$ are linearly independent and $n\le\dim\hi\dim\ki$ is minimal).
From Example \ref{ChEx} we get %\footnote{Obviously, Theorem \ref{C4} works also in a slighty more general case where one consideres a convex set of CP-maps $T:\,\lk\to\lh$ where $T(I_\ki)=K$ is fixed but not necessarily $I_\hi$.} 
that the channel $T$ is an extreme point of the convex set of CP channels $\lk\to\lh$ if and only if, for any $n\times n$--complex matrix $(D_{kl})$ with the finite operator norm the condition
$
\sum_{k,l=1}^nD_{kl}\A_k^*\A_l=0
$
(weakly) implies $D_{kl}\equiv0$.
This has also been proved in \cite[Theorem 2.4]{Ts}
(see also \cite[Theorem 5]{Ch} and an alternative formulation \cite[Proposition 1]{Ho11}).
Especially,
when the rank $n$ of $T$ is finite, then $T$ is extreme if and only if the operators $\A_k^*\A_l$ are linearly independent. %\footnote{If $n=\infty$ and $T$ is extreme then operators $\A_k^*\A_l$ are linearly independent.}
For example, if the rank $n=1$, i.e.\ $T(B)=\A_1^*B\A_1$, then $T$ is extreme.

\end{example}

\begin{example}[The finite dimensional case] \rm

Let $\M\in\ext\In(\Sigma,\ki,\hi)$ with the isometry $Y$ of Theorem \ref{th2}, and let $\Mo$ be the associate POVM of $\M$. For any $f\in L^\infty(\mu)$, the condition
\begin{equation}
\label{fjenhvudjfv}
\int_\Omega f(x)\d\Mo(x)=Y^*(I_{\ki}\otimes \hat f)Y=0
\end{equation} 
implies $f=0$.
Let $X_1,\ldots,X_N\in\Sigma$ be disjoint sets such that $\Mo(X_i)\neq0$. Then the effects $\Mo(X_1),\ldots,\Mo(X_N)$ are linearly independent which can be seen by substituting
$f=\sum_{i=1}^N c_i\CHII{X_i}$ into equation \eqref{fjenhvudjfv}:
$$
\sum_{i=1}^N c_i\Mo(X_i)=0\qquad\text{implies}\qquad
c_1=c_2=\ldots=c_N=0.
$$
From this fact follows that, if $\M$ is extremal then there are at most $(\dim\hi)^2$ disjoint sets $X_i$ such that $\Mo(X_i)\ne 0$. Since $\Mo(X)=0$ implies $\M(X,B)=0$ for all $B\in\lk$, an extremal $\M$ is concentrated on the set $\cup_{i=1}^N X_i$, $N\le(\dim\hi)^2$, but it does not necessarily follow that $\Mo$ is discrete even when $\dim\hi<\infty$ \cite{HaHePe}. By adding topological assumptions we get (for the proof, see \cite{HaHePe}):

\begin{proposition}\label{lksjdf}
Suppose that $(\Omega,\Sigma=\bo\Omega)$ is a second countable Hausdorff space, $\dim\hi <\infty$, and $\M\in\ext\In(\Sigma,\ki,\hi)$.
Then $\M$ is concentrated on a finite set, i.e.,
\begin{equation*}
\M(X,B)=\sum_{i=1}^N \CHI X(x_i) T_i(B),\qquad X\in\Sigma,\;B\in\lk,
\end{equation*}
for some finite number $N\leq(\dim\hi)^2$ of elements $x_1,\ldots,x_N\in\Omega$ and CP operations $T_i:\,\lk\to\lh$.
\end{proposition}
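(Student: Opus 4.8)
The plan is to combine the linear-independence fact about the effects $\Mo(X_i)$ with a compactness/topological argument that rules out accumulation of ``mass''. Since $\dim\hi<\infty$, the discussion preceding the proposition already shows that any extreme $\M$ is concentrated on a finite union $\cup_{i=1}^N X_i$ with $N\le(\dim\hi)^2$ whenever the $X_i$ are disjoint with $\Mo(X_i)\ne0$; equivalently, there is no \emph{countable} family of disjoint Borel sets each carrying nonzero $\Mo$-mass. The task is to upgrade ``concentrated on a finite union of sets'' to ``concentrated on a finite set of points''.

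First I would show that $\Mo$ is purely atomic. Fix the associate POVM $\Mo\in\O(\bo\Omega,\hi)$ and the measure $\mu$ with $\Mo\ll\mu$; by Theorem \ref{th1} the positive matrix-valued density $x\mapsto \sum_{k=1}^{n(x)}\kb{d_k(x)}{d_k(x)}$ exists. Using second countability of $\Omega$, pick a countable base $\{U_j\}$ of the topology. For each point $x$ where the density is nonzero, consider whether $\Mo$ gives positive mass to arbitrarily small neighbourhoods; if $\Mo$ had a nonatomic part supported on some Borel set $E$ of positive $\mu$-measure, one could (by the Hausdorff and second countable hypotheses, hence $\Omega$ metrizable on the relevant part, or directly by a Vitali-type bisection using the $U_j$) split $E$ into countably infinitely many disjoint Borel pieces $E_1,E_2,\dots$ each with $\Mo(E_i)\ne0$. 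That contradicts the finiteness bound $N\le(\dim\hi)^2$ from the extremality of $\M$. Hence $\Mo$ has no nonatomic part, and moreover the number of atoms is at most $(\dim\hi)^2$: the atoms $\{x_1\},\dots,\{x_N\}$ are disjoint sets with $\Mo(\{x_i\})\ne0$, so $N\le(\dim\hi)^2$, and $\Mo(\Omega\setminus\{x_1,\dots,x_N\})=0$.

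Then I would transfer this to $\M$ itself: since $\Mo(X)=0$ forces $\M(X,B)=0$ for all $B\in\lk$ (because $0\le\M(X,B_+)\le\M(X,\|B\|I_\ki)=\|B\|\Mo(X)=0$ for $B\ge0$, and then by the positive-part decomposition $B=\sum_{k=0}^3 i^kB_k$ for general $B$), the instrument is concentrated on $\{x_1,\dots,x_N\}$. Setting $T_i(B):=\M(\{x_i\},B)$, each $T_i$ is a CP operation by item (1) of the instrument definition (normality and complete positivity of $B\mapsto\M(\{x_i\},B)$, and $T_i(I_\ki)=\Mo(\{x_i\})\le I_\hi$), and $\sigma$-additivity of $\M$ in the first slot gives $\M(X,B)=\sum_{i=1}^N\CHI X(x_i)T_i(B)$, which is the claimed form.

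The main obstacle is the nonatomicity step: one must genuinely use the topological hypotheses (second countable Hausdorff), not just the measure-theoretic structure, since without them a $\dim\hi<\infty$ extreme instrument need not be discrete (as noted via \cite{HaHePe}). The delicate point is producing, from a hypothetical nonatomic part, an honestly \emph{infinite} sequence of disjoint Borel sets each with $\Mo$-mass bounded away from zero in the appropriate sense — one cannot merely split into sets of nonzero mass (that is automatic for nonatomic measures) but must arrange that the contradiction with $N\le(\dim\hi)^2$ bites, i.e. that these sets are simultaneously disjoint and all nonzero, which the argument above secures by iterated bisection along the countable base. Since this is exactly the content of \cite{HaHePe}, I would at this point simply invoke that reference for the details rather than reproduce the full measure-decomposition argument.
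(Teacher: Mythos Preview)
Your proposal is correct and matches the paper's treatment exactly: the paper establishes the linear-independence bound $N\le(\dim\hi)^2$ in the discussion preceding the proposition and then simply writes ``for the proof, see \cite{HaHePe}'' for the topological step upgrading this to concentration on finitely many points. Your sketch of the nonatomic-splitting argument and the transfer from $\Mo$ to $\M$ is in fact more detailed than what the paper itself provides.
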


\end{example}

\begin{example}[Multi-instruments] \rm
Quantum measurement and information processes can be viewed as combinations of the basic building blocks, elementary instruments, introduced in the preceding examples. For example, first one could start from a trivial (or `classical') Hilbert space $\C$ and prepare a state in $\ki$. Then one could process the state by using channels and instruments. Finally, the process ends in a measurement of a POVM, the final Hilbert space being the classical space $\C$ again. This kind of processes can be viewed as the following combinations of (Heisenberg) instruments
$\M_i\in\In(\Sigma_i,\ki_i,\hi_i)$, $i=1,2,\ldots,N\in\N_+$,
$$
\M_1\bigg(X_1,\M_2\Big(X_2,\M_3\big(X_3,\cdots\M_N(X_N,B_N)
\big)
\Big)
\bigg)\in\li(\hi_1),\qquad X_i\in\Sigma_i,\;B_N\in\li(\ki_N)
$$
so that one must have 
$\ki_i=\hi_{i+1}$ for all $i=1,\ldots,N-1$.
We denote the above combination by
$
\ov\M\big((X_1,X_2,\ldots,X_N),B_N\big)
$
and say that the corresponding map $$\ov\M:\,\Sigma_1\times\Sigma_2\times\cdots\times\Sigma_N\times\li(\ki_N)\to\li(\hi_1)$$ is the {\it multi-instrument} generated by the instruments $\M_1,\ldots,\M_N$.
By induction, it suffices to consider the case $N=2$.
Next we consider an important example (see a recent paper \cite{CaHeTo}).

Assume that we measure POVMs $\Mo_1$ and $\Mo_2$ (of the same Hilbert space $\hi$) by performing their measurements sequentially (first $\Mo_1$ and then $\Mo_2$). This leads to the bi-instrument $\ov\M$ defined by 
$$
\ov\M(X_1\times X_2,B_2):=\M_1\big(X_1,\M_2(X_2,B_2)\big).
$$ 
Usually $\ov\M$ extends to an instrument on a product $\sigma$-algebra \cite[Theorem 2]{DaLe}.
It defines a sequential joint observable $\Mo_{12}$ whose margins are POVMs 
$$
X_1\mapsto\Mo_{12}(X_1\times\Omega_2,I_\hi)=\Mo_1(X_1),\qquad X_2\mapsto\Mo_{12}(\Omega_1\times X_2,I_\hi)=\M_1\big(\Omega_1,\Mo_2(X_2)\big)
$$ 
where the channel $\M_1\big(\Omega_1,\bullet)$ operates to $\Mo_2$, that is, the first measurement disturbs the subsequent one.
Finally, we note that, if $\{\psi_n^s\}$ and 
$\{ {\psi'}_{n}^{s} \}$ are the structure vectors of $\M_1$ and $\M_2$ of Theorem \ref{th2}, respectively, the (not necessarily minimal) structure vectors $\{{\psi''}_n^s\}$ of $\ov\M$ can be easily calculated:
$$
{\psi''}_n^s=\sum_{a} {\psi}_n^a\otimes{\psi'}_a^s.
$$
\end{example}

\section{$\Mo$-compatible instruments}\label{luku4}

Let $\M\in\In(\Sigma,\hi,\ki)$ with an isometry 
$Y:\,\hi\to\ki\otimes\hd$ and structure vectors $\psi_n^s\in  L^2(\mu,\li)$ given by (1) of Theorem \ref{th2}. Let $\{\Mo_B\}_{B\in\lk}$ be the family of operator measures associated to $\M$. Since any $B$ can be decomposed into positive parts one sees that $\Mo_B$ can also be decomposed into a sum of positive operator measures (compare to \cite{HyPeYl2}).
Suppose then that $B$ is positive with the square root operator $\sqrt B$.
Now
$$
\Mo_B(X)=\M(X,B)=Y^*(B\otimes\CHII X)Y=Y_B^*(I_\ki\otimes\CHII X)Y_B
$$
where
$$
Y_B:=\big(\sqrt{B}\otimes I_{\hd}\big)Y
$$
is an operator from $\hi$ to a Hilbert space
$\hi^B\subseteq\ki\otimes\hd$ defined as the closure of the linear combinations of vectors 
$\big(\sqrt{B}\otimes\CHII X\big)Y\psi$, $X\in\Sigma$, $\psi\in\hi$.
Trivially, vectors $(I_\ki\otimes\CHII X)Y_B\psi$ span $\hi^B$
so that we have obtained a minimal Naimark dilation for $\Mo_B$.
(By Theorem  \ref{th1}, $\hi^B$ is unitarily equivalent with a direct integral Hilbert space $\hi_\oplus^B\subseteq L^2(\mu,\li)$ where $\Mo_B$ is diagonalized but we do not need this fact here.)
By defining the structure vectors of $\Mo_B$,
$$
\psi^B_n:=Y_B h_n=\big(\sqrt{B}\otimes I_{\hd}\big)Yh_n
=\sum_{s=1}^{\dim\ki}\big(\sqrt{B}k_s\big)\otimes\psi_n^s
$$
one gets
\begin{eqnarray*}
\Mo_B(X)&=&\sum_{n,m=1}^{\dim\hi}\int_{X}\<\psi^B_n(x)|\psi^B_m(x)\>\d\mu(x)\kb{h_n}{h_m} \\
&=&\sum_{n,m=1}^{\dim\hi}\sum_{s,t=1}^{\dim\ki}B_{st}\int_X\<\psi^s_n(x)|\psi^t_m(x)\>\d{\mu}(x)\kb{h_n}{h_m}=\M(X,B). 
\end{eqnarray*}
When $B=I_{\ki}$, we see that the structure vectors $\psi_n^{I_\ki}$ of the associate observable $\Mo_{I_\ki}$ of $\M$ are
$$
\psi^{I_\ki}_n=Yh_n=\sum_{s=1}^{\dim\ki}k_s\otimes\psi_n^s.
$$
If $B$ is not positive then one can collect the structure vectors of its positive parts into a single structure vector as in Remark 5.8 of \cite{HyPeYl2}.

Let then $\Mo\in\O(\Sigma,\hi)$ be a POVM with ON vectors $\psi_n\in \hd$ of Theorem \ref{th1}
and try to find 
an {\it $\Mo$-compatible instrument} 
$\M\in\In(\Sigma,\hi,\ki)$, i.e.\ $\M(X,I_\ki)\equiv\Mo(X)$, with vectors $\psi_n^s$ as in Theorem \ref{th2}. %(see, Remark \ref{remu} in the appendix).
Clearly, one must `solve' the equation
\begin{equation}\label{solve}
\sum_{s=1}^{\dim\ki}\<\psi^s_n(x)|\psi^s_m(x)\>\equiv\<\psi_n(x)|\psi_m(x)\>
\end{equation}
for unknown vectors $\psi_n^s$. The next theorem characterizes completely $\Mo$-compatible instruments:

\begin{theorem}\label{compa}
Let $\Mo\in\O(\Sigma,\hi)$ with the structure vectors $\psi_m\in\hd=\int_\Omega^\oplus\hi_{n(x)}\d\mu(x)$ and the isometry $Y$ of Theorem \ref{th1} associated with the diagonal minimal Naimark dilation, and let $\M\in\In(\Sigma,\ki,\hi)$. Then
$\M(X,I_\ki)\equiv\Mo(X)$ if and only if
there exists a decomposable CP channel $T:\,\lk\to\li(\hd)$,
$B\mapsto T(B)=\int_\Omega^\oplus T_x(B)\d\mu(x)$, where $T_x:\lk\to\li(\hi_{n(x)})$ are CP channels for $\mu$-almost all $x\in\Omega$, such that
$$
\M(X,B)\equiv Y^*T(B)\CHII X Y
=\sum_{n,m=1}^{\dim\hi}\int_{X}\<\psi_n(x)|T_x(B)\psi_m(x)\>\d\mu(x)\kb{h_n}{h_m}.
$$
Any channel $T_x$ above can be chosen to be of the form
$$
T_x(B)=C^*_x\big(B\otimes I_{\hi'_{n'(x)}}\big)C_x,\qquad B\in\lk,
$$
where $C_x:\,\hi_{n(x)}\to\ki\otimes\hi'_{n'(x)}$ is an isometry and
$\int_\Omega^\oplus\hi'_{n'(x)}\d\mu(x)$ is the direct integral Hilbert space of Theorem \ref{th2} associated to $\M$. Especially, one must have
$\dim\hi_{n(x)}\le\dim \ki\dim\hi'_{n'(x)}$ for $\mu$-almost all $x\in\Omega$.
\end{theorem}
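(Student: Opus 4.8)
The plan is to establish the two implications separately; the ``if'' direction is a short computation, and the real work is in the ``only if'' direction, where one compares the minimal Stinespring dilation of $\M$ (Theorem \ref{th2}) with the minimal Naimark dilation of $\Mo$ (Theorem \ref{th1}) and extracts the channel $T$ from the comparison map.

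\emph{The ``if'' direction.} Suppose a decomposable CP channel $T(B)=\int_\Omega^\oplus T_x(B)\,\d\mu(x)$ is given with $\M(X,B)\equiv Y^*T(B)\CHII X Y$. Each $T(B)$ is decomposable and $\CHII X$ is diagonalizable, so $T(B)\CHII X=\CHII X T(B)$ by Proposition \ref{prop1}; hence $\M(X,B)=(\CHII X Y)^*T(B)(\CHII X Y)$, which (with $\sigma$-additivity of $\CHII{\cdot}$, normality of $T$, and $Y^*Y=I_\hi$) makes the instrument axioms transparent, although only $\M(X,I_\ki)=Y^*T(I_\ki)\CHII X Y=Y^*\CHII X Y=\Mo(X)$ — valid since $T$ is unital and by Theorem \ref{th1}(1) — is actually needed here.

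\emph{The ``only if'' direction.} Assume $\M(X,I_\ki)\equiv\Mo(X)$. Fix a minimal Stinespring dilation $\M(X,B)=\tilde Y^*(B\otimes\CHII X)\tilde Y$ as in Theorem \ref{th2}(1), with $\tilde Y:\,\hi\to\ki\otimes\hd'$ an isometry and $\hd'=\int_\Omega^\oplus\hi'_{n'(x)}\,\d\mu(x)$. Viewing $\ki\otimes\hd'$ as $\int_\Omega^\oplus(\ki\otimes\hi'_{n'(x)})\,\d\mu(x)$, the operators $I_\ki\otimes\CHII X$ are its canonical spectral measure, so $\tilde Y$ and $I_\ki\otimes\CHII X$ form a Naimark dilation of $\M(\cdot,I_\ki)=\Mo$. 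Let $\hi^{\min}\subseteq\ki\otimes\hd'$ be the closed linear span of the vectors $(I_\ki\otimes\CHII X)\tilde Y\psi$; a short argument with the self-adjoint projections $I_\ki\otimes\CHII X$ shows $\hi^{\min}$ reduces each of them and contains $\ran\tilde Y$, so the compression $\big(\hi^{\min},\,(I_\ki\otimes\CHII X)|_{\hi^{\min}},\,\tilde Y\big)$ is a minimal Naimark dilation of $\Mo$. By the (standard) uniqueness of minimal Naimark dilations, comparison with the one of Theorem \ref{th1}(1) yields a unitary $W:\,\hd\to\hi^{\min}$ with $W\CHII X=(I_\ki\otimes\CHII X)W$ and $WY=\tilde Y$. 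Regarded as an isometry $\hd\to\ki\otimes\hd'$, $W$ then satisfies $W\hat f=\hat f W$ for all $f\in L^\infty(\mu)$ (by linearity and ultraweak limits), so Proposition \ref{prop1} makes $W$ decomposable, $W=\int_\Omega^\oplus W(x)\,\d\mu(x)$ with $W(x):\,\hi_{n(x)}\to\ki\otimes\hi'_{n'(x)}$ bounded, and $W^*W=I_{\hd}$ forces $W(x)^*W(x)=I_{\hi_{n(x)}}$ $\mu$-a.e.; thus $C_x:=W(x)$ is an isometry a.e., giving $\dim\hi_{n(x)}\le\dim\ki\,\dim\hi'_{n'(x)}$ a.e. Setting $T_x(B):=C_x^*(B\otimes I_{\hi'_{n'(x)}})C_x$ and $T(B):=\int_\Omega^\oplus T_x(B)\,\d\mu(x)=W^*(B\otimes I_{\hd'})W$, each $T_x$ is a CP channel a.e.\ and $T$ is a decomposable CP channel $\lk\to\li(\hd)$ (normality is clear, $T(I_\ki)=W^*W=I_{\hd}$, and $x\mapsto T_x(B)$ is measurable since $x\mapsto W(x)$ is). Finally, writing $B\otimes\CHII X=(B\otimes I_{\hd'})(I_\ki\otimes\CHII X)$ and using $\tilde Y=WY$ and $(I_\ki\otimes\CHII X)W=W\CHII X$,
$$
\M(X,B)=\tilde Y^*(B\otimes\CHII X)\tilde Y=Y^*W^*(B\otimes I_{\hd'})W\,\CHII X Y=Y^*T(B)\CHII X Y;
$$
expanding $Y=\sum_m\kb{\psi_m}{h_m}$ and $T(B)\CHII X=\int_\Omega^\oplus\CHI X(x)T_x(B)\,\d\mu(x)$ gives the stated kernel formula.

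\emph{Main obstacle.} The delicate part is the passage through minimal dilations: one must check that the compression to $\hi^{\min}$ is a genuinely \emph{minimal} Naimark dilation (so uniqueness applies), that the comparison unitary $W$ intertwines the diagonalizable operators and is hence decomposable with $\mu$-a.e.\ isometric fibers $C_x$, and that the fiberwise data $W(x)$, $T_x(B)$ are measurable in $x$. Once these are in place, the identity $\M(X,B)=Y^*T(B)\CHII X Y$ is a routine manipulation of decomposable operators, and the explicit form of $T_x$ and the dimension inequality come for free from the construction.
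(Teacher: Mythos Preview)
Your proof is correct and follows essentially the same route as the paper: both construct a decomposable isometry $C=W:\,\hd\to\ki\otimes\hd'$ intertwining the canonical spectral measures with $CY=\tilde Y$, then set $T(B)=C^*(B\otimes I_{\hd'})C$ and read off the fiberwise isometries $C_x$ and channels $T_x$. The only cosmetic difference is that you invoke uniqueness of the minimal Naimark dilation to obtain $W$, whereas the paper defines $C$ directly on the dense span of the vectors $\CHII{X_i}Y\eta_i$ and checks by hand that it is a well-defined isometry---which is of course exactly the standard proof of that uniqueness statement.
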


\begin{proof}
Suppose that there exists an 
$\M\in\In(\Sigma,\ki,\hi)$ such that $\M(X,I_\ki)\equiv\Mo(X)$,
and let $Y:\,\hi\to\hd$ and $Y':\,\hi\to\ki\otimes\hi_\oplus'$ be the isometries of Theorems \ref{th1} and \ref{th2}, that is, 
$\M(X,B)\equiv Y'^*(B\otimes\CHII X)Y'$ and
$\Mo(X)\equiv Y^*\CHII X Y\equiv Y'^*(I_\ki\otimes\CHII X)Y'$.
%For any $B\in\lk$ define
%$$
%C(B)\left(\sum_{i=1}^{n}\CHII {X_i}Y\eta_i\right):=
%\sum_{i=1}^{n}\big(B\otimes\CHII {X_i}\big)Y'\eta_i\in\ki\otimes\hi_\oplus'
%$$
%for all $X_i\in\Sigma$ and $\eta_i\in\hi$ where $i=1,2,\ldots,n\in\N_+$.
%Since
%\begin{align*}
%\Big\|C(B)\Big(\sum_i\CHII {X_i}Y\eta_i\Big)\Big\|^2
%&=\Big\<\sum_i(I_\ki\otimes\CHII {X_i})Y'\eta_i\Big|
%\big(B^*B\otimes I_{\hi_\oplus'}\big)
%\sum_j(I_\ki\otimes \CHII {X_j})Y'\eta_j\Big\> \\
%&\le\|B\|^2\Big\<\sum_i(I_\ki\otimes\CHII {X_i})Y'\eta_i\Big|
%\sum_j(I_\ki\otimes \CHII {X_j})Y'\eta_j\Big\> \\
%&=\|B\|^2\Big\|\sum_i\CHII {X_i}Y\eta_i\Big\|^2
%\end{align*}
%it follows that $C(B):\,\hd\to\ki\otimes\hi_\oplus'$ is well-defined bounded linear operator. Moreover, $(I_\ki\otimes\CHII X)C(B)=C(B)\CHII X$ for all $X\in\Sigma$ so that $C(B)$ is decomposable by proposition \ref{prop1},
%$
%C(B)=\int_\Omega^\oplus C_x(B)\d\mu(x).
%$
%Also $B\mapsto C(B)$ is linear, $C(I_\ki)Y=Y'$, and
%$$
%\M(X,B)\equiv Y'^*(B\otimes\CHII X)Y'=Y^*C(I_\ki)^*(B\otimes I_{\hd})(I_\ki\otimes\CHII X)C(I_\ki)Y
%=Y^*T(B)\CHII X Y
%$$
%where $T:\lk\to\li(\hd),\;B\mapsto T(B):=C(I_\ki)^*(B\otimes I_{\hd})C(I_\ki)$
%is obviously a CP channel and, for all $B\in\lk$,
Define
$$
C\left(\sum_{i=1}^{n}\CHII {X_i}Y\eta_i\right):=
\sum_{i=1}^{n}\big(I_\ki\otimes\CHII {X_i}\big)Y'\eta_i\in\ki\otimes\hi_\oplus'
$$
for all $X_i\in\Sigma$ and $\eta_i\in\hi$ where $i=1,2,\ldots,n\in\N_+$.
Since
\begin{align*}
\Big\|C\Big(\sum_i\CHII {X_i}Y\eta_i\Big)\Big\|^2
&=\Big\<\sum_i(I_\ki\otimes\CHII {X_i})Y'\eta_i\Big|
\sum_j(I_\ki\otimes \CHII {X_j})Y'\eta_j\Big\> \\
&=\sum_{i,j}\big\<\eta_i\big|Y'^*(I_\ki\otimes \CHII {X_i\cap X_j})Y'\eta_j\big\> 
=
\sum_{i,j}\big\<\eta_i\big|Y^* \CHII {X_i\cap X_j}Y\eta_j\big\>
\\
&=
\Big\<\sum_i\CHII {X_i}Y\eta_i\Big|
\sum_j\CHII {X_j}Y\eta_j\Big\> 
=
\Big\|\sum_i\CHII {X_i}Y\eta_i\Big\|^2
\end{align*}
it follows that $C:\,\hd\to\ki\otimes\hi_\oplus'$ is well-defined linear isometry, $C^*C=I_{\hd}$. Moreover, $(I_\ki\otimes\CHII X)C=C\CHII X$ for all $X\in\Sigma$ so that $C$ is decomposable by Proposition \ref{prop1},
that is,
$
C=\int_\Omega^\oplus C_x\d\mu(x)
$
where the operators $C_x:\,\hi_{n(x)}\to\ki\otimes\hi'_{n'(x)}$ are isometries for $\mu$-almost all $x\in\Omega$; here we have denoted
$\hi_\oplus'=\int_\Omega^\oplus \hi'_{n'(x)}\d\mu(x)$.
Since $CY=Y'$ and $(I_\ki\otimes\CHII X)C=C\CHII X$ one sees that
$$
\M(X,B)\equiv Y'^*(B\otimes\CHII X)Y'=Y^*C^*(B\otimes I_{\hd})(I_\ki\otimes\CHII X)CY
=Y^*T(B)\CHII X Y
$$
where $T:\lk\to\li(\hd),\;B\mapsto T(B):=C^*(B\otimes I_{\hd})C$
is obviously a CP channel and, for all $B\in\lk$,
$T(B)$ is decomposable, 
$
T(B)=\int_\Omega^\oplus T_x(B)\d\mu(x),
$
where $T_x(B)=C^*_x(B\otimes I_{\hi'_{n'(x)}})C_x$ for $\mu$-almost all $x\in\Omega$.
Let $\Omega_C\subseteq\Omega$ be a $\mu$-measurable set such that $C_x$ is an isometry for all $x\in\Omega_C$ and $\mu(\Omega\setminus\Omega_C)=0$.
Then, for each $x\in\Omega_C$, the map $B\mapsto T_x(B)=C^*_x(B\otimes I_{\hi'_{n'(x)}})C_x$ is clearly a CP channel.
The converse claim is trivial and the proof is complete.
\end{proof}

\begin{example}[Nuclear instruments and EB channels]\rm
\label{nuclearex}
If one chooses $T_x(B)=\tr{\sigma_x B}I_{\hi_{n(x)}}$, $B\in\lk$, 
where $\{\sigma_x\}_{x\in\Omega}\subseteq\mathcal S(\ki)$ is a ($\mu$-measurable) family of states, one gets the {\it nuclear instrument}
$$
\M(X,B)=\int_X\tr{\sigma_x B}\d\Mo(x),\qquad X\in\Sigma,\;B\in\lk,
$$
introduced by Ozawa \cite{Oz85}. 
Its predual instrument is
$$
\M_*(X,\rho)=\int_X\sigma_x\tr{\rho\Mo(\d x)},\qquad X\in\Sigma,\;\rho\in\th
$$
and the associate (predual) channel
$$
\M_*(\Omega,\rho)=\int_\Omega\sigma_x\tr{\rho\Mo(\d x)},\qquad X\in\Sigma,\;\rho\in\th.
$$
is {\it entanglement-breaking (EB)} \cite{HoShRu,HoShWe,Ho08,Teikonkirja}.
Following \cite{HoShWe} one sees that a {\it CP channel is EB if and only if it is the associate channel of a (nonunique) nuclear instrument}. 
It is easy to see that an $\Mo$-compatible instrument is nuclear if and only if its structure vectors can be chosen to be decomposable, that is, $\psi_n^s(x)=\eta^s(x)\otimes\psi_n(x)$ where $\eta^s(x)\in\ki$, $\sum_{s=1}^{\dim\ki}\|\eta^s(x)\|^2=1$, and vectors $\psi_n(x)$ are the structure vectors of $\Mo$.

For example, the instrument of Davies and Lewis \cite[Theorem 1]{DaLe}  is a special case of nuclear instruments so that, for any POVM $\Mo$, there exists a (nontrivial) nuclear instrument implementing $\Mo$.
Later we show that any $\Mo$-compatible instrument is nuclear if $\Mo$ is of rank 1 (thus generalizing Corollary 1 of \cite{HeWo}).

%Ozawan \cite{Oz85} paperin
%Theorem 4.4: For any POVM $\Mo\in\O(\Sigma,\hi)$ and $\Mo$-integrable\footnote{That is, if for all $B\in\lh$, $x\mapsto\tr{\rho_xB}$ is $\d\tr{\rho\Mo(x)}$-integrable for all $\rho\in\sh$ and $\int_X\tr{\rho_xB}\d\tr{\rho\Mo(x)}\equiv\tr{\rho_X B}$ for some $\rho_X\in\th$.}
% family $\{\rho_x\,|\,x\in\Omega\}\subseteq\sh$ there is a CP instrument (the so-called {\it nuclear instrument}) $\M\in\In(\Sigma,\hi,\hi)$ with $\Mo_{I_\hi}=\Mo$ such that 
%$$
%\tr{\rho\M(X,B)}=\tr{\M_*(X,\rho)B}=\int_X\tr{\rho_x B}\tr{\rho\Mo(\d x)}
%$$
%for all $\rho\in\th$ and $B\in\lh$.

\end{example}

Let $\Mo\in\O(\Sigma,\hi)$, $\M\in\In(\Sigma,\ki,\hi)$, and assume that
$\M(X,I_\ki)\equiv\Mo(X)$. Let $\hd=\int_\Omega^\oplus\hi_{n(x)}\d\mu(x)\subseteq L^2(\mu,\li)$ and 
$\hi_\oplus'=\int_\Omega^\oplus\hi_{n'(x)}\d\mu(x)\subseteq L^2(\mu,\li)$ be the direct integral Hilbert spaces of Theorems \ref{th1} and \ref{th2}, and
recall the notations and definitions of Subsection \ref{POVM} and Remark \ref{remu} for $\Mo$ and $\M$, respectively:
\begin{align*}
\Mo(X)&\supseteq\int_X\A^1(x)^*\A^1(x)\d\mu(x)=
\int_X\sum_{k=1}^{n(x)}\A_k(x)^*\A_k(x)\d\mu(x),\\
\A^1(x)&:=\sum_{k=1}^{n(x)}\kb{b_k}{d_k(x)}=
\sum_{n=1}^{\dim\hi}\kb{\psi_n(x)}{h_n},\qquad\A_k(x):=\kb1{d_k(x)},\\
\M(X,I_\ki)&\supseteq%\sum_{s=1}^{\dim\ki}\int_X\A^s(x)^*\A^s(x)\d\mu(x)=
\int_X \sum_{k=1}^{n'(x)} \A'_k(x)^*B\A'_k(x)\d\mu(x),\\
\A'_k(x)%=\sum_s\kb{k_s}{b_k}\A^s(x)
&:=\sum_{s=1}^{\dim\ki}\sum_{n=1}^{\dim\hi}\<b_k|\psi_n^s(x)\>\kb{k_s}{h_n}=\sum_{s=1}^{\dim\ki}\kb{k_s}{d^s_k(x)}.
\end{align*}
Let $T$ be the decomposable CP channel of Theorem \ref{compa}.
Immediately we see that 
$\<\psi^s_n(x)|\psi^t_m(x)\>=\<\psi_n(x)|T_x(\kb s t)\psi_m(x)\>$
and \eqref{solve} holds. Moreover, one
obtains a {\it pointwise Kraus decomposition for} $T$:
\begin{align*}
\M(X,I_\ki)&\supseteq
\int_X \sum_{k=1}^{n'(x)} \A'_k(x)^*B\A'_k(x)\d\mu(x)
=\int_X \A^1(x)^*\left[\sum_{k=1}^{n'(x)} \A^T_k(x)^*B\A^T_k(x)\right]\A^1(x)\d\mu(x)
\\
T_x(B)&
=C^*_x(B\otimes I_{\hi_{n'(x)}})C_x=\sum_{k=1}^{n'(x)}
C^*_x(B\otimes{\kb{b_k}{b_k}})C_x=
\sum_{k=1}^{n'(x)} \A^T_k(x)^*B\A^T_k(x),
\end{align*}
where (when $C_x$ is an isometry) 
$$
\A^T_k(x):=\sum_{s=1}^{\dim\ki}\kb{k_s}{k_s\otimes b_k}C_x
=\sum_{s=1}^{\dim\ki}\kb{k_s}{D^s_k(x)}
$$
is a bounded operator from $\hi_{n(x)}$ to $\ki$ and the generalized vectors $D^s_k(x):=C_x^*(k_s\otimes b_k)$ of the channel $T_x$
are now elements of the Hilbert space $\hi_{n(x)}$.
Note also that then $\sum_k \A^T_k(x)^*\A^T_k(x)=I_{\hi_{n(x)}}$.
Since $C_x\psi_n(x)=C_x(Yh_n)(x)=(Y'h_n)(x)=\sum_s k_s\otimes\psi_n^s(x)$ we get the following relations:
\begin{align*}
\A'_k(x)&=\A^T_k(x)\A^1(x),\qquad
\A^T_k(x)\psi_n(x)=\sum_{s=1}^{\dim\ki}\<b_k|\psi^s_n(x)\>k_s,\\
d_k^s(x)&=\sum_{l=1}^{n(x)}\<\A_k^T(x)b_l|k_s\>d_l(x)
=\sum_{l=1}^{n(x)}\<C_x b_l|k_s\otimes b_k\>d_l(x)
=\sum_{l=1}^{n(x)}\<b_l|D^s_k(x)\>d_l(x).
\end{align*}
The last equation gives us a relation between the generalized vectors $d_k^s(x)$, $D^s_k(x)$, and $d_k(x)$ of $\M$, $T$, and $\Mo$, respectively.
In addition, it provides 
an {\it effective tool for determining $\Mo$-compatible instruments:} If the generalized vectors $d_l(x)$ of $\Mo$ are given, take a (measurable collection of) complex `matrices' $(c^x_{l,sk})$
such that the orthogonality condition
$$
\sum_{s,k}c^x_{l,sk}\overline{c^x_{l',sk}}=\delta_{ll'}.
$$
is satisfied.
Then the instrument $\M\in\In(\Sigma,\ki,\hi)$ defined by the generalized vectors
$$
d_k^s(x):=\sum_{l=1}^{n(x)}c^x_{l,sk} d_l(x)
$$
is $\Mo$-compatible.
Especially, if $n(x)=1$ one sees that
$d_k^s(x)=c^x_{1,sk}d_1(x)$ where $c^x_{1,sk}\in\C$ and
$\sum_{s,k}|c^x_{1,sk}|^2=1$.

\begin{corollary}\label{corolla1}
Let $\Mo\in\O(\Sigma,\hi)$ and $\M\in\In(\Sigma,\ki,\hi)$. If
$\M(X,I_\ki)\equiv\Mo(X)$ then
there exist CP channels $\Phi^X:\,\lk\to\lh$, $X\in\Sigma$, such that
$$
\M(X,B)=\sqrt{\Mo(X)}\,\Phi^X(B)\sqrt{\Mo(X)},\qquad X\in\Sigma,\;\;B\in\lk.
$$
\end{corollary}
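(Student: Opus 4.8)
The plan is to derive the factorization from Theorem~\ref{compa} by means of a polar decomposition, followed by a harmless extension of a completely positive operation to a channel. Fix $X\in\Sigma$. By Theorem~\ref{compa} (applied to $\M$ and $\Mo$) there is a decomposable CP channel $T:\,\lk\to\li(\hd)$ with $\M(X,B)=Y^*T(B)\,\CHII X\,Y$, where $Y:\hi\to\hd$ is the isometry $Y$ of Theorem~\ref{th1} and $\Mo(X)=Y^*\CHII X Y$. Since $T(B)$ is decomposable it commutes with every $\CHII X$ by Proposition~\ref{prop1}, and $\CHII X$ is a projection, so, writing $Z_X:=\CHII X\,Y:\hi\to\hd$, one obtains
$$
\M(X,B)=Y^*\CHII X\,T(B)\,\CHII X\,Y=Z_X^*\,T(B)\,Z_X,\qquad Z_X^*Z_X=Y^*\CHII X Y=\Mo(X).
$$

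Next I would invoke the polar decomposition $Z_X=U_X\sqrt{\Mo(X)}$, where $U_X:\hi\to\hd$ is a partial isometry whose initial projection $Q_X:=U_X^*U_X$ is the orthogonal projection of $\hi$ onto $\overline{\ran\sqrt{\Mo(X)}}$; in particular $Q_X\sqrt{\Mo(X)}=\sqrt{\Mo(X)}$. Substituting gives $\M(X,B)=\sqrt{\Mo(X)}\,U_X^*T(B)U_X\,\sqrt{\Mo(X)}$, so the natural candidate is $\Phi_0^X(B):=U_X^*T(B)U_X$. This is a CP, ultraweakly continuous operation $\lk\to\lh$ (being the composition of the normal map $T$ with the normal compression $S\mapsto U_X^*SU_X$) with $\Phi_0^X(I_\ki)=U_X^*T(I_\ki)U_X=U_X^*U_X=Q_X\le I_\hi$, but it is not yet unital. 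To repair this, fix any state $\sigma\in\mathcal S(\ki)$ and set
$$
\Phi^X(B):=\Phi_0^X(B)+\tr{\sigma B}\,(I_\hi-Q_X),\qquad B\in\lk,
$$
which is linear, manifestly CP, ultraweakly continuous, and satisfies $\Phi^X(I_\ki)=Q_X+(I_\hi-Q_X)=I_\hi$, hence is a CP channel. Finally, because $(I_\hi-Q_X)\sqrt{\Mo(X)}=0$, the added term contributes nothing after compression, and
$$
\sqrt{\Mo(X)}\,\Phi^X(B)\,\sqrt{\Mo(X)}=\sqrt{\Mo(X)}\,\Phi_0^X(B)\,\sqrt{\Mo(X)}=Z_X^*\,T(B)\,Z_X=\M(X,B),
$$
which is the asserted identity.

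The argument is short once Theorem~\ref{compa} is available, and the only point that needs care is the extension step: $\Phi_0^X$ is unital only on the range projection $Q_X$ of $\sqrt{\Mo(X)}$ (equivalently, on the support of $\Mo(X)$), so it cannot serve as $\Phi^X$ directly, and one must add a term that makes it unital yet is annihilated by $\sqrt{\Mo(X)}$ on both sides; the choice $\tr{\sigma B}(I_\hi-Q_X)$ does exactly this. Two further remarks: the channel $\Phi^X$ genuinely depends on $X$, so one cannot in general replace the family $\{\Phi^X\}_{X\in\Sigma}$ by a single channel; and one should track ultraweak continuity throughout, which is inherited from the normality of $T$. Alternatively, one can dispense with Theorem~\ref{compa} altogether and run the same polar-decomposition argument on a normal Stinespring dilation of the CP operation $\M(X,\cdot)$ directly, using only that $\M(X,I_\ki)=\Mo(X)$; this exhibits the corollary as the general statement that any normal CP operation $T$ with $T(I_\ki)=P$ factors as $T(B)=\sqrt P\,\Phi(B)\sqrt P$ for some CP channel $\Phi$.
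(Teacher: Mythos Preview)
Your proof is correct and follows essentially the same route as the paper: both invoke Theorem~\ref{compa} and then factor $Z_X=\CHII X Y$ through $\sqrt{\Mo(X)}$ via the partial isometry of its polar decomposition. The only difference is in how unitality is secured: the paper extends the partial isometry to a full isometry $\ov E_X:\hi\to\hd$ and sets $\Phi^X(B)=\ov E_X^{\,*}T(B)\ov E_X$, whereas you keep the partial isometry and add the correction $\tr{\sigma B}(I_\hi-Q_X)$, which is arguably cleaner since it avoids any dimension considerations for the extension.
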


\begin{proof}
Assume that $\M(X,I_\ki)\equiv\Mo(X)$ and let $T$ be the channel of Theorem \ref{compa} so that $\M(X,B)\equiv Y^*T(B)\CHII X Y$.
Fix $X\in\Sigma$ and define, for all  $\eta\in\hi$,
$$
E_X\Big(\sqrt{\Mo(X)}\,\eta\Big):=\CHII{X}Y\eta
$$
for which 
\begin{align*}
\Big\|E_X\Big(\sqrt{\Mo(X)}\,\eta\Big)\Big\|^2&=\big\<\CHII{X}Y\eta\big|\CHII{X}Y\eta\big\>
=\big\<\eta\big|Y^*\CHII{X}Y\eta\big\>
=\<\eta\big|\Mo(X)\eta\big\> =
\Big\|\sqrt{\Mo(X)}\,\eta\Big\|^2,
\end{align*}
so that $E_X$ is an isometry from $\sqrt{\Mo(X)}\,\hi$ to $\hd$.
Extend $E_X$ to some isometry $\ov E_X:\,\hi\to\hd$, ${\ov E}_X^*\ov E_X=I_\hi$ (which clearly exists since $Y:\,\hi\to\hd$ is isometric).
Define a channel $\Phi^X:\,\lk\to\lh,\;B\mapsto\Phi^X(B):=\ov E_X^* T(B)\ov E_X$ for which
$$
\Big\<\eta\Big|\sqrt{\Mo(X)}\,\Phi^X(B)\sqrt{\Mo(X)}\,\eta\Big\>
=\big\<\CHII{X}Y\eta\big|T(B)\CHII{X}Y\eta\big\>
=\big\<\eta\big|\M(X,B)\eta\big\>
$$
for all $\eta\in\hi$ and $B\in\lk$. Hence, the corollary follows.
\end{proof}

\begin{remark}\rm
In the context of Theorem \ref{compa} and Corollary \ref{corolla1}, we note the following  facts: Let $\M\in\In(\Sigma,\ki,\hi)$ be an {\it arbitrary} instrument and $\Mo(X)=\M(X,I_\ki)$ its associate observable with the minimal Naimark isometry $Y:\hi\to\hd$ of Theorem \ref{th1}.
\begin{enumerate}
\item There exists a CP channel $T$ such that $\M(X,B)\equiv Y^*
\ov\M(X,B)Y$ where 
$$
\ov\M:\,\Sigma\times\lk\to\li(\hd),\;(X,B)\mapsto
\ov\M(X,B):=T(B)\CHII X
$$
is an instrument with the canonical spectral measure as its associate observable and $T$ as its associate channel.
\item If $\Mo$ is projection valued then $YY^*=I_{\hd}$, $Y\hi=\hd$, and 
\begin{align*}
\M(X,B)&\equiv Y^*T(B)YY^*\CHII XY=T_Y(B)\Mo(X)=
\Mo(X)T_Y(B)\Mo(X) \\
&=\sqrt{\Mo(X)}\,T_Y(B)\sqrt{\Mo(X)}
\end{align*}
where $T_Y:=Y^*T Y:\,\lk\to\lh$ is a CP channel commuting with $\Mo$.
The channels $\Phi^X$ can now be chosen to be the same $T_Y$ for all $X\in\Sigma$.
\end{enumerate}
\end{remark}

\begin{example}[Predual instruments] \rm
Let $\Mo$, $Y$, $\M$, $T$, and $\Phi^X$ be as in Theorem \ref{compa}
and Corollary \ref{corolla1}.
Let
$\M_*:\,\Sigma\times\th\to\tk$, $T_*:\,\mathcal T(\hd)\to\tk$, and $\Phi^X_*:\,\th\to\tk$ be the predual mappings of $\M$, $T$, and $\Phi^X$, respectively.
Now, for all $X\in\Sigma$, $\rho\in\th$, and $B\in\lk$, 
\begin{align*}
\tr{\M_*(X,\rho)B}=\tr{\rho\M(X,B)}=
\tr{\rho Y^*\CHII XT(B)\CHII X Y}=
\tr{\CHII X Y\rho Y^*\CHII XT(B)}%=\tr{T_*(\CHII X Y\rho Y^*)B}
\end{align*}
so that 
$$
\M_*(X,\rho)=T_*(\CHII X Y\rho Y^*\CHII X),
$$
that is, the L\"uders operation $\rho\mapsto\CHII X\tilde\rho\CHII X$ of the larger system (with the Hilbert space $\hd$) first operates to a subsystem state $\tilde\rho=Y\rho Y^*\in\mathcal S(Y\hi)$, $\hi \cong Y\hi\subseteq\hd$ and, then, the resulting (nonnormalized) state is transformed by the channel $T_*$.

Similarly, we get
$$
\M_*(X,\rho)=\Phi^X_*\big(\sqrt{\Mo(X)}\,\rho\sqrt{\Mo(X)}\big)
$$
which can be interpreted as a combination of a L\"uders operation
$\rho\mapsto\sqrt{\Mo(X)}\,\rho\sqrt{\Mo(X)}$ and a channel $\Phi^X_*$.
In the both above interpretations of $\M_*$, one can see the action of the channel as adding quantum noise to the L\"uders operation.

\end{example}

%$$
%\tr{\CHII X Y\rho Y^*}=
%\tr{\Mo(X)\rho}=
%\tr{\M_*(X,\rho)}=\tr{T_*(\CHII X Y\rho Y^*)}
%$$

%\begin{align*}
%\M(X,B)
%&=\sum_{n,m=1}^{\dim\hi}\sum_{s,t=1}^{\dim\ki}B_{st}\int_{X}\<\psi^s_n(x)|\psi^t_m(x)\>\d\mu(x)\kb{h_n}{h_m} \\
%&=\sum_{n,m=1}^{\dim\hi}\sum_{s,t=1}^{\dim\ki}B_{st}\int_{X}
%\big\<k_s\otimes\psi_n(x)\big|C(x)^*C(x)\big(k_t\otimes\psi_m(x)\big)\big\>\d\mu(x)\kb{h_n}{h_m} 
%\end{align*}
%

%Moreover, by denoting $E=C^*C$,
%$$
%\delta_{nm}=\<\psi_n|\psi_m\>
%=\<\psi_n|({\rm tr}_\k E)\psi_m\>=
%\sum_{s=1}^{\dim\ki}
%\big\<k_s\otimes\psi_n\big|E\big(k_s\otimes\psi_m\big)\big\>
%$$
%and THE STRUCTURE of $\M$ is
%$$
%\<\psi^s_n|\CHII{X\cap Y}\psi^t_m\>
%=
%\<\CHI X\psi^s_n|\CHI Y\psi^t_m\>
%=
%\big\<k_s\otimes\CHI X \psi_n\big|E\big(k_t\otimes\CHI Y\psi_m\big)\big\>
%$$
%implying
%\begin{align*}
%\<\CHI X\psi_n|({\rm tr}_\k E)\CHI Y\psi_m\>&=
%\sum_s\big\<k_s\otimes\CHI X \psi_n\big|E\big(k_s\otimes\CHI Y\psi_m\big)\big\>
%=\sum_s\<\psi^s_n|\CHII{X\cap Y}\psi^s_m\>
%\\
%&=
%\sum_s\int_{X\cap Y}
%\big\<k_s\otimes\psi_n(x)\big|C(x)^*C(x)\big(k_s\otimes\psi_m(x)\big)\big\>\d\mu(x)
%\\
%&=\int_{X\cap Y}\<\psi_n(x)|\psi_m(x)\>\d\mu(x)=
%\<\CHI X\psi_n|\CHI Y\psi_m\>
%\end{align*}
%so that one must have
%$$
%\boxed{
%{\rm tr}_\k E=I_{\hd}\quad\Longleftrightarrow\quad
%{\rm tr}_\k E(x)=I_{n(x)}
%}
%$$

\section{Measurement theory}

Following e.g.\ \cite[Definition 3.1]{Oz84} or \cite{BuLaMi}, we define
{\it a measuring process (a measurement model \emph{or} a premeasurement}) ${\bf M}$ of a POVM $\Mo\in\O(\Sigma,\hi)$ as a 4-tuple ${\bf M}=\<\hi',\Po,\sigma,U\>$ consisting of a (possibly nonseparable) Hilbert space $\hi'$ attached to the probe, a PVM $\Po:\,\Sigma\to\li(\hi')$ (the pointer observable), an initial state $\sigma\in\mathcal S(\hi')$ of the probe, and a unitary operator $U\in\mathcal L(\hi\otimes\hi')$ (the measurement interaction) satisfying the relation
$$
\tr{\rho\Mo(X)}=\tr{U(\rho\otimes\sigma)U^*\big(I_\hi\otimes\Po(X)\big)},\hspace{0.5cm}\rho\in\th,\;X\in\Sigma,
$$
or, equivalently,
$$
\Mo(X)=E_\sigma\big[U^*\big(I_\hi\otimes\Po(X)\big)U\big],\hspace{0.5cm}X\in\Sigma,
$$
where $E_\sigma:\,\mathcal L(\hi\otimes\ki)\to\lh$ is a (normal) CP map defined by the formula
$$\tr{\rho E_\sigma(A)}:=\tr{(\rho\otimes\sigma)A},\hspace{0.5cm}\rho\in\th,\; A\in\mathcal L(\hi\otimes\ki).
$$
%Especially, $E_\sigma(B\otimes I_\ki)=B$ for all $B\in\lh$ \cite[Lemma 1.1]{Davies&LUUISH KAI?}.
Moreover, ${\bf M}=\<\hi',\Po,\sigma,U\>$ is said to be {\it pure} (or {\it normal}) if $\sigma=\kb\xi\xi$ for some $\xi\in\hi'$, $\|\xi\|=1$. 

A measuring process ${\bf M}=\<\hi',\Po,\sigma,U\>$ of $\Mo$ defines an $\Mo$-compatible instrument $\M\in \In(\Sigma,\,\hi,\,\hi)$
by
\begin{equation}
\label{mittaus}
\M(X,B):=E_\sigma\big[U^*\big(B\otimes\Po(X)\big)U\big],\hspace{0.5cm}X\in\Sigma,\;B\in\lh.
\end{equation}
Now its predual instrument
$$
\M_*(X,\rho)={\rm tr}_{\hi'}\big[U(\rho\otimes\sigma)U^*\big(I_\hi\otimes\Po(X)\big)\big],\hspace{0.5cm}X\in\Sigma,\;\rho\in\th,
$$
%is an {\it ancilla form} of $\M_*$
so that $\tr{\rho\Mo(X)}=\tr{\M_*(X,\rho)}$.
Two measuring processes are {\it statistically equivalent} if they define the same instrument. 
%Moreover, two (statistically equivalent) measuring processes ${\bf M}=\<\hi',\Po,\sigma,U\>$ and
%$\ov{\bf M}=\<\ov\hi',\ov\Po,\ov\sigma,\ov U\>$ are {\it unitarily equivalent} if there exists a unitary operator $W:\,\ov\hi'\to\hi'$
%such that $\ov U=(I_\hi\otimes W)^*U(I_\hi\otimes W)$, $\ov\sigma=W^*\sigma W$, and 
%$\ov\Po(X)=W^*\Po(X)W$ for all $X\in\Sigma$.

A fundamental result of Ozawa \cite[Theorem 5.1]{Oz84}
is that for each instrument $\M\in\In(\Sigma,\hi,\hi)$ there exists a {\it pure} measuring process ${\bf M}=\<\hi',\Po,\kb\xi\xi,U\>$ of the associate observable $\Mo$ of $\M$, that is, $\M$ is of the form \eqref{mittaus}. Then we say that $\bf M$ is a {\it pure realization of the instrument $\M$}. 
Moreover, we say that a pure realization ${\bf M}=\<\hi',\Po,\kb\xi\xi,U\>$ of $\M$ is {\it minimal} if for each pure realization
$\ov{\bf M}=\<\ov\hi',\ov\Po,\kb{\ov\xi}{\ov\xi},\ov U\>$ of $\M$ there exists an isometry $\hi'\to\ov\hi'$. Physically this means that the ancillary space $\hi'$ of a minimal $\bf M$ is (up to a unitary equivalence) the smallest possible Hilbert space, i.e.\ there are no unnecessary degrees of freedom in the measuring process.
Next we consider minimal pure realizations of instruments.

\begin{example}\rm
Let $\M\in\In(\Sigma,\hi,\hi)$ with the isometry $Y:\,\hi\to\hi\otimes\hd$ of Theorem \ref{th2}.
Let $\e$ be an ON basis of $\hi$ and fix a unit vector 
$\xi_\oplus\in\hd$.
Since $\{Yh_n\}_{n=1}^{\dim\hi}$ and $\{h_n\otimes{\xi_\oplus} \}_{n=1}^{\dim\hi}$  are ON sets (with the same cardinality) one can define a unitary operator $U_{\xi_\oplus} \in\li(\hi\otimes\hd)$ by setting
$U_{\xi_\oplus} (h_n\otimes{\xi_\oplus} ):=Yh_n$. %=:\psi_n=:\sum_{k=1}^{\dim\hi}k_s\otimes\psi_n^s$. 
Obviously, $U_{\xi_\oplus} (\psi\otimes{\xi_\oplus} )=Y\psi$ for all $\psi\in\hi$ and we get, for all $\fii,\,\psi\in\hi$, $X\in\Sigma$, and $B\in\lh$,
\begin{eqnarray*}
\<\fii|E_{\kb{\xi_\oplus} {\xi_\oplus} }\big[U_{\xi_\oplus} ^*\big(B\otimes\CHII X\big)U_{\xi_\oplus} \big]\psi\>
&=&\tr{U_{\xi_\oplus} (\kb{\psi\otimes{\xi_\oplus} }{\fii\otimes{\xi_\oplus} })U_{\xi_\oplus} ^*\big(B\otimes\CHII X\big)} \\
&=&\<\fii|Y^*\big(B\otimes\CHII X\big)Y\psi\>
=\<\fii|\M(X,B)\psi\>
\end{eqnarray*}
so that ${\bf M}^\M_{U_{\xi_\oplus} }:=\<\hd,[X\mapsto\CHII X],\kb{\xi_\oplus} {\xi_\oplus} ,U_{\xi_\oplus} \>$ is a minimal pure realization of $\M$ by the next Theorem \ref{purereali}.
\end{example}

Let $\M\in\In(\Sigma,\hi,\hi)$ with the (minimal) isometry $Y:\,\hi\to\hi\otimes\hd$ of Theorem \ref{th2}, i.e.\ $\M(X,B)=Y^*(B\otimes\CHII X)Y$ and the linear space $${\mathcal V}:={\rm lin}\big\{(B\otimes\CHII X)Y\psi\,\big|\,B\in\lh,\,X\in\Sigma,\,\psi\in\hi\big\}$$ is dense in $\hi\otimes\hd$.
\begin{theorem}\label{purereali}
For any pure realization $\<\hi',\Po,\kb{\xi}{\xi},U\>$ of $\M$ there exists a unique isometry $W:\,\hd\to\hi'$ such that
$\Po(X)W=W\CHII X$ for all $X\in\Sigma$ and
$(I_{\hi}\otimes W)Y\psi=U(\psi\otimes\xi )$ for all $\psi\in\hi$.
The ancillary space $\hi'$ is unitarily equivalent with the dilation space $\hd$ if and only if $W$ is unitary.
Finally, $\dim\hi'\ge\dim\hd\ge\dim\hi$ and, if $\hd$ is not separable, then $\hi'$ cannot be separable.
\end{theorem}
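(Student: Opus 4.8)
The plan is to recognise the pair $(\hi\otimes\hd,Y)$ as the \emph{minimal} Stinespring dilation of $\M$ and to read off, from the pure realization, a second (a priori larger) such dilation sitting inside $\hi\otimes\hi'$; the isometry $W$ is then the canonical comparison map between the two dilation spaces, built by matching Gram matrices, and it factors through the second tensor leg precisely because the spanning vectors are multiplicative in the operator slot.

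First I would record the identity furnished by the pure realization: combining \eqref{mittaus} with the defining formula of $E_{\kb\xi\xi}$ gives, for all $\fii,\psi\in\hi$, $X\in\Sigma$, $B\in\lh$,
\[
\<\fii\,|\,\M(X,B)\psi\>=\big\<U(\fii\otimes\xi)\,\big|\,(B\otimes\Po(X))U(\psi\otimes\xi)\big\>.
\]
Using $\CHII X\CHII{X'}=\CHII{X\cap X'}$, $\Po(X)\Po(X')=\Po(X\cap X')$, the relation $\M(X,B)=Y^*(B\otimes\CHII X)Y$ and the displayed identity, one checks that for generators the two quantities $\<(B\otimes\CHII X)Y\psi\,|\,(B'\otimes\CHII{X'})Y\psi'\>$ and $\big\<(B\otimes\Po(X))U(\psi\otimes\xi)\,\big|\,(B'\otimes\Po(X'))U(\psi'\otimes\xi)\big\>$ both equal $\<\psi\,|\,\M(X\cap X',B^*B')\psi'\>$. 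Hence the assignment $(B\otimes\CHII X)Y\psi\mapsto(B\otimes\Po(X))U(\psi\otimes\xi)$ preserves inner products of finite linear combinations; in particular it is well defined (a combination vanishing in $\hi\otimes\hd$ has vanishing image, by the Gram identity) and isometric, so by density of $\mathcal V$ it extends to an isometry $V:\hi\otimes\hd\to\hi\otimes\hi'$.

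Next I would show $V=I_\hi\otimes W$. Evaluating $V$ on the generators, $V\big((C\otimes I_\hd)(B\otimes\CHII X)Y\psi\big)=(CB\otimes\Po(X))U(\psi\otimes\xi)=(C\otimes I_{\hi'})V\big((B\otimes\CHII X)Y\psi\big)$ for every $C\in\lh$, so $V(C\otimes I_\hd)=(C\otimes I_{\hi'})V$ by density. A standard fact --- a bounded operator $\hi\otimes\ki_1\to\hi\otimes\ki_2$ commuting with $C\otimes I$ for every $C\in\lh$ is of the form $I_\hi\otimes W$ --- then yields the operator $W:\hd\to\hi'$ with $V=I_\hi\otimes W$, and $W$ is isometric since $V$ is and $\hi\neq\{0\}$. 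Evaluating $V$ on generators in the same way gives $V(I_\hi\otimes\CHII X)=(I_\hi\otimes\Po(X))V$, i.e.\ $W\CHII X=\Po(X)W$ for all $X\in\Sigma$; and with $B=I_\hi$, $X=\Omega$ (so $\CHII\Omega=I_\hd$ and $\Po(\Omega)=I_{\hi'}$) one gets $(I_\hi\otimes W)Y\psi=V(Y\psi)=U(\psi\otimes\xi)$. Uniqueness is then automatic: any isometry $W'$ satisfying the same two conditions makes $I_\hi\otimes W'$ agree with $I_\hi\otimes W$ on every generator $(B\otimes\CHII X)Y\psi$, hence on the dense set $\mathcal V$, so $W'=W$.

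For the remaining statements: ``$W$ unitary $\Rightarrow$ $\hi'\cong\hd$'' is trivial, while conversely, if $\hi'$ is unitarily equivalent to $\hd$ compatibly with the pointer observable --- equivalently, if $\<\hi',\Po,\kb\xi\xi,U\>$ is a minimal pure realization --- then comparing with the canonical minimal one ${\bf M}^{\M}_{U_{\xi_\oplus}}=\<\hd,[X\mapsto\CHII X],\kb{\xi_\oplus}{\xi_\oplus},U_{\xi_\oplus}\>$ of the preceding Example and applying the uniqueness part to the induced isometry $\hd\to\hd$ forces the latter to be the identity, so $W$ is onto, hence unitary. The inequality $\dim\hi'\ge\dim\hd$ is immediate from injectivity of $W$, and combined with the dimension bounds for $\hd$ in Theorem \ref{th2} it gives the displayed chain; finally, if $\hd$ is not separable then $W\hd$ is a closed nonseparable subspace of $\hi'$, so $\hi'$ cannot be separable. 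I expect the crux of the argument to be the factorization step --- checking that the comparison isometry $V$ is genuinely of the special form $I_\hi\otimes W$ rather than merely \emph{some} isometry between the two product spaces --- which is exactly where the multiplicativity of the generators $(B\otimes\CHII X)Y\psi$ in the slot $B$ is used; the Gram-matrix matching beforehand is routine but indispensable bookkeeping.
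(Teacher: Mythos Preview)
Your construction of the comparison isometry (what the paper calls $\ov W$, you call $V$) by matching Gram matrices on the generators $(B\otimes\CHII X)Y\psi$, the factorisation $V=I_\hi\otimes W$ via commutation with all $C\otimes I$, the intertwining $\Po(X)W=W\CHII X$, and the uniqueness argument are essentially the paper's proof, step for step.

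The gaps are in the final two assertions. For the converse of the unitary-equivalence biconditional you quietly replace the stated hypothesis ``$\hi'$ unitarily equivalent to $\hd$'' by ``unitarily equivalent \emph{compatibly with the pointer observable} --- equivalently, a minimal pure realization'', which is not what the theorem says; you then compare with the canonical realization of the preceding Example and appeal to the uniqueness clause to force ``the induced isometry $\hd\to\hd$'' to be the identity, but that composed isometry need not satisfy the second defining relation $(I_\hi\otimes\,\cdot\,)Y\psi=U_{\xi_\oplus}(\psi\otimes\xi_\oplus)$ required to invoke uniqueness, so the argument is incomplete even under your strengthened hypothesis. The paper's route here is different and more direct: it sets $R=WW^*$, observes that $R\hi'\cong\hd$, and argues that $\hi'\cong\hd$ then forces $R\hi'\cong\hi'$ and hence $R=I_{\hi'}$. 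For $\dim\hd\ge\dim\hi$ you cite ``the dimension bounds for $\hd$ in Theorem~\ref{th2}'', but that result gives only the \emph{upper} bound $n(x)\le\dim\hi\dim\ki$ on fibre multiplicities and says nothing that would bound $\dim\hd$ from below; the paper simply declares the remaining claims trivial without further argument.
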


\begin{proof}
Let  $\<\hi',\Po,\kb{\xi}{\xi},U\>$ be a pure realization of $\M$.
Since, for all $\fii,\,\psi\in\hi$, $X\in\Sigma$, and $B\in\lh$,
\begin{eqnarray*}
\<\fii|\M(X,B)\psi\>
%\tr{(\kb\psi\fii\otimes\sigma)\big[U^*\big(B\otimes\Po(X)\big)U\big]}
=
\big\<U(\fii\otimes \xi )\big|\big(B\otimes\Po(X)\big)U(\psi\otimes\xi )\big\> 
=\<Y\fii|\big(B\otimes\CHII X\big)Y\psi\>
\end{eqnarray*}
one can define a linear map
$$
\ov W\Big(\big(B\otimes\CHII X\big)Y\psi\Big):=\big(B\otimes\Po(X)\big)U(\psi\otimes\xi )
$$
on $\mathcal V$ which is well defined and extends to an isometry $\ov W:\,\hi\otimes\hd\to\hi\otimes\hi'$ by the usual calculation:
\begin{eqnarray*}
\Big\|\ov W\Big(\sum_i\big(B_i\otimes\CHII {X_i}\big)Y\psi_i\Big)\Big\|^2
&=&\sum_{i,j}\big\<U(\psi_i\otimes\xi )\big|\big(
B_i^*B_j\otimes\Po(X_i\cap X_j)\big)U(\psi_j\otimes\xi )\big\>\\
&=&\sum_{i,j}\<Y\psi_i|\big(B_i^*B_j\otimes\CHII{X_i\cap X_j}\big)Y\psi_j\> =
\Big\|\sum_i\big(B_i\otimes\CHII {X_i}\big)Y\psi_i\Big\|^2.
\end{eqnarray*}
Since, %$\ov W^*\ov W=I_{\hi\otimes\hd}$ and
for all $B\in\lh$, $X\in\Sigma$, and $\big(B'\otimes\CHII{X'}\big)Y\psi\in\mathcal V$,
\begin{eqnarray*}
&&
\big(B\otimes\Po(X)\big)\ov W\Big(\big(B'\otimes\CHII{X'}\big)Y\psi\Big)=
\big(B\otimes\Po(X)\big)\big(B'\otimes\Po(X')\big)U(\psi\otimes\xi )\\
&&\qquad=\big(BB'\otimes\Po(X\cap X')\big)U(\psi\otimes\xi )
=\ov W\Big(\big(B\otimes\CHII X\big)\big(B'\otimes\CHII{X'}\big)Y\psi\Big)
\end{eqnarray*}
one gets 
$
\big(B\otimes\Po(X)\big)\ov W=\ov W\big(B\otimes\CHII X\big)
$
which implies (put $X=\Omega$) that
$
\big(B\otimes I_{\hi'}\big)\ov W=\ov W\big(B\otimes I_\hd\big)
$
and, hence, $\ov W=I_{\hi}\otimes W$ where $W:\,\hd\to\hi'$ is an isometry for which $\Po(X)W=W\CHII X$ for all $X\in\Sigma$.
Obviously, $(I_{\hi}\otimes W)Y\psi=U(\psi\otimes\xi )$ follows from the definition of $\ov W$.

Suppose then that
$W':\,\hd\to\hi'$ is such that
$\Po(X)W'=W'\CHII X$ for all $X\in\Sigma$ and
$(I_{\hi}\otimes W')Y\psi=U(\psi\otimes\xi )$ for all $\psi\in\hi$.
Then
$$
(I_\hi\otimes W')\Big(\big(B\otimes\CHII X\big)Y\psi\Big)=\big(B\otimes\Po(X)\big)U(\psi\otimes\xi )
=(I_\hi\otimes W)\Big(\big(B\otimes\CHII X\big)Y\psi\Big)
$$
so that $I_\hi\otimes W'=I_\hi\otimes W$ by the density of $\mathcal V$, and thus $W'=W$ showing the uniqueness of $W$.

Since $W^*W=I_{\hd}$, it follows that $R=WW^*$ is a projection on $\hi'$, and
$W\hd=R\hi'$. Thus, $R\hi'$ is unitarily equivalent with $\hd$, and
if also $\hi'$ is unitarily equivalent with $\hd$ it follows that 
$R\hi'$ is unitarily equivalent with $\hi'$ implying $R=I_{\hi'}$ and 
$W$ is unitary.
The remaining claims are trivial.
\end{proof}
Immediately we get the following corollary:

\begin{corollary}\label{purecorolla}
Let ${\bf M}=\<\hi',\Po,\kb{\xi}{\xi},U\>$ be a pure realization of $\M$.
Then ${\bf M}$ is minimal if and only if there exists a unitary operator $W:\,\hd\to\hi'$ such that 
\begin{itemize}
\item $W^*\Po(X)W=\CHII X$ for all $X\in\Sigma$,
\item $(I_\hi\otimes W)^*U(I_\hi\otimes W)=U_{\xi_\oplus}$ where $\xi_\oplus=W^*\xi$ and $U_{\xi_\oplus}\in\li(\hi\otimes\hd)$ is a unitary operator for which $U_{\xi_\oplus} (\psi\otimes{\xi_\oplus} )=Y\psi$ for all $\psi\in\hi$.
\end{itemize}
Hence, a minimal pure realization of $\M$ is always {\em unitarily equivalent} (and can be identified) with some 
${\bf M}^\M_{U_{\xi_\oplus} }=\<\hd,[X\mapsto\CHII X],\kb{\xi_\oplus} {\xi_\oplus} ,U_{\xi_\oplus} \>$.
\end{corollary}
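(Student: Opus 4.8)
The plan is to read both implications off Theorem~\ref{purereali} together with the Example immediately preceding it. That Example already shows, for any unit vector $\xi_\oplus\in\hd$, that ${\bf M}^\M_{U_{\xi_\oplus}}=\<\hd,[X\mapsto\CHII X],\kb{\xi_\oplus}{\xi_\oplus},U_{\xi_\oplus}\>$ is a pure realization of $\M$, and it is moreover minimal, since minimality of ${\bf M}^\M_{U_{\xi_\oplus}}$ is nothing but the assertion of Theorem~\ref{purereali} that $\hd$ embeds isometrically (via the canonical intertwining isometry produced there) into the ancillary space of \emph{every} pure realization of $\M$. The step I expect to be the main obstacle is, in the implication from minimality, proving that the canonical isometry $W$ of Theorem~\ref{purereali} is in fact unitary: in infinite dimensions an isometry between Hilbert spaces of equal dimension need not be onto, so instead of arguing surjectivity of $W$ directly I will route everything through the dimension inequality and the clause ``$W$ unitary $\Leftrightarrow$ the ancilla is unitarily equivalent to $\hd$'' already contained in Theorem~\ref{purereali}.

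$(\Leftarrow)$ Suppose a unitary $W\colon\hd\to\hi'$ with the two listed properties exists. Unitarity of $W$ turns $W^*\Po(X)W=\CHII X$ into $\Po(X)=W\CHII X W^*$; from $\xi_\oplus=W^*\xi$ we get $W\xi_\oplus=\xi$, hence $W\kb{\xi_\oplus}{\xi_\oplus}W^*=\kb\xi\xi$; and the second property gives $(I_\hi\otimes W)U_{\xi_\oplus}(I_\hi\otimes W)^*=U$. Thus $W$ (together with $I_\hi\otimes W$) implements a unitary equivalence of measuring processes carrying ${\bf M}^\M_{U_{\xi_\oplus}}$ onto ${\bf M}$. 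Since a unitary equivalence of measuring processes preserves minimality (compose any ancillary embedding with the implementing unitary) and ${\bf M}^\M_{U_{\xi_\oplus}}$ is minimal, ${\bf M}$ is minimal; this identification, read as a statement about ${\bf M}$, is also the concluding ``Hence'' clause of the corollary.

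$(\Rightarrow)$ Assume ${\bf M}$ minimal and apply Theorem~\ref{purereali} to it, obtaining the unique isometry $W\colon\hd\to\hi'$ with $\Po(X)W=W\CHII X$ and $(I_\hi\otimes W)Y\psi=U(\psi\otimes\xi)$. To see that $W$ is unitary, note first that Theorem~\ref{purereali} already yields $\dim\hi'\ge\dim\hd$; conversely the preceding Example provides a pure realization of $\M$ whose ancilla is the (nontrivial) space $\hd$, so minimality of ${\bf M}$ supplies an isometry $\hi'\to\hd$ and hence $\dim\hi'\le\dim\hd$. Therefore $\hi'$ and $\hd$ have the same Hilbert dimension, so they are unitarily equivalent, and the unitarity criterion of Theorem~\ref{purereali} forces $W$ to be unitary.

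It then remains to verify the two bullets with $\xi_\oplus:=W^*\xi$, which is a unit vector in $\hd$. The first is immediate: $W^*\Po(X)W=W^*W\CHII X=\CHII X$. For the second, put $U_{\xi_\oplus}:=(I_\hi\otimes W)^*U(I_\hi\otimes W)$, a unitary operator on $\hi\otimes\hd$; then $W\xi_\oplus=\xi$ and the intertwining relation give $U_{\xi_\oplus}(\psi\otimes\xi_\oplus)=(I_\hi\otimes W)^*U(\psi\otimes\xi)=(I_\hi\otimes W)^*(I_\hi\otimes W)Y\psi=Y\psi$ for every $\psi\in\hi$, so $U_{\xi_\oplus}$ is an operator of the kind named in the statement and the required identity $(I_\hi\otimes W)^*U(I_\hi\otimes W)=U_{\xi_\oplus}$ holds by construction.
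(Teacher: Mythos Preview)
Your proof is correct and is exactly the ``immediate'' deduction from Theorem~\ref{purereali} and the preceding Example that the paper intends (the paper gives no separate argument). You correctly route the unitarity of $W$ through the dimension inequality and the clause of Theorem~\ref{purereali} equating unitarity of $W$ with unitary equivalence of $\hi'$ and $\hd$, and then verify the two bullets by direct computation.
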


%\newpage

\begin{remark}\rm\label{rem7}
In this remark, we determine the structure of pure realizations which are not necessarily minimal.

Let ${\bf M}=\<\hi',\Po,\kb{\xi}{\xi},U\>$ be a pure realization of $\M$ with the isometry $W$ of Theorem \ref{purereali}. Let $R=WW^*$ be the projection $R:\,\hi'\to R\hi'\cong\hd$, $R^\perp:=I_{\hi'}-R$, and
$Y':=(I_\hi\otimes W)Y$.
It is easy to verify that, for all $X\in\Sigma$ and $B\in\lh$,
\begin{enumerate}
\item
$W^*\Po(X)W=\CHII X$,
\item 
$\Po(X)R=R\Po(X)=R\Po(X)R=W\CHII XW^*$,
\item
$\Po(X)=R\Po(X)R+R^\perp\Po(X)R^\perp$,
\item
$E_{\kb{\xi} {\xi} }\big[U ^*\big(B\otimes\Po(X)\big)U \big]
=Y'^*\big(B\otimes\Po(X)\big)Y'
=Y^*\big(B\otimes\CHII X\big)Y
=\M(X,B)$,
\item $(I_\hi\otimes R)Y'=Y'$,
\item $\M(X,B)=Y'^*\big(B\otimes R\Po(X)R\big)Y'$,
\item $(I_\hi\otimes R)U(\psi\otimes\xi)=U(\psi\otimes\xi)$, $\psi\in\hi$.
\end{enumerate}
We have two cases a) $\xi\in R\hi'$ and b) $\xi\notin R\hi'$.
In the a) case one sees that, if the compound system is in a (pure) state $\psi\otimes\xi\in\hi\otimes R\hi'$ before the measurement, then the measurement coupling $U$ transforms it to the state $U(\psi\otimes\xi)\in\hi\otimes R\hi'$ by item (7) above. 
Now, instead of $\Po$, one can equally well use the projected pointer observable (PVM) $\Sigma\ni X\mapsto R\Po(X) R\in\li(R\hi')$ of the smaller probe system, see (6); the PVM $X\mapsto R^\perp\Po(X)R^\perp$ is irrelevant in the measurement model $\bf M$, see (3).

In the case b), one can slightly modify $U$ in the following way:
Let $\tilde\xi\in R\hi$ be a unit vector and
$W_0\in\li(\hi')$ a unitary rotation operator which maps $\xi$ to $\tilde \xi$ 
$(=W_0\xi)$.
Define a unitary operator $\tilde U:=U(I_\hi\otimes W_0^*)$ for which $\tilde U(\psi\otimes\tilde \xi)=U(\psi\otimes\xi)$ for all $\psi\in\hi$.
From (7) one sees that 
$(I_\hi\otimes R)\tilde U(\psi\otimes\tilde \xi)=\tilde U(\psi\otimes\tilde\xi)$, $\psi\in\hi$.
Hence, we get a measuring process $\<\hi',\Po,\kb{\tilde\xi}{\tilde\xi},\tilde U\>$ of type a), $\tilde\xi\in R\hi'$, which is statistically equivalent with $\bf M$.
\end{remark}

\begin{example}[The standard model]\rm \label{stmo}
In some physically relevant cases, the measurement coupling $U$ is given and then one determines observables which can be measured by using $U$ and varying pointer observables and initial probe states. For example, in the (generalized) standard model of quantum measurement theory  $U$ is of the form $U=e^{i\lambda(A\otimes B)}$ (i.e.\ $U$ is the standard measurement coupling)
where $A$ is a selfadjoint operator of the system on which a measurement is perfomed, $B$ is a selfadjoint operator of the ancillary space, and $\lambda\in\R$ is a coupling constant (see, e.g.\ \cite{BuLa95,HaLaSc}). It can be shown \cite{BuLa95} that, when the pointer observable is a selfadjoint operator then one usually measures a smeared (or unsharp) version of $A$. An important example is a standard measurement of a (fuzzy) position observable.
In this example, we generalize the standard measurement model for arbitrary POVMs.

From Remark \ref{rem7} and Corollary \ref{purecorolla}, one sees that we can study a pure measurement model ${\bf M}=\<\hd,[X\mapsto \CHII X],\kb{\xi}{\xi},U\>$ where $\hd$ is an arbitrary direct integral Hilbert space and
$U$ is the standard measurement coupling, $U=e^{i\lambda(A\otimes B)}$.
Now $A$ and $B$ are selfadjoint operators of $\hi$ and $\hd$, respectively. 
Define 
$Y\psi:=U(\psi\otimes\xi)$ for all $\psi\in\hi$.
If $\Mo_A\in\O\big(\bo\R,\hi\big)$ is a spectral measure of $A$ we get, for all $\psi\in\hi$,
$$
Y\psi=\int_\R\Mo_A(\d a)\psi\otimes\xi(a)
$$
where $\xi(a):=e^{ia\lambda B}\xi$ is a unit vector for each $a\in\R$.
Now the realized instrument 
$\M(X,B)=Y^*(B\otimes\CHII X)Y$, $X\in\Sigma$, $B\in\lh$, or
\begin{eqnarray*}
\<\psi|\M(X,B)\psi\>&=&\<Y\psi|(B\otimes\CHII X)Y\psi\>
=
\int_\R\int_\R
\<\Mo_A(\d a)\psi\otimes\xi(a)|(B\otimes\CHII X)\Mo_A(\d a')\psi\otimes\xi(a')\>\\
&=&
\int_\R\int_\R
\<\xi(a)|\CHII X\xi(a')\>
\<\psi|\Mo_A(\d a)B\Mo_A(\d a')\psi\>.
\end{eqnarray*}
Especially, the measured observable $\Mo\in\O(\Sigma,\hi)$ is given by
$$
\<\psi|\Mo(X)\psi\>=
\<\psi|\M(X,I_\hi)\psi\>=
\int_\R\|\CHII X\xi(a)\|^2
\<\psi|\Mo_A(\d a)\psi\>.
$$
that is,
$$
\Mo(X)=\int_\R K(X,a)\Mo_A(\d a)
$$
where $K:\,\Sigma\times\R\to[0,1]$,
$$
K(X,a):=\|\CHII X\xi(a)\|^2=\big\|\CHII Xe^{ia\lambda B}\xi\big\|^2
=\int_X k(x,a) \d\mu(x),
$$
is a {\it Markov kernel} (or a conditional or transition probability)
{\it with the density} $k(x,a):=\big\|\big(e^{ia\lambda B}\xi\big)(x)\big\|^2$:
for all $a\in\R$, 
$$
\Sigma\ni X\mapsto K(X,a)\in[0,1]
$$
is a probability measure and, for all $X\in\Sigma$,
$$
\R\ni a\mapsto K(X,a)\in[0,1]
$$
is continuous.
If $B$ (its spectral measure) commutes with the canonical spectral measure $X\mapsto\CHII X$ then $K(X,a)=\|\CHII X\xi\|^2$ does not depend on $a$ and the measured POVM $\Mo(X)=\|\CHII X\xi\|^2I_\hi$ is trivial. Hence, to measure some nontrivial POVM, 
the probe observables $B$ and $X\mapsto\CHII X$ must be `complementary' in some sense (e.g.\ position and momentum \cite{BuLa95}).

Finally we note that, on the first hand,
since (as a PVM) $\Mo_A$ is commutative, $\Mo$ is also commutative. On the other hand, some important POVMs (e.g.\ the canonical phase observable) are (even totally) noncommutative.
We may conclude that the standard measurement couplings cannot be used in any measurement of  noncommutative observables.
\end{example}

\subsection{Posterior states}\label{postpost}
Let $\M\in\In(\Sigma,\ki,\hi)$ with the (minimal) point- and setwise Kraus operators of Theorem \ref{th2}, i.e., we may write either
\begin{eqnarray*}
\<\fii|\M(X,B)\psi\>
= \int_X \sum_{k=1}^{n(x)} \<\A_k(x)\fii|B\A_k(x)\psi\>\d\mu(x),
\hspace{0.5cm}\fii,\,\psi\in V_\e,
\end{eqnarray*}
or
$$
\M(X,B)=\sum_{k=1}^{n(X)} \A_k(X)^*B\A_k(X)
\qquad\text{(ultraweakly) }
$$ 
where,
for all $x\in\Omega$, the operators $\A_k(x):\,V_\e\to\ki$ are linearly independent, and
for all $X\in\Sigma$, the bounded operators $\A_k(X):\,\hi\to\ki$ are linearly independent.
Note that, usually in quantum measurement theory, one assumes $\ki=\hi$.

Let $\rho\in\sh$ be a (fixed) `initial state' of the system. 
Then the `measurement' probability measure is
$$
X\mapsto\mu^\M_\rho(X):=\tr{\rho\Mo_{I_\ki}(X)}=\tr{\rho\M(X,I_\ki)}=\tr{\M_*(X,\rho)}.
$$
Denote by $w_\rho$ the (nonnegative) Radon-Nikod\'ym derivative of $\mu^\M_\rho$ with respect to $\mu$, i.e.\ $\d\mu^\M_\rho(x)=w_\rho(x)\d\mu(x)$.
Following \cite{Oz85}, we define
a {\it conditional expectation 
$\{E(B|x)\,|\,x\in\Omega\}$ of $B\in\lk$ with respect to} $(\M,\rho)$ as a (complex-valued) measurable function 
$x\mapsto E(B|x)$ such that, for all $X\in\Sigma$,
$$
\int_X E(B|x)\d\mu^\M_\rho(x)=\tr{\rho\M(X,B)}=\tr{B\M_*(X,\rho)},
$$
and we call $\{\rho_x\,|\,x\in\Omega\}\subseteq\mathcal S(\ki)$ a family of {\it posterior states with respect to $(\M,\rho)$} if, for all $B\in\lh$, one has $\tr{\rho_x B}=E(B|x)$ for $\mu^\M_\rho$-almost all $x\in\Omega$.
If $\{\rho_x'\,|\,x\in\Omega\}\subseteq\mathcal S(\ki)$ is another family of posterior states then $\rho_x'=\rho_x$ for $\mu^\M_\rho$-almost all $x\in\Omega$.

Let $\e=\{h_n\}$ be an ON basis of $\hi$ such that it diagonalizes $\rho$, that is,
$$
\rho=\sum_{n=1}^{\dim\hi}\la_n\kb{h_n}{h_n},\qquad\la_n\ge0,\quad\sum_{n=1}^{\dim\hi}\la_n=1.
$$
Now, for each positive $B\in\lk$, 
\begin{eqnarray*}
\int_X\tr{\rho_xB}w_\rho(x)\d\mu(x)&=&\tr{\rho\M(X,B)}
=\sum_{n=1}^{\dim\hi}\lambda_n\<h_n|\M(X,B)h_n\> \\
&=&
\sum_{n=1}^{\dim\hi}\lambda_n \int_X \sum_{k=1}^{n(x)} \<\A_k(x)h_n|B\A_k(x)h_n\>\d\mu(x)
\end{eqnarray*}
for all $X\in\Sigma$, so that 
$$
\tr{\rho_xB}w_\rho(x)=\sum_{n=1}^{\dim\hi}\lambda_n\sum_{k=1}^{n(x)} \<\A_k(x)h_n|B\A_k(x)h_n\>
$$
for $\mu$-almost all $x\in\Omega$ by the monotonic convergence theorem. 
For $B=I_\ki$, one gets 
$$
w_\rho(x)=\sum_{n=1}^{\dim\hi}\lambda_n \sum_{k=1}^{n(x)} \|\A_k(x)h_n\|^2
$$
for $\mu$-almost all $x\in\Omega$.
Since any $B\in\lh$ can be decomposed into positive parts, 
one sees that, if $w_\rho(x)\ne 0$, then the posterior state corresponding to the `outcome' $x\in\Omega$ can be chosen to be
$$
\rho_x=w_\rho(x)^{-1}\sum_{n=1}^{\dim\hi}\lambda_n \sum_{k=1}^{n(x)}\kb{\A_k(x)h_n}{\A_k(x)h_n}=
w_\rho(x)^{-1}\sum_{k=1}^{n(x)}\A_k(x)\rho\A_k(x)^*\qquad\text{(formally)}.
$$
This is a generalization of Holevo's result \cite{Ho98}.
Sometimes (in the case $\ki=\hi$) $\rho_x$ is interpreted as a state conditioned upon an outcome $x$ of the measurement described by $\M$, that is,
this measurement of the POVM $\Mo_{I_\hi}$ changes the initial state $\rho$ into a posterior state $\rho_x$ if the value $x$ is observed \cite{Oz85,Ho98}.
This intrepretation is problematic since, on the first hand, $\rho_x$ is not necessarily unique. On the other hand, 
even if $\{x\}\in\Sigma$ it may happen that $\mu(\{x\})=0$ (e.g.\ position observables). Then it is better to
define a {\it conditional output state} 
$$
\rho_X=\M_*(X,\rho)/\tr{\M_*(X,\rho)}=\mu^\M_\rho(X)^{-1}\sum_{k=1}^{n(X)}\A_k(X)\rho\A_k(X)^*\in\mathcal S(\ki)
$$
{\it corresponding to a set} $X\in\Sigma$ of outcomes (if $\tr{\M_*(X,\rho)}>0$), which describes the state, at the instant after the measurement, of the subensemble of the measured system in which the outcomes of the measurement lie in $X$.
However, even if $X\subset X'$ it often happens that $\rho_{X}\ne\rho_{X'}$. %Hence, the choice of registering the values in $X$ is trivially also the choice of registering them in the larger set $X'$ and the conditional states should be the same!

Notice that
$$
\rho_X=\mu^\M_\rho(X)^{-1}\int_X\rho_x\d\mu^\M_\rho(x)
$$
so that $\rho_{\{x\}}=\rho_x$ if and only if $\{x\}\in\Sigma$ and $\mu^\M_\rho(\{x\})>0$.
Hence, in the discrete case, there is no above mentioned problems. The most familiar example is of course 
the (strongly repeatable) von Neumann-L\"uders instrument
$$
\M(X,B)=\sum_{i\in X}P_iBP_i
$$
where $\{P_i\}$ constitute a discrete PVM (see, e.g.\ \cite{DaLe,Da,Oz84,BuLaMi}). Now
 if the value $i$ is observed then the state at the instant after the measurement is 
$$
\rho_i=\rho_{\{i\}}=\frac{P_i\rho P_i}{\tr{\rho P_i}}.
$$
Finally, we recall that the von Neumann-L\"uders instrument is extreme \cite{DAPeSe}.

%von Neumann: "PVM can be measured with absolute precision if and only if its spectrum is discrete" \cite{DaLe}
%von Neumann: Discrete  PVM $\Mo$. 

\section{Refinements and rank 1 instruments}

Let $\M\in\In(\Sigma,\ki,\hi)$, and $\hd=\int_\Omega^\oplus\hi_{n(x)}\d\mu(x)$ the direct integral and operators $\A_k(x)$ the pointwise Kraus operators of Theorem \ref{th2} associated to $\M$.
Let $${\bf N}:=\{k\in\N_\infty\,|\,1\le k \le\dim\hi\dim\ki\}$$
and $\#:\,2^{\bf N}\to\N_\infty$ be the counting measure, i.e.\ $\#N$ is the number of the elements of $N\subseteq\bf N$. Let $\#\times\mu$ be the product measure defined on the product $\sigma$-algebra $\Sigma_{\bf N}\subseteq 2^{{\bf N}\times\Omega}$ of $2^{\bf N}$ and $\Sigma$. 
Define $\A(k,x):=\A_k(x)$ when $k\le n(x)$ and $\A(k,x):=0$ for $k>n(x)$. Then, for all $\fii,\,\psi\in V_\e$,
 \begin{eqnarray*}
\<\fii|\M(X,B)\psi\>
&=& \int_X \sum_{k=1}^{n(x)} \<\fii|\A_k(x)^*B\A_k(x)\psi\>\d\mu(x) \\
&=& \int_{{\bf N}\times X} \<\fii|\A(k,x)^*B\A(k,x)\psi\>\d(\#\times\mu)(k,x) \\
\end{eqnarray*}
and we can define the {\it maximally refined} (rank-1) instrument $\M_1\in\In(\Sigma_{\bf N},\ki,\hi)$ of $\M$ by
$$
\<\fii|\M_1(\ov X,B)\psi\>:=\int_{\ov X} \<\fii|\A(k,x)^*B\A(k,x)\psi\>\d(\#\times\mu)(k,x),\quad\fii,\,\psi\in V_\e,\;\ov X\in\Sigma_{\bf N},\;B\in\lk.
$$
\begin{proposition}\label{jgjhfjoooot}
If $\M\in\ext\In(\Sigma,\,\ki,\,\hi)$
then $\M_1\in\ext\In(\Sigma_{\bf N},\,\ki,\,\hi)$.
\end{proposition}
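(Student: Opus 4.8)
The plan is to verify for $\M_1$ the extremality criterion of item~(5) of Theorem~\ref{th2} (equivalently, the last displayed form in Remark~\ref{exrem}) and to reduce it to the assumed extremality of $\M$. First I would record the dilation data of $\M_1$. The measure $\#\times\mu$ is $\sigma$-finite (a counting measure on a countable set times a $\sigma$-finite measure), and the associate POVM $\ov X\mapsto\M_1(\ov X,I_\ki)=\int_{\ov X}\A(k,x)^*\A(k,x)\,\d(\#\times\mu)(k,x)$ is absolutely continuous with respect to it, so Theorem~\ref{th2} applies to $\M_1$ with this measure. Moreover, the defining formula for $\M_1$ already exhibits a pointwise Kraus form whose family of operators at the point $(k,x)$ is the singleton $\{\A(k,x)\}=\{\A_k(x)\}$ when $k\le n(x)$ and is empty when $k>n(x)$; since each $\A_k(x)$ with $k\le n(x)$ is nonzero (the $\A_k(x)$ being linearly independent by item~(2) of Theorem~\ref{th2}), a one-element set consisting of a nonzero operator is linearly independent, so this \emph{is} a minimal pointwise Kraus form of $\M_1$. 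Because the number of operators in a minimal pointwise Kraus form is the pointwise rank, the direct integral $\hd_1$ attached to $\M_1$ by Theorem~\ref{th2} has all fibers of dimension at most one, and a decomposable operator on $\hd_1$ is nothing but a $(\#\times\mu)$-essentially bounded scalar multiplier, which I write as $D_1\leftrightarrow g$ with $g(k,x)$ defined for $k\le n(x)$.

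Next I would translate the extremality condition of item~(5) for such a $D_1$. Using $\sum_{s}\kb{d_k^s(x)}{d_k^s(x)}=\A_k(x)^*\A_k(x)$ (as a form on $V_\e$, exactly as in Remark~\ref{exrem}), that condition becomes
$$
\int_\Omega\sum_{k=1}^{n(x)} g(k,x)\,\<\fii|\A_k(x)^*\A_k(x)\psi\>\,\d\mu(x)=0,\qquad\fii,\,\psi\in V_\e.
$$
The decisive observation is that this is \emph{precisely} the extremality condition of Remark~\ref{exrem} for $\M$ itself, evaluated at the diagonal decomposable operator $D=\int_\Omega^\oplus D(x)\,\d\mu(x)\in\mathcal D(\hd)$ with $D(x):=\sum_{k=1}^{n(x)}g(k,x)\,\kb{b_k}{b_k}$: indeed $D(x)_{kl}=g(k,x)\delta_{kl}$, and $\|D(x)\|=\sup_{k\le n(x)}|g(k,x)|\le\|D_1\|<\infty$ for $\mu$-almost every $x$ (so $D$ really is a bounded decomposable operator), whence $\sum_{k,l}D(x)_{kl}\<\fii|\A_k(x)^*\A_l(x)\psi\>$ collapses to the integrand above.

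Hence, if $g$ satisfies the displayed identity, then $D$ satisfies the hypothesis of the extremality criterion for $\M$; since $\M\in\ext\In(\Sigma,\ki,\hi)$, we conclude $D=0$, i.e.\ $g(k,x)=0$ for $\mu$-almost every $x$ and every $k\le n(x)$, so $g=0$ $(\#\times\mu)$-almost everywhere and $D_1=0$. By item~(5) of Theorem~\ref{th2} applied to $\M_1$ this yields $\M_1\in\ext\In(\Sigma_{\bf N},\ki,\hi)$. I expect the only genuinely delicate point to be the bookkeeping in the first paragraph: establishing that the one-operator-per-point Kraus form is a minimal dilation of $\M_1$, so that decomposable operators on $\hd_1$ are exactly the scalar multipliers and the set of test operators for $\M_1$ embeds into that for $\M$ as the diagonal operators. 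Once this is in place, the statement is little more than the remark that extremality of $\M$ concerns \emph{all} decomposable operators on $\hd$, hence a fortiori the diagonal ones that control $\M_1$.
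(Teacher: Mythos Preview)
Your proof is correct and follows essentially the same route as the paper: both arguments identify the extremality test for $\M_1$ with the scalar-multiplier condition $\int_\Omega\sum_{k}g(k,x)\<\fii|\A_k(x)^*\A_k(x)\psi\>\d\mu(x)=0$, then embed this as the diagonal case $D(x)_{kl}=\delta_{kl}g(k,x)$ of the extremality test for $\M$ from Remark~\ref{exrem}. You are actually more careful than the paper in the first paragraph, explicitly verifying that the single-operator-per-point pointwise Kraus form of $\M_1$ is minimal (so that decomposable operators on $\hd_1$ really are scalar multipliers), a point the paper's proof leaves implicit.
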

\begin{proof} 
Let $\M\in\In(\Sigma,\,\ki,\,\hi)$ be extreme.
From Remark \ref{exrem} one sees that,
for any decomposable operator $D\in\li(\hd)$, the condition
$$
\int_\Omega\sum_{k,l=1}^{n(x)}D(x)_{kl}\<\fii|\A_k(x)^*\A_l(x)|\psi\>\d\mu(x)=0,
\hspace{0.5cm} \fii,\,\psi\in V_\e,
$$
implies $D=0$. Especially, by choosing all operators $D(x)$ diagonal, i.e.,
$D(x)_{kl}=\delta_{kl}d(k,x)$ for all $x\in\Omega$ 
(where $d\in L^\infty(\#\times\mu)$ is such that 
$\|D\|={\rm ess\,sup}\{|d(k,x)|\,|\,x\in\Omega,\,k< n(x)+1\}<\infty$)
 we get that the condition
$$
\ \int_{{\bf N}\times \Omega}d(k,x)\<\fii|\A(k,x)^*\A(k,x)\psi\>\d(\#\times\mu)(k,x)=
\int_\Omega\sum_{k,l=1}^{n(x)}\delta_{kl}f_k(x)\<\fii|\A_k(x)^*\A_l(x)|\psi\>\d\mu(x)=0
$$
implies $d=0$. Hence,
$\M_1\in\ext\In(\Sigma_{\bf N},\,\ki,\,\hi)$ by Remark \ref{exrem}.
\end{proof}

As a special case ($\ki=\C$), one sees that any POVM can be maximally refined (recall also Example \ref{ExPOVM}).
Thus, next we consider the instruments and measuring processes of rank-1 POVMs. Note that important examples of rank-1 POVMs are position and momentum observables (of a spin 0 particle moving on a space manifold), rotated quadratures, phase space observables generated by pure states, the canonical phase observable of a single mode electromagnetic field, and many important discrete observables. Often their related instruments are also of rank-1 (see, e.g.\ Section 4.6 of \cite{Da}).

Let $\Mo\in\O(\Sigma,\hi)$ be a rank-1 POVM, that is, in Theorem \ref{th1}, $n(x)\in\{0,1\}$ for all $x\in\Omega$, $\hi_1\cong\C$, $\hi_0=\{0\}$,
$\hd=L^2(\mu)$, and for all $X\in\Sigma$ we have
$$
\Mo(X)=
\sum_{n,m=1}^{\dim\hi}\int_{X}\ov{\psi_n(x)}\psi_m(x)\d\mu(x)\kb{h_n}{h_m} 
$$
or
$$
\<\fii|\Mo(X)\psi\>=\int_X  \<\fii|d_1(x)\>\<d_1(x)|\psi\>\d\mu(x),\hspace{0.5cm}\fii,\,\psi\in V_\e.
$$
Let $\M\in\In(\Sigma,\ki,\hi)$. Then, by Theorem \ref{compa},
$\M(X,I_\ki)\equiv\Mo(X)$ if and only if
there exists a decomposable CP channel $T:\,\lk\to\li\big(L^2(\mu)\big)$,
$B\mapsto T(B)=\int_\Omega^\oplus\tr{\sigma_xB}\d\mu(x)$, where $\sigma_x\in\mathcal S(\ki)$ are states for $\mu$-almost all $x\in\Omega$, such that
$$
\M(X,B)\equiv 
\sum_{n,m=1}^{\dim\hi}\int_{X}\tr{\sigma_xB}\ov{\psi_n(x)}{\psi_m(x)}\d\mu(x)\kb{h_n}{h_m}
=\int_X\tr{\sigma_xB}\d\Mo(x).
$$
Any channel $B\mapsto\tr{\sigma_xB}$ above can be chosen to be of the form (see, Theorem \ref{compa})
$$
\tr{\sigma_xB}=\big\<\eta_x\big|(B\otimes I_{\hi'_{n'(x)}})\eta_x\big\>,\qquad B\in\lk,
$$
where $\eta_x\in \ki\otimes\hi'_{n'(x)}$ is a unit vector and
$\hi'_\oplus=\int_\Omega^\oplus\hi'_{n'(x)}\d\mu(x)$ is the direct integral Hilbert space of Theorem \ref{th2} associated to $\M$. 
Hence, we have shown that {\it any compatible instrument of a rank-1 POVM is a nuclear instrument} (see, Example \ref{nuclearex}) and thus can be identified with (the equivalence class of) $\{\sigma_x\}_{x\in\Omega}$.
By writing (for $\mu$-almost all $x\in\Omega$)
$$
\eta_x=\sum_{k=1}^{n'(x)}\fii_k(x)\otimes b_k
$$
we see that
$$
\tr{\sigma_xB}
=\sum_{k=1}^{n'(x)}\<\fii_k(x)|B\fii_k(x)\>,\qquad
\sigma_x=\sum_{k=1}^{n'(x)}\kb{\fii_k(x)}{\fii_k(x)}
$$
and $n'(x)$ is the rank of the state $\sigma_x$.
Hence, $\M$ is of rank-1 (i.e.\ $n'(x)\in\{0,1\}$) if and only if $\sigma_x$ is a pure state for $\mu$-almost everywhere.
In this case one may choose $\hi'_\oplus=\hd$.

One can also use another decomposition
$$
\eta_x=\sum_{s=1}^{\dim\ki}k_s\otimes\eta^s(x),\qquad
\sum_{s=1}^{\dim\ki}\|\eta^s(x)\|^2=1,
$$
to get 
$$
\tr{\sigma_xB}=\sum_{s,t=1}^{\dim\ki}B_{st}\<\eta^s(x)|\eta^t(x)\>,
\qquad \sigma_x=\sum_{s,t=1}^{\dim\ki}\<\eta^s(x)|\eta^t(x)\>\kb{k_t}{k_s},
$$
and
$$
\M(X,B)\equiv
\sum_{n,m=1}^{\dim\hi}
\sum_{s,t=1}^{\dim\ki} 
\int_{X}B_{st}\<\psi_n(x)\eta^s(x)|\psi_m(x)\eta^t(x)\>\d\mu(x)\kb{h_n}{h_m}
$$
so that $\psi^s_n(x)=\psi_n(x)\eta^s(x)\in\hi'_{n'(x)}$, 
$\sum_s k_s\otimes\psi_n ^s(x)=\psi_n(x)\eta_x$,
and the generalized vectors of $\M$ can be chosen to be
$$
d_k^s(x)=c^x_{1,sk}d_1(x),\qquad c^x_{1,sk}=\<\eta^s(x)|b_k\>,\qquad\sum_{s,k}|c^x_{1,sk}|=1.
$$
If $\M$ is also of rank 1 then $\eta_x\in\ki\cong\ki\otimes\hi'_{1}$,
$\eta^s(x)\in\C$, and $\sigma_x=\kb{\eta_x}{\eta_x}$ for $\mu$-almost everywhere.
Note that, following Holevo \cite{Ho08}, one sees that {\it any EB channel can be seen as a rank-1 nuclear instrument whose associate observable is of rank-1.}

\begin{theorem}\label{hnasgcbnsjcd}
Let $\Mo\in\O(\Sigma,\hi)$ be a rank-1 POVM and $\M\in\In(\Sigma,\ki,\hi)$ any $\Mo$-compatible instrument. 
Then $\M$ is nuclear.
If $\M\in\ext\In(\Sigma,\ki,\hi)$ then $\Mo\in\ext\O(\Sigma,\hi)$.
If $\M$ is also of rank 1, then $\Mo\in\ext\O(\Sigma,\hi)$ implies 
$\M\in\ext\In(\Sigma,\ki,\hi)$.
\end{theorem}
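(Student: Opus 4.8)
The plan is to handle the three assertions in turn, reducing each to the nuclear description of $\Mo$-compatible instruments obtained in the discussion preceding the statement together with the extremality criteria of Theorems~\ref{th1} and \ref{th2}; throughout I use, as in the text, that for a rank-1 POVM $\mu$ is chosen so that $n(x)=1$ for $\mu$-a.e.\ $x$, hence $\hd=L^2(\mu)$, $d_1(x)\ne0$ $\mu$-a.e., and $\<\fii|\Mo(X)\psi\>=\int_X\<\fii|d_1(x)\>\<d_1(x)|\psi\>\,\d\mu(x)$. The first assertion is essentially already proved above: since $\Mo$ is rank-1, the fibres $\hi_{n(x)}$ in Theorem~\ref{th1} are $\cong\C$ for $\mu$-a.e.\ $x$, so every fibre channel $T_x:\lk\to\li(\hi_{n(x)})\cong\C$ in Theorem~\ref{compa} is a CP channel into $\C$, i.e.\ $T_x(B)=\tr{\sigma_x B}$ for a $\mu$-measurable family of states $\sigma_x\in\mathcal S(\ki)$, so that $\M(X,B)=\int_X\tr{\sigma_x B}\,\d\Mo(x)$ is nuclear in the sense of Example~\ref{nuclearex}; I would just record this. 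I expect the only real friction throughout to be bookkeeping with the ``$\mu$-a.e.'' quantifiers (especially exploiting $d_1(x)\ne0$ a.e.), the passage from ``$=0$ for every $X$, tested against all $\fii,\psi\in V_\e$'' to ``the integrand vanishes a.e.'', and the countable norm-dense choice of test operators $B$ (legitimate since $\hi,\ki$ are separable).

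For ``$\M$ extreme $\Rightarrow$ $\Mo$ extreme'' I would argue the contrapositive. If $\Mo\notin\ext\O(\Sigma,\hi)$ then, by Theorem~\ref{th1}(5) and the fact that the decomposable operators on $\hd=L^2(\mu)$ are exactly the multiplications $\hat f$, there is $f\in L^\infty(\mu)$, which may be taken real with $\|f\|_\infty\le1$, such that $\hat f\ne0$ while $\int_\Omega f\,\d\Mo=0$. Then $\Mo^\pm(X):=\int_X(1\pm f(x))\,\d\Mo(x)$ are POVMs with $\Mo=\tfrac12(\Mo^++\Mo^-)$, and the maps
$$\M^\pm(X,B):=\int_X(1\pm f(x))\tr{\sigma_x B}\,\d\Mo(x)$$
are instruments with $\M=\tfrac12(\M^++\M^-)$: complete positivity of $B\mapsto\M^\pm(X,B)$ is clear because the weight $1\pm f$ is nonnegative, normality is clear, and $\M^\pm(\Omega,I_\ki)=\Mo(\Omega)\pm\int_\Omega f\,\d\Mo=I_\hi$. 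They are distinct, since $\M^+=\M^-$ would force $\int_X f\,\d\Mo=0$ for every $X\in\Sigma$, i.e.\ $f(x)\<\fii|d_1(x)\>\<d_1(x)|\psi\>=0$ $\mu$-a.e.\ for all $\fii,\psi\in V_\e$, whence (testing at $\fii=\psi=h_n$, taking the union over $n$, and using $d_1(x)\ne0$ a.e.) $f=0$ $\mu$-a.e., contradicting $\hat f\ne0$. Hence $\M\notin\ext\In(\Sigma,\ki,\hi)$. Equivalently, one may feed $D:=\int_\Omega^\oplus f(x)I_{\hi'_{n'(x)}}\,\d\mu(x)$ into Theorem~\ref{th2}(4): nuclearity of $\M$ gives $Y^*(I_\ki\otimes D)Y=\int_\Omega f(x)\,\d\Mo(x)=0$ while $D\ne0$, because $\sigma_x$ is a state and so each fibre $\hi'_{n'(x)}$ is nonzero $\mu$-a.e.

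For the last assertion, assume in addition that $\M$ is rank-1, so that $\sigma_x=\kb{\eta_x}{\eta_x}$ is pure for $\mu$-a.e.\ $x$, and that $\Mo\in\ext\O(\Sigma,\hi)$. Suppose $\M=t\M_1+(1-t)\M_2$ with $0<t<1$ and $\M_i\in\In(\Sigma,\ki,\hi)$. Setting $B=I_\ki$ gives $\Mo=t\Mo^{(1)}+(1-t)\Mo^{(2)}$ with $\Mo^{(i)}:=\M_i(\cdot,I_\ki)$, so extremality of $\Mo$ forces $\Mo^{(1)}=\Mo^{(2)}=\Mo$; thus $\M_1,\M_2$ are $\Mo$-compatible and, by the first part, nuclear, say $\M_i(X,B)=\int_X\tr{\sigma_x^{(i)}B}\,\d\Mo(x)$ with $\sigma_x^{(i)}\in\mathcal S(\ki)$. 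Comparing with $\M=t\M_1+(1-t)\M_2$ and running the a.e.\ argument once more — the function $\big(\tr{(\sigma_x-t\sigma_x^{(1)}-(1-t)\sigma_x^{(2)})B}\big)\<\fii|d_1(x)\>\<d_1(x)|\psi\>$ vanishes $\mu$-a.e.\ for all $\fii,\psi\in V_\e$ and all $B$, and then testing at $\fii=\psi=h_n$ and letting $B$ range over a countable norm-dense subset of $\lk$ — yields $\sigma_x=t\sigma_x^{(1)}+(1-t)\sigma_x^{(2)}$ for $\mu$-a.e.\ $x$. Since $\sigma_x$ is pure, hence an extreme point of $\mathcal S(\ki)$, this gives $\sigma_x^{(1)}=\sigma_x^{(2)}=\sigma_x$ a.e., so $\M_1=\M_2$ and $\M\in\ext\In(\Sigma,\ki,\hi)$, as required.
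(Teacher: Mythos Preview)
Your proof is correct. Parts~1 and~2 coincide with the paper's argument: the paper also uses Theorem~\ref{compa} with one-dimensional fibres to conclude nuclearity, and for the implication $\M$ extreme $\Rightarrow$ $\Mo$ extreme it substitutes $D(x)=d(x)I_{\hi'_{n'(x)}}$ into the extremality criterion of Remark~\ref{exrem} (your ``alternative'' formulation), obtaining that $\int_\Omega d\,\d\Mo=0$ must imply $d=0$. Your first, explicit $\M^\pm$ construction is the same thing unwound into an actual convex decomposition.

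Part~3 is where you genuinely diverge. The paper observes that when both $\Mo$ and $\M$ are rank~1, the direct integral $\hi'_\oplus$ of Theorem~\ref{th2} reduces to $L^2(\mu)$, so the decomposable operators $D$ are just multiplications by $d\in L^\infty(\mu)$; plugging $\psi_n^s(x)=\psi_n(x)\eta^s(x)$ with $\eta^s(x)\in\C$ into the extremality condition of Remark~\ref{exrem} collapses it \emph{exactly} to the extremality condition for $\Mo$ from Theorem~\ref{th1}(5). Thus the two extremality criteria literally coincide, and the implication is immediate. Your route is instead a direct convex argument at the level of the nuclear representation: a decomposition of $\M$ forces a decomposition of $\Mo$, hence (by extremality of $\Mo$) the pieces are again $\Mo$-compatible and nuclear, and then purity of $\sigma_x$ in $\mathcal S(\ki)$ rules out a nontrivial fibrewise decomposition. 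This is more elementary in that it never unpacks the Stinespring-side criterion for $\M$, but it costs you the measure-theoretic bookkeeping (countable dense $B$'s, the $d_1(x)\ne0$ a.e.\ argument) that the paper avoids by working directly with the dilation. Both are clean; the paper's version makes the structural reason --- that the two extremality conditions are identical --- more transparent.
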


\begin{proof}
Since $\M\in\ext\In(\Sigma,\ki,\hi)$ if and only if the condition 
\begin{equation}\label{hfheivndkibvhdfidj}
\sum_{s=1}^{\dim\ki} \int_\Omega\<\psi_n^s(x)|D(x)\psi_m^s(x)\>\d\mu(x)=\int_\Omega\sum_{s=1}^{\dim\ki}
\<\eta^s(x)|D(x)\eta^s(x)\>\ov{\psi_n(x)}\psi_m(x)\d\mu(x)
=0
\end{equation}
for all $n,\,m$,
implies $D=0$ (see, Remark \ref{exrem}).
Let $d\in L^\infty(\mu)$ and set $D(x)=d(x)I_{\hi'_{n'(x)}}$ to get that
$$
\int_\Omega d(x)\d\Mo(x)=0
$$
implies $d=0$, i.e.\ $\Mo$ is extreme. 

From \eqref{hfheivndkibvhdfidj}, one sees that
if $\Mo$ and (an $\Mo$-compatible) $\M$ are of rank 1 then their extremality conditions are exactly the same condition:
$\int_\Omega d(x)\d\Mo(x)=0$ (where $d\in L^\infty(\mu)$) implies $d=0$.  
\end{proof}

%$$
%\M_*(X,\rho)=\sum_{s,t=1}^{\dim\ki} 
%\sum_{n,m=1}^{\dim\hi}
%\int_{X}\rho_{mn}\<\eta^s(x)\otimes \psi_n(x)|\eta^t(x)\otimes \psi_m(x)\>\d\mu(x)\kb{k_t}{k_s}
%$$
%
%
%
%
%\begin{align*}
%\M_*(X,\rho)={\rm tr}_2\big[U(\rho\otimes\sigma)U^*\big(I_\hi\otimes\CHII X\big)\big]
%=
%\sum_{n,m=1}^{\dim\hi}\rho_{mn}
%{\rm tr}_2\big[U(\kb{h_m}{h_n}\otimes\sigma)U^*\big(I_\hi\otimes\CHII X\big)\big]
%\end{align*}
%
%$$
%{\rm tr}_2\big[U(\kb{h_m}{h_n}\otimes\sigma)U^*\big(I_\hi\otimes\CHII X\big)\big]
%=
%\sum_{s,t}
%\<\eta^s\otimes \psi_n|\CHII X\eta^t\otimes \psi_m\>\kb{h_t}{h_s}
%$$
%
%$$
%U(\kb{h_m}{h_n}\otimes\sigma)U^*
%=\sum_{s,t}\kb{h_t}{h_s}\otimes\kb{\eta^t\otimes \psi_m}{\eta^s\otimes \psi_n}
%=\sum_{s,t}\kb{h_t\otimes\eta^t\otimes \psi_m}{h_s\otimes\eta^s\otimes \psi_n}
%$$
%
%$$
%U(x)(h_n\otimes SHIT)=\sum_s h_s\otimes\eta^s(x)\otimes\psi_n(x)
%$$
%yleisesti
%$$
%U(x)(h_n\otimes SHIT)=\sum_s h_s\otimes\psi_n^s
%$$

\section{Modules over $C^*$-algebras and extreme kernels and CP-maps}

In this section, we follow \cite{PeYl}.
We let $A$ and $A_e$ be $C^*$-algebras and assume that $A_e$ is unital with the unit $e\in A_e$.

If $M$ is a (right) Hilbert $C^*$-module over $A$ and $B:\,M\to M$ a bounded $A$-linear\footnote{We assume implicitly that $A$-(sesqui)linear maps are also $\C$-(sesqui)linear.} map then $B$ is a positive element of the $C^*$-algebra $L(M)$ of adjointable maps on $M$ (i.e.\ $B\ge 0$) 
if and only if $\<v|Bv\>\ge 0$ for all $v\in M$
if and only if $B=C^*C$ where $C:\,M\to M'$ is adjointable (and $M'$ a Hilbert $C^*$-module over $A$)
\cite[Proposition 2.1.3]{Manuilov}. We denote $B\le B'$ if $B'-B\ge 0$, and let $I_M$ be the identity (operator) of $L(M)$.

Let $n\in\N_+$ and $M_n(A)$ be the matrix
$C^*$-algebra consisting of the $A$-valued $n\times n$--matrices
$(a_{ij})_{i,j=1}^n$. 
An element $(a_{ij})_{i,j=1}^n$ of $M_n(A)$ is
positive if and only if $\sum_{i,j=1}^n a_i^*a_{ij} a_j\ge 0$ for
all $a_1,\ldots,a_n\in A$. 
%Note that, for any positive matrix $(a_{ij})_{i,j=1}^n$, $a_{ij}^*=a_{ji}$ and $\sum_{i,j=1}^n a_{ij}\ge 0$.

Let $V$ be an $A$-module (i.e.\ a right module over the algebra $A$) and $S_A(V)$ the $\C$-linear space of $A$-sesquilinear maps $s:\,V\times V\to A$. Let
$M_n\big(S_A(V)\big)$ be the
$\C$-linear space of $n\times n$--matrices $(s_{ij})_{i,j=1}^n$ where
the matrix elements $s_{ij}$ belong to 
to $S_A(V)$. Note that the matrix multiplication is not defined.
We say that  $(s_{ij})_{i,j=1}^n\in M_n\big(S_A(V)\big)$ is {positive} if the matrix $\big(s_{ij}(v_i,v_j)\big)_{i,j=1}^n$ is a positive element of $M_n(A)$ for all $v_1,\dots,v_n\in V$.

\subsection{Positive-definite kernels}

For any set $X\ne\emptyset$, we say that a mapping $K:\,X\times X\to
S_A(V)$ is a {\it positive-(semi)definite $A$-kernel} if, for all $n\in\N_+$
and $x_1,\ldots,x_n\in X$, the matrix $\big(K(x_i,x_j)
\big)_{i,j=1}^n\in M_n\big(S_A(V)\big)$ is positive, i.e.\ for all $v_1,\dots,v_n\in V$
$$
\sum_{i,j=1}^n\big[K(x_i,x_j)\big](v_i,v_j)\ge 0.
$$
Let $\K$ denote the convex set of all positive-definite $A$-kernels $K:\,X\times X\to S_A(V)$.
The following theorem is proved in \cite{PeYl}:
\begin{theorem}\label{t1}
For each $K\in\K$ there exists a Hilbert $C^*$-module $M$ over $A$ and $A$-linear maps
$D(x):\,V\to M$,  $x\in X$, such that
\begin{itemize}
\item[(i)]
$\big[K(x,x')\big](v,v')=\<D(x)v|D(x')v'\>,\hspace{5mm}x,\,x'\in
X,\;v,\,v'\in V, $
\item[(ii)]
$\lin_\C\cup_{x\in X}D(x)V$ is dense in $M$.
\end{itemize}
We say that $(M,D)$ is a {\em minimal Kolmogorov decomposition (MKD)} for $K$. %and the complex dimension of $M$ is the {\em rank} of $K$, ${\rm rank}\,K:=\dim M$.

If $(M',D')$ is another MKD for $K$
then there exists a unitary $U:\,M\to M'$ such that $UD(x)=D'(x)$
for all $x\in X$.
\end{theorem}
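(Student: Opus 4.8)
The plan is to run the standard GNS/Kolmogorov construction, with the complex scalars replaced by the $C^*$-algebra $A$. First I would build a ``pre-module''. Let $M_0$ be the right $A$-module of finitely supported functions $f:\,X\to V$, i.e.\ $M_0=\bigoplus_{x\in X}V$ with $A$ acting coordinatewise through the module structure of $V$; write $(x,v)\in M_0$ for the function equal to $v$ at $x$ and $0$ elsewhere, so that every element of $M_0$ is a finite $\C$-linear combination $\sum_i(x_i,v_i)$ and $(x,v)a=(x,va)$. On $M_0$ define the $A$-valued form
\[
\langle\, f\,|\,g\,\rangle_0:=\sum_{x,x'\in X}\big[K(x,x')\big]\big(f(x),g(x')\big)\in A,
\]
a finite sum. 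Additivity and $\C$-homogeneity in each slot, together with $\langle f|ga\rangle_0=\langle f|g\rangle_0\,a$ and $\langle f|g\rangle_0^{\,*}=\langle g|f\rangle_0$, follow from the corresponding properties of each $K(x,x')\in S_A(V)$; and writing $f=\sum_i(x_i,v_i)$, the positive-definiteness hypothesis is exactly the statement $\langle f|f\rangle_0=\sum_{i,j}[K(x_i,x_j)](v_i,v_j)\ge 0$. Thus $(M_0,\langle\cdot|\cdot\rangle_0)$ is a semi-inner-product $A$-module.

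Next I would pass to a genuine Hilbert $C^*$-module. By the Cauchy--Schwarz inequality for semi-inner-product $C^*$-modules, $\langle f|g\rangle_0^{\,*}\langle f|g\rangle_0\le\|\langle f|f\rangle_0\|\,\langle g|g\rangle_0$, so the ``null'' set $N:=\{f\in M_0:\langle f|f\rangle_0=0\}$ equals $\{f:\langle f|g\rangle_0=0\ \text{for all }g\}$, hence is an $A$-submodule, and $\langle\cdot|\cdot\rangle_0$ descends to an honest $A$-valued inner product on $M_0/N$ for which $\|f+N\|:=\|\langle f|f\rangle_0\|^{1/2}$ is a norm. Let $M$ be the completion of $M_0/N$; this is a Hilbert $C^*$-module over $A$ (the $A$-action and inner product extend continuously to the completion). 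Set $D(x)v:=(x,v)+N\in M$. Then $D(x)$ is $\C$-linear in $v$ and $A$-linear, since $D(x)(va)=(x,va)+N=(x,v)a+N=(D(x)v)a$; property (i) is immediate, $\langle D(x)v|D(x')v'\rangle=\langle (x,v)|(x',v')\rangle_0=[K(x,x')](v,v')$; and (ii) holds because $\lin_\C\bigcup_{x\in X}D(x)V$ is precisely the image of $M_0$ in $M$, i.e.\ $M_0/N$, which is dense.

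For uniqueness, given a second MKD $(M',D')$, I would define $U$ on the dense submodule $\lin_\C\bigcup_{x}D(x)V\subseteq M$ by $U\big(\sum_i D(x_i)v_i\big):=\sum_i D'(x_i)v_i$. Using (i) for both decompositions, $\langle U\xi\,|\,U\eta\rangle'=\langle\xi|\eta\rangle$ for all $\xi,\eta$ in this submodule; in particular $U$ is well defined (zero goes to zero), isometric, and $A$-linear, so it extends to an $A$-linear isometry $M\to M'$ whose range is closed and contains the dense set $\lin_\C\bigcup_x D'(x)V$. A surjective inner-product-preserving $A$-linear map between Hilbert $C^*$-modules is unitary (its inverse is its adjoint), and $UD(x)=D'(x)$ holds by construction.

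The only genuinely technical ingredients are the $C^*$-module facts used in the middle paragraph: Cauchy--Schwarz for $A$-valued semi-inner products and the assertion that the completion of a pre-Hilbert $C^*$-module is again a Hilbert $C^*$-module. These are standard (see \cite{Manuilov}, and this is exactly the construction of \cite{PeYl}), so I would simply quote them; everything else is bookkeeping with $A$-sesquilinearity. I note that unitality of $A$ is never invoked, so the argument applies verbatim to the possibly non-unital $A$ considered here.
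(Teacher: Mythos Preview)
Your proposal is correct and is precisely the standard Kolmogorov/GNS construction for $C^*$-module-valued kernels; the paper does not give its own proof of this theorem but simply cites \cite{PeYl}, where exactly this construction is carried out, so you are reproducing the intended argument. One small imprecision worth fixing: the conjugate symmetry $\langle f|g\rangle_0^{\,*}=\langle g|f\rangle_0$ does not follow from properties of the individual sesquilinear forms $K(x,x')\in S_A(V)$ as you write, but from the positive-definiteness of the kernel (a positive element of $M_2(A)$ is self-adjoint, forcing $[K(x,x')](v,v')^{\,*}=[K(x',x)](v',v)$); this is not a gap, just a misattribution of the source of the identity.
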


Any $K\in\K$ is said to be {\it regular} if the Hilbert $C^*$-module $M$ (associated 
with a MKD $(M,D)$ of $K$)  is self-dual. Note that {\it any} $K\in\K$ is regular if $A$ is finite dimensional (e.g.\ when $A\subseteq\lh$ where $\hi$ is a finite dimensional Hilbert space) \cite[Section 2.5]{Manuilov}.

Fix $K_1\in\K$ and $Z\subseteq X\times X$. Let
${\bf C}(K_1,Z)\subseteq\K$ consist of positive-definite $A$-kernels $K:\,X\times X\to S_A(V)$ such that $K(x,x')=K_1(x,x')$ for all $(x,x')\in Z$. Obviously, ${\bf C}(K_1,Z)$ is convex and we denote by $\ext {\bf C}(K_1,Z)$ its extreme points.
Note that ${\bf C}(K_1,\emptyset)=\K$.

\begin{theorem}\label{extker}
%Let $Z\subseteq X\times X$ be symmetric and 
Let $K\in {\bf C}(K_1,Z)$ be regular and $(M,D)$ its MKD.
Then $K\in\ext {\bf C}(K_1,Z)$ if and only if, for all selfadjoint $B\in L(M)$, 
%for which $\<D(x)v|BD(x')v'\>=0$ for all $(x,x')\in Z$ and $v,\,v'\in V$,
the condition 
$$
\<D(x)v|BD(x')v'\>=0,\hspace{0.5cm}(x,x')\in Z,\;v,\,v'\in V,
$$ 
implies $B=0$. %Especially, rank 1 kernels are extreme.
\end{theorem}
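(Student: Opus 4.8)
The plan is to run the standard perturbation argument for extreme points of a convex set cut out by a Kolmogorov-type decomposition; the two implications are essentially mirror images of each other, and the regularity hypothesis will enter only in the harder (``if'') direction.

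First I would treat the ``only if'' direction by contraposition. Suppose there is a nonzero selfadjoint $B\in L(M)$ with $\<D(x)v|BD(x')v'\>=0$ for all $(x,x')\in Z$, $v,v'\in V$. After rescaling I may assume $\|B\|\le 1$, so that $I_M\pm B\ge 0$ in $L(M)$, and by \cite[Prop.\ 2.1.3]{Manuilov} there are adjointable maps $C_\pm\colon M\to M_\pm$ (with $M_\pm$ Hilbert $C^*$-modules over $A$) such that $I_M\pm B=C_\pm^*C_\pm$. I then set
$$\big[K_\pm(x,x')\big](v,v'):=\<D(x)v\,|\,(I_M\pm B)D(x')v'\>=\<C_\pm D(x)v\,|\,C_\pm D(x')v'\>.$$
Each $K_\pm$ is a positive-definite $A$-kernel, being manifestly of Kolmogorov form; moreover $K=\tfrac12K_++\tfrac12K_-$, and on $Z$ the $B$-term drops out, so $K_\pm$ agrees there with $K=K_1$, i.e.\ $K_\pm\in{\bf C}(K_1,Z)$. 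Finally $K_+\neq K_-$: equality would force $\<D(x)v|BD(x')v'\>=0$ for \emph{all} $x,x'\in X$, hence, by density of $\lin_\C\bigcup_{x\in X}D(x)V$ in $M$ (property (ii) of a MKD) and continuity, $\<\eta|B\eta'\>=0$ for all $\eta,\eta'\in M$, and putting $\eta=B\eta'$ gives $B=0$. So $K$ is a nontrivial midpoint in ${\bf C}(K_1,Z)$ and hence not extreme.

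For the ``if'' direction I would use that $K$ is extreme iff $K=\tfrac12(K'+K'')$ with $K',K''\in{\bf C}(K_1,Z)$ implies $K'=K''$. Given such a decomposition, put $L:=K'-K=K-K''$: a Hermitian $A$-kernel with $L=0$ on $Z$ and with $K\pm L\ (=K',K'')$ positive-definite, i.e.\ $-K\le L\le K$ in the sense that, writing for a kernel $N$ the $A$-sesquilinear form $[\cdot,\cdot]_N$ on the dense submodule $M_0:=\lin_\C\bigcup_{x}D(x)V$ given by $\big(\sum_iD(x_i)v_i,\sum_jD(x_j')v_j'\big)\mapsto\sum_{i,j}[N(x_i,x_j')](v_i,v_j')$, one has $0\le[\eta,\eta]_{K\pm L}$ for all $\eta\in M_0$. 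Since $[\cdot,\cdot]_{K\pm L}$ are positive semidefinite $A$-sesquilinear forms they obey Cauchy--Schwarz; together with $\|[\eta,\eta]_{K\pm L}\|\le 2\|\eta\|^2$ this shows that $[\cdot,\cdot]_L:=\tfrac12\big([\cdot,\cdot]_{K+L}-[\cdot,\cdot]_{K-L}\big)$ is well defined (independent of the chosen representatives: if $\<\eta|\eta\>=0$ then $[\eta,\eta]_{K+L}+[\eta,\eta]_{K-L}=2\<\eta|\eta\>=0$ forces both to vanish, and Cauchy--Schwarz then kills the off-diagonal terms) and bounded, $\|[\eta,\eta']_L\|\le 2\|\eta\|\,\|\eta'\|$; hence it extends by continuity to a bounded Hermitian $A$-sesquilinear form on $M\times M$. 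Here I would invoke the regularity of $K$: since $M$ is self-dual, the Riesz-type representation theorem for self-dual Hilbert $C^*$-modules (cf.\ \cite{PeYl}) supplies a unique $B\in L(M)$, necessarily selfadjoint, with $[\eta,\eta']_L=\<\eta|B\eta'\>$. Then $\<D(x)v|BD(x')v'\>=[L(x,x')](v,v')=0$ for $(x,x')\in Z$, so by hypothesis $B=0$, whence $L=0$ and $K'=K''$, proving $K\in\ext{\bf C}(K_1,Z)$.

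The hard part will be the ``if'' direction, and within it the passage from the kernel $L$ to the operator $B$: one must verify that the naive formula for $[\cdot,\cdot]_L$ is consistent and norm-bounded — this genuinely uses the two-sided domination $-K\le L\le K$, not merely positivity of $K'$ and $K''$ — and then realize the resulting bounded $A$-sesquilinear form by an adjointable operator, which is valid precisely because $M$ is self-dual. This is exactly the point where the regularity assumption cannot be dropped: over a general Hilbert $C^*$-module a bounded sesquilinear form need not come from an adjointable map, and the clean extremality criterion would fail.
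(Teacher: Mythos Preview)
Your proof is correct and follows essentially the same perturbation strategy as the paper. The ``only if'' direction is identical. In the ``if'' direction you take a slightly different (and arguably cleaner) route: the paper builds minimal Kolmogorov decompositions $(M_\pm,D_\pm)$ of $K_\pm$, defines contractions $G_\pm:M\to M_\pm$ by $G_\pm D(x)v:=D_\pm(x)v$, invokes self-duality of $M$ to get $G_\pm^*$, and sets $B:=G_+^*G_+-G_-^*G_-$; you instead work directly with the bounded Hermitian $A$-sesquilinear form $[\cdot,\cdot]_L$ and use the Riesz-type representation on a self-dual module to produce $B$ in one step. Both arguments use regularity at exactly the same point, and the resulting operators coincide.
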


\begin{proof}
For any $J,\,K\in\K$ we denote $J\le K$ if $K-J\in\K$. 
Let $K\in {\bf C}(K_1,Z)$ be regular and $(M,D)$ its MKD.
Suppose that $$K=\frac12 K_++\frac12 K_-$$ where $K_\pm\in {\bf C}(K_1,Z)$ and $K_+\ne K_-$.
Then $K_\pm\le 2 K$.
Let $(M_\pm ,D_\pm )$ be a MKD of $K_\pm $ so that, for all 
$n\in\N_+$, $x_1,\ldots,x_n\in X$, and $v_1,\dots,v_n\in V$,
$
\sum_{i,j=1}^n\big[K_\pm (x_i,x_j)\big](v_i,v_j) \le 2\sum_{i,j=1}^n\big[K(x_i,x_j)\big](v_i,v_j)
$
or, equivalently,
$$
\left\<\sum_{i=1}^nD_\pm (x_i)v_i\Bigg|\sum_{j=1}^nD_\pm (x_j)v_j\right\> 
\le 2
\left\<\sum_{i=1}^nD(x_i)v_i\Bigg|\sum_{j=1}^nD(x_j)v_j\right\>.
$$
Hence, we may define $\C$-linear maps $G_\pm :\,M\to M_\pm $ by
$$
G_\pm \big(D(x)v\big):=D_\pm (x)v,\hspace{0.5cm}x\in X,\;v\in V,
$$
for which
$
\<G_\pm w|G_\pm w\>\le2\<w|w\>
$
for all $w\in\lin_\C\cup_{x\in X}D(x)V\subseteq M$. It follows that $G_\pm $ is well defined, $A$-linear, and bounded (with the norm $\|G_\pm \|\le\sqrt2$) \cite[Theorem 2.1.4, Corollary 2.1.6]{Manuilov}.
By regularity of $K$ the module $M$ is self-dual and thus $G_\pm $ is adjointable with the adjoint $G_\pm ^*:\,M_\pm \to M$ \cite[Proposition 2.5.2]{Manuilov}. Define a positive $B_\pm :=G_\pm ^*G_\pm\in L(M)$ so that
$$
\big[K_\pm (x,x')\big](v,v')=\<G_\pm D(x)v|G_\pm D(x')v'\>=\<D(x)v|G_\pm ^*G_\pm D(x')v'\>=\<D(x)v|B_\pm D(x')v'\>
$$
for all $x,\,x'\in X$ and $v,\,v'\in V$. Let $B:=B_+-B_-\in L(M)$ for which $B^*=B$. Since $K_\pm \in {\bf C}(K_1,Z)$ one has
\begin{eqnarray*}
\<D(x)v|BD(x')v'\>&=&\<D(x)v|B_+D(x')v'\>-\<D(x)v|B_-D(x')v'\> \\
&=&\big[K_1(x,x')\big](v,v')-\big[K_1(x,x')\big](v,v')=0 
\end{eqnarray*}
for all $(x,x')\in Z$ and $v,\,v'\in V$.
Since $B=0$ if and only if $B_+= B_-$ if and only if $K_+= K_-$ it follows that one must have $B\ne 0$.

Suppose then that there exists a nonzero $B\in L(M)$, $B^*=B$, which satisfies the condition of the theorem.
We may assume that $\|B\|\le 1$ (otherwise redefine $B$ to be $\|B\|^{-1}B$).
Since $\pm B\le \|B\| I_M$ it follows that
$$
B_\pm:=I_M\pm B\ge 0,\hspace{1cm}B_+\ne B_-.
$$
Define $K_\pm\in\K$ by
$$
\big[K_\pm(x,x')\big](v,v'):=\<D(x)v|B_\pm D(x')v'\>,\hspace{5mm}x,\,x'\in
X,\;v,\,v'\in V, 
$$
for which $K_+\ne K_-$, $K=\frac12 K_++\frac12 K_-$, and 
$$
\big[K_\pm(x,x')\big](v,v')=
\<D(x)v|D(x')v'\>
\pm\underbrace{\<D(x)v|BD(x')v'\>}_{=\;0}
=\big[K_1(x,x')\big](v,v')
$$
for all $(x,x')\in Z$ and $v,\,v'\in V$. Hence, $K_\pm\in{\bf C}(K_1,Z)$ and $K$ is not extreme.
\end{proof}

Note that, if the set $Z\subseteq X\times X$ of the preceding theorem is symmetric (i.e.\ $(x,x')\in Z$ implies $(x',x)\in Z$)
then the condition applied for {\it any} $B\in L(M)$ implies $B=0$ if and only if $K$ is extreme (if $B^*\ne B$ redefine $B$ to be $i(B-B^*)$ which also satisfies the condition since $Z$ is symmetric).

\begin{example}[Autocorrelation functions] \rm
Let $A=\C$ and $K\in\K$ with a MKD $(M,D)$ so that
\begin{itemize}
\item $V$ is a vector space (i.e.\ a $\C$-module),
\item $S_\C(V)$ consists of sesquilinear forms $V\times V\to\C$,
\item $K$ is regular (since $\dim\C=1<\infty$),
\item $M$ is a Hilbert space (i.e.\ a Hilbert $C^*$-module over $\C$),
\item any $D(x):\,V\to M$ is linear.
\end{itemize}
Suppose further that $V=\C$ so that 
$$
\big[K(x,x')\big](c,c')=\<D(x)c|D(x')c'\>=\ov c\<D(x)1|D(x')1\>c',\hspace{5mm}x,\,x'\in
X,\;c,\,c'\in\C, 
$$
and $K$ can be identified with the positive definite $\C$-kernel 
$$
k:\,X\times X\to\C,\;(x,x')\mapsto k(x,x'):=\big[K(x,x')\big](1,1)=\<m(x)|m(x')\>
$$
where $m:\,X\to M,\,x\mapsto m(x):=D(x)1$. We say that $(M,m)$ is a MKD of $k$ (i.e.\ the set of the linear combinations of vectors $m(x)$ is dense in the Hilbert space $M$).
Let $K_1\in {\bf K}(X,\C)$ be defined by a MKD $(M_1,D_1)$ where
$M_1=\C$ and $D_1(x):\,\C\to\C,\,c\mapsto c,$ for all $x\in X$. Then the corresponding positive semidefinite function is the constant function $k_1(x,x')\equiv1$.
If $Z=\{(x,x)\in X\times X\,|\,x\in X\}$ then ${\bf C}(K_1,Z)$ can be identified with the convex set ${\bf C}_1(X)$ of positive semidefinite functions $k:\,X\times X\to\C$ with the unit diagonal (i.e.\ $k(x,x)\equiv1$).
In the context of stochastic processes, such a $k$ is called an {\it autocorrelation function}, and the characterization of extreme autocorralation functions has long been a problem (see, e.g.\ \cite{LiTa,KiPe} and references therein).
The next immediate corollary of Theorem \ref{extker} solves this problem completely.
\begin{proposition}\label{auto}
Let $k\in{\bf C}_1(X)$ and $(M,m)$ its MKD. Then $k$ is extreme in ${\bf C}_1(X)$ if and only if, 
for any bounded operator $B:\,M\to M$, the condition
$$
\<m(x)|Bm(x)\>=0,\hspace{0.5cm} x\in X,
$$
implies $B=0$.
\end{proposition}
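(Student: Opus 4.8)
The plan is to obtain Proposition \ref{auto} as a direct specialization of Theorem \ref{extker} along the dictionary set up in the preceding example. I would take $A=\C$, $V=\C$, let $Z=\{(x,x)\mid x\in X\}$, and let $K_1$ be the positive-definite kernel with MKD $(\C,D_1)$, $D_1(x)c=c$, so that $k_1\equiv 1$. As recorded just before the statement, the map $K\leftrightarrow k$, $k(x,x')=[K(x,x')](1,1)$, is an affine bijection from ${\bf C}(K_1,Z)$ onto ${\bf C}_1(X)$; hence a given $k\in{\bf C}_1(X)$ corresponds to some $K\in{\bf C}(K_1,Z)$, and $k\in\ext{\bf C}_1(X)$ if and only if $K\in\ext{\bf C}(K_1,Z)$.

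Next I would verify that the hypotheses of Theorem \ref{extker} hold in this situation. Since $\dim_\C\C=1<\infty$, every positive-definite $\C$-kernel is regular \cite[Section~2.5]{Manuilov}, so $K$ is regular and no extra assumption is needed. Its Kolmogorov module $M$ is simply a Hilbert space (a Hilbert $C^*$-module over $\C$), so $L(M)=\mathcal{B}(M)$, and for the $A$-linear maps $D(x):\C\to M$ one has $D(x)c=c\,m(x)$ with $m(x):=D(x)1$; thus $(M,m)$ is precisely the MKD of $k$ appearing in the proposition (and is unique up to a unitary by Theorem \ref{t1}, so the condition below does not depend on the chosen decomposition).

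Then I would translate the extremality criterion. For $(x,x')\in Z$ we have $x=x'$, and for $v=c,\ v'=c'\in\C$,
\[
\<D(x)v\mid B\,D(x')v'\>=\bar c\,c'\,\<m(x)\mid B\,m(x)\>,
\]
so the hypothesis ``$\<D(x)v|B\,D(x')v'\>=0$ for all $(x,x')\in Z$, $v,v'\in V$'' is equivalent to ``$\<m(x)|B\,m(x)\>=0$ for all $x\in X$''. By Theorem \ref{extker}, $K$ (equivalently $k$) is extreme precisely when this condition forces $B=0$ for every selfadjoint $B\in\mathcal{B}(M)$. Finally, because $Z$ is symmetric, the remark following Theorem \ref{extker} allows ``selfadjoint $B$'' to be replaced by ``arbitrary bounded $B$'' (if $B^*\neq B$, pass to $i(B-B^*)$, which annihilates the same diagonal forms), which is exactly the assertion of Proposition \ref{auto}.

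I do not expect a genuine obstacle here: all the content sits in Theorem \ref{extker}, and what remains is the bookkeeping of the $A=\C$, $V=\C$ identifications together with the invocation of the symmetry remark to pass from selfadjoint to arbitrary bounded $B$. The only point deserving a moment's care is confirming that the abstract MKD furnished by Theorem \ref{t1} coincides with $(M,m)$ in the concrete sense of the proposition and that its uniqueness makes the stated condition well posed.
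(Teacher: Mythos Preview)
Your proposal is correct and follows exactly the route the paper intends: the paper presents Proposition~\ref{auto} as an ``immediate corollary of Theorem~\ref{extker}'' via the identifications $A=\C$, $V=\C$, $Z=\{(x,x)\mid x\in X\}$ set up in the surrounding example, and you have filled in precisely those details, including the use of the symmetry remark after Theorem~\ref{extker} to pass from selfadjoint to arbitrary bounded $B$.
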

The above proposition is a generalization of Theorem 1 of \cite{KiPe} and hence a generalization of Theorem 1 of \cite{LiTa}. Following \cite{LiTa,KiPe}  we see that there exist an extreme autocorrelation function $m$ of any rank $r:=\dim M\le\sqrt{\#X}$.
\end{example}

\subsection{Completely positive maps}

Let $E:\,A_e\to S_A(V)$ be a $\C$-linear mapping. 
For any $n\in\N_+$
we define the $n^{\rm th}$ {amplification} $E^{(n)}:\,M_n(A_e)\to
M_n\big(S_A(V)\big)$ of $E$ as
$E^{(n)}\big((b_{ij})_{i,j}\big):=\big(E(b_{ij})\big)_{i,j}$. For
example, $E^{(1)}=E$. We say that $E^{(n)}$ is positive if
$E^{(n)}\big((b_{ij})_{i,j}\big)$ is positive for any positive
$(b_{ij})_{i,j}$. Moreover, $E$ is {\it completely positive (CP)} if $E^{(n)}$ is positive for all $n\in\N_+$.
It is easy to see that $E$ {\it is CP if and only if $\tilde E:\,A_e\times A_e\to
S_A(V)$, $(b,b')\mapsto E(b^*b')$ is a positive-definite $A$-kernel,} i.e.\ $$\sum_{i,j=1}^n E(b_i^*b_j)(v_i,v_j)\ge 0$$ for all $n\in\N_+$, $b_1,\ldots,b_n\in A_e$, and $v_1,\ldots,v_n\in V$.

We have proved the following generalization of Kasparov-Stinespring-Gelfand-Naimark-Segal (KSGNS) theorem \cite{PeYl}:

%Denote by $L_A(M)$ is the set of adjointable (and thus $A$-linear and bounded) maps on a Hilbert module $M$.

\begin{theorem}\label{kuusi}
Let $E:\,A_e\to S_A(V)$ be a CP ($\C$-linear) map.
There exist a Hilbert $C^*$-module $M$ over $A$, a {unital}
*-homomorphism $\pi:\,A_e\to L(M)$, and an $A$-linear $Y:\,V\to M$ such that
\begin{itemize}
\item[(i)]
$ E(b)(v,v')=\<Yv|\pi(b)Yv'\>,
\hspace{5mm}b\in A_e,\;v,\,v'\in V, $
\item[(ii)]
$\lin_\C\,\pi(A_e)YV=\lin_\C \{ \pi (b)Yv\, | \, b\in A_e,\,v\in V\}$ is
dense in $M$.
\end{itemize}
We say that $(M,\pi,Y)$ is a {\em minimal dilation} for $E$.

If $(M',\pi',Y')$ is another minimal dilation for $E$ then there is
a unitary mapping $U:\,M\to M'$ such that
$$
\pi'(b)=U\pi(b)U^*, \hspace{5mm}b\in A_e,
$$
and $Y'=UY$.
\end{theorem}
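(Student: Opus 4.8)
The plan is to reduce the statement to the positive-definite kernel case of Theorem~\ref{t1}. As noted just before the statement, $E$ is CP if and only if $\tilde E\colon A_e\times A_e\to S_A(V)$, $(b,b')\mapsto E(b^*b')$, is a positive-definite $A$-kernel on the parameter set $X=A_e$. So first I would apply Theorem~\ref{t1} to $\tilde E$ to obtain a Hilbert $C^*$-module $M$ over $A$ and $A$-linear maps $D(b)\colon V\to M$, $b\in A_e$, with $\<D(b)v|D(b')v'\>=[\tilde E(b,b')](v,v')=E(b^*b')(v,v')$ and $M_0:=\lin_\C\bigcup_{b\in A_e}D(b)V$ dense in $M$. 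Then I would put $Y:=D(e)$ (an $A$-linear map $V\to M$) and define $\pi(b)$ on $M_0$ by $\pi(b)\big(D(b')v\big):=D(bb')v$, extended $\C$-linearly, and show that $b\mapsto\pi(b)$ is a unital $*$-homomorphism $A_e\to L(M)$.

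The substance of the argument is the verification that $\pi(b)$ is well defined, $A$-linear, bounded, and adjointable. For boundedness, fix $b\in A_e$, write $\|b\|^2e-b^*b=c^*c$ with $c:=(\|b\|^2e-b^*b)^{1/2}\in A_e$ (a positive element, since $b^*b\le\|b\|^2e$), and take $\xi=\sum_iD(b_i)v_i\in M_0$; then, as an identity in $A$,
$$
\|b\|^2\<\xi|\xi\>-\sum_{i,j}\<D(bb_i)v_i|D(bb_j)v_j\>=\sum_{i,j}E\big(b_i^*(\|b\|^2e-b^*b)b_j\big)(v_i,v_j)=\Big\<\sum_iD(cb_i)v_i\,\Big|\,\sum_jD(cb_j)v_j\Big\>\ge 0,
$$
so $\pi(b)$ is well defined on $M_0$ and $\|\pi(b)\xi\|\le\|b\|\,\|\xi\|$. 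Adjointability is checked on $M_0$ from $\<D(b^*b')v|D(b'')v'\>=E(b'^*bb'')(v,v')=\<D(b')v|D(bb'')v'\>$, which says $\<\pi(b^*)\eta|\zeta\>=\<\eta|\pi(b)\zeta\>$ for $\eta,\zeta\in M_0$. Hence, by the extension results for bounded $A$-linear maps on Hilbert $C^*$-modules already invoked in the proof of Theorem~\ref{extker} (\cite[Theorem~2.1.4, Corollary~2.1.6]{Manuilov}), each $\pi(b)$ extends uniquely to an element of $L(M)$ with $\pi(b)^*=\pi(b^*)$. Linearity in $b$, multiplicativity ($\pi(b)\pi(b')D(b'')v=D(bb'b'')v=\pi(bb')D(b'')v$), and $\pi(e)=I_M$ are immediate on $M_0$ and pass to $M$ by density. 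Property~(i) follows from $\<Yv|\pi(b)Yv'\>=\<D(e)v|D(b)v'\>=E(b)(v,v')$, and (ii) from $\pi(b)Yv=\pi(b)D(e)v=D(b)v$ together with the density clause of Theorem~\ref{t1}.

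For the uniqueness part, given another minimal dilation $(M',\pi',Y')$, I would define $U$ on $\lin_\C\pi(A_e)YV$ by $U\big(\sum_i\pi(b_i)Yv_i\big):=\sum_i\pi'(b_i)Y'v_i$. Since
$$
\Big\<\sum_i\pi'(b_i)Y'v_i\,\Big|\,\sum_j\pi'(b_j)Y'v_j\Big\>=\sum_{i,j}\<Y'v_i|\pi'(b_i^*b_j)Y'v_j\>=\sum_{i,j}E(b_i^*b_j)(v_i,v_j),
$$
which is exactly the corresponding expression for the unprimed dilation, $U$ is well defined and preserves the $A$-valued inner product, hence extends $A$-linearly to an isometry $M\to M'$; it is surjective by minimality of $(M',\pi',Y')$, so unitary, and evaluating on generators gives $U\pi(b)=\pi'(b)U$ and $UY=Y'$.

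The step I expect to be the genuine obstacle is not any one computation but keeping the density and extension steps legitimate in the Hilbert $C^*$-module setting: one cannot pass to orthogonal complements of closures of submodules as freely as for Hilbert spaces, so the extensions of bounded $A$-linear maps and of the adjoint relation must be routed through the structural results of \cite{Manuilov}, and $A$-linearity must be tracked at every stage; the self-duality subtleties that appeared in Theorem~\ref{extker} are avoided here only because $\pi(b)$ comes equipped with the explicit candidate adjoint $\pi(b^*)$.
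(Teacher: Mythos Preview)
Your proposal is correct and follows exactly the route the paper indicates in the remark immediately after the theorem (citing \cite{PeYl}): take a MKD $(M,D)$ for the kernel $\tilde E$, set $Y:=D(e)$, and define $\pi(b)D(b')v:=D(bb')v$, which is precisely your construction. The boundedness, adjointability, and uniqueness arguments you supply are the standard ones and match what \cite{PeYl} does.
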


Note that a MKD $(M,D)$ of $\tilde E$ is related with a minimal dilation $(M,\pi,Y)$ of $E$. Indeed, as shown in \cite{PeYl}, one may choose $Y=D(e)$ and $\pi(b)Y=D(b)$ so that $\big[\tilde E(b,b')\big](v,v')=\<D(b)v|D(b')v'\>=\<\pi(b)Yv|\pi(b')Yv'\>=\<Yv|\pi(b)^*\pi(b')Yv'\>=\<Yv|\pi(b^*b')Yv'\>=E(b^*b')(v,v')$.
We say that $E$ is regular if $\tilde E$ is regular.

Denote by ${\bf CP}(A_e,V)$ the convex set of CP maps $E:\,A_e\to S_A(V)$ and fix an $E_1\in{\bf CP}(A_e,V)$ and $Z=\{(b,b')\in A_e\times A_e \, | \,b^*b'=e\}$. Denote $s_e:=E_1(e)\in S_A(V)$ and define a convex set
$$
{\bf CP}(A_e,V,s_e):=\{ E \in {\bf CP}(A_e,V) \, | \, E(e)=s_e\}\subseteq{\bf C}(\tilde E_1,Z)
$$
From Theorem \ref{extker} one gets that a regular $E\in{\bf CP}(A_e,V,s_e)$ is extreme in ${\bf C}(\tilde E_1,Z)$ if and only if,
for all $B\in L(M)$, 
the condition 
$$
\<\pi(b)Yv|B\pi(b')Yv'\>=0,\hspace{0.5cm}b,\,b' \in A_e,\;b^*b'=e,\;v,\,v'\in V,
$$ 
implies $B=0$. The next theorem characterizes completely the regular extreme points of the smaller set ${\bf CP}(A_e,V,s_e)$ and is thus a generalization of \cite[Theorem 1.4.6]{Ar}:

\begin{theorem}\label{seiska}
Let $E\in{\bf CP}(A_e,V,s_e)$ be regular and $(M,\pi,Y)$ its minimal dilation.
Then $K\in\ext{\bf CP}(A_e,V,s_e)$ if and only if, for all $B\in L(M)$ commuting with $\pi$ (i.e.\ $[B,\pi(b)]=0$ for all $b\in A_e$)
the condition 
$$
\<Yv|BYv'\>=0,\hspace{0.5cm}v,\,v'\in V,
$$ 
implies $B=0$. %Especially, rank 1 CP-maps are extreme.
\end{theorem}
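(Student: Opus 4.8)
The plan is to adapt the proof of Theorem \ref{extker}, but working with the minimal dilation $(M,\pi,Y)$ of $E$ rather than with a bare Kolmogorov decomposition, so that the bimodule structure carried by a CP map forces the comparison operators into the commutant $\pi(A_e)'$. I cannot simply quote Theorem \ref{extker} for ${\bf C}(\tilde E_1,Z)$, because ${\bf CP}(A_e,V,s_e)$ need not be a face of ${\bf C}(\tilde E_1,Z)$ — a CP-type kernel $\tilde E$ may split as $\tfrac12K_++\tfrac12K_-$ with $K_\pm$ positive-definite kernels agreeing with $\tilde E_1$ on $Z$ but not of CP type — so extremality in the smaller set is genuinely weaker and will need its own argument. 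Throughout I would use the Hilbert $C^*$-module facts already invoked in the proof of Theorem \ref{extker} (domination of positive-definite kernels yields bounded $A$-linear maps; regularity of $E$ makes them adjointable) and the characterisation $B\ge 0\iff\langle v|Bv\rangle\ge 0$. A preliminary reduction: if $B\in L(M)$ commutes with $\pi$ then so does $B^*$, and if $\langle Yv|BYv'\rangle=0$ for all $v,v'$ then also $\langle Yv|B^*Yv'\rangle=\langle Yv'|BYv\rangle^*=0$; splitting $B$ into its self-adjoint and anti-self-adjoint parts, in the necessity direction it will suffice to treat $B=B^*$.

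\emph{Sufficiency.} Starting from a putative proper convex decomposition $E=\tfrac12E_++\tfrac12E_-$ with $E_\pm\in{\bf CP}(A_e,V,s_e)$ and $E_+\ne E_-$, linearity of $E\mapsto\tilde E$ together with $\tilde E_\mp\ge 0$ gives $\tilde E_\pm=2\tilde E-\tilde E_\mp\le 2\tilde E$. Exactly as in the proof of Theorem \ref{extker} (now with minimal dilations $(M_\pm,\pi_\pm,Y_\pm)$ of $E_\pm$), this domination lets me define bounded $A$-linear maps $G_\pm:M\to M_\pm$ by $G_\pm(\pi(b)Yv):=\pi_\pm(b)Y_\pm v$, adjointable by regularity, so that $B_\pm:=G_\pm^*G_\pm\ge 0$ and $\langle\pi(a)Yv|B_\pm\pi(b)Yv'\rangle=\tilde E_\pm(a,b)(v,v')=E_\pm(a^*b)(v,v')$. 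The essential new point is that $B_\pm$ commutes with $\pi$: using $\pi(b)^*=\pi(b^*)$ and the CP identity $\tilde E_\pm(b^*a,c)=E_\pm(a^*bc)=\tilde E_\pm(a,bc)$,
\[
\langle\pi(a)Yv|\pi(b)B_\pm\pi(c)Yv'\rangle=E_\pm(a^*bc)(v,v')=\langle\pi(a)Yv|B_\pm\pi(bc)Yv'\rangle,
\]
so density of the vectors $\pi(a)Yv$ in $M$ and $\pi(bc)=\pi(b)\pi(c)$ yield $\pi(b)B_\pm=B_\pm\pi(b)$. Then $B:=B_+-B_-$ is self-adjoint, lies in $\pi(A_e)'$, and $\langle Yv|BYv'\rangle=\bigl(E_+(e)-E_-(e)\bigr)(v,v')=s_e(v,v')-s_e(v,v')=0$; by the hypothesised condition $B=0$, hence $B_+=B_-$, $\tilde E_+=\tilde E_-$, and $E_+=E_-$ (evaluate at $e^*b$) — contradicting $E_+\ne E_-$. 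So no such decomposition exists and $E$ is extreme.

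\emph{Necessity.} Given $B\in\pi(A_e)'$ with $\langle Yv|BYv'\rangle=0$, by the reduction above I may take $B=B^*$ and, after rescaling, $\|B\|\le 1$, so $B_\pm:=I_M\pm B\ge 0$ with $B_++B_-=2I_M$. I would put $E_\pm(b)(v,v'):=\langle Yv|\pi(b)B_\pm Yv'\rangle$ and check that each $E_\pm$ is CP: writing $w=\sum_i\pi(b_i)Yv_i$, one has $\sum_{i,j}\tilde E_\pm(b_i,b_j)(v_i,v_j)=\langle w|B_\pm w\rangle\ge 0$ because $B_\pm$ commutes with $\pi$ and $B_\pm\ge 0$; moreover $E_\pm(e)(v,v')=\langle Yv|Yv'\rangle\pm\langle Yv|BYv'\rangle=s_e(v,v')$, so $E_\pm\in{\bf CP}(A_e,V,s_e)$ and $E=\tfrac12E_++\tfrac12E_-$. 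Extremality of $E$ forces $E_+=E_-=E$, i.e.\ $\langle Yv|\pi(b)BYv'\rangle=0$ for all $b,v,v'$; commuting $B$ past $\pi(b)$ and using $B=B^*$ gives $\langle BYv|\pi(b)Yv'\rangle=0$, so $BYv\perp\lin_\C\pi(A_e)YV$ and hence $BYv=0$ for all $v$. Then $B\pi(b)Yv=\pi(b)BYv=0$, so $B$ annihilates a dense subspace of $M$, i.e.\ $B=0$.

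The main obstacle I anticipate is the commutation claim $B_\pm\in\pi(A_e)'$ in the sufficiency part: it is precisely there that the CP hypothesis on $E_\pm$ — equivalently the bimodule identity $\tilde E_\pm(a,bc)=\tilde E_\pm(b^*a,c)$ — enters, and it is the only feature that goes beyond Theorem \ref{extker}; the rest is the module bookkeeping already carried out there, together with a density/minimality closure at the end.
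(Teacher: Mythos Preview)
Your proof is correct and follows essentially the same route as the paper: the domination argument from Theorem~\ref{extker} produces positive comparison operators $B_\pm$, the CP identity $\tilde E_\pm(b^*a,c)=\tilde E_\pm(a,bc)$ forces $B_\pm\in\pi(A_e)'$, and in the other direction $B_\pm:=I_M\pm B$ yield the splitting $E_\pm$. The only cosmetic difference is that you close the necessity direction directly (extremality $\Rightarrow E_+=E_-\Rightarrow B=0$ via density) whereas the paper states the contrapositive ($B\ne0\Rightarrow E_+\ne E_-$); your version makes explicit the density step that the paper leaves implicit in its claim $E_+\ne E_-$.
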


\begin{proof}
We follow the proof of Theorem \ref{extker} and use the same notations defined there:
Assume that $E=\frac12 E_++\frac12 E_-$, $E_\pm\in{\bf CP}(A_e,V,s_e)$, $E_+\ne E_-$, and denote
$K=\tilde E$ and $K_\pm=\tilde E_{\pm}$ (so that $\tilde E=\frac12\tilde E_++\frac12\tilde E_-$).
Now 
$$
E_\pm(b^*b')(v,v')=
\big[\tilde E_\pm (b,b')\big](v,v')=\<\pi(b)Yv|B_\pm \pi(b')Yv'\>,%=\<Yv|\pi(b^*)B_\pm \pi(b')Yv'\>,
\hspace{0.5cm}b,\,b'\in B,\;v,\,v'\in V,
$$
where $B_\pm\in L(M)$ are positive, and one gets
$$
E_\pm (b^*b''b')(v,v')=\big[\tilde E_\pm ((b'')^*b,b')\big](v,v')=\big[\tilde E_\pm (b,b''b')\big](v,v'),
$$
that is,
$$
\<\pi(b)Yv|(B_\pm\pi(b'')-\pi(b'')B_\pm)\pi(b')Yv'\>=0,\hspace{0.5cm}b,\,b',\,b''\in B,\;v,\,v'\in V.
$$
By density, $[B_\pm,\pi(b'')]=0$ for all $b''\in B$ and 
$B:=B_+-B_-\ne 0$ commutes with $\pi$.
Moreover, $\<Yv|BYv'\>=E_+(e)(v,v')-E_-(e)(v,v')=s_e(v,v')-s_e(v,v')=0$.

Suppose then that there exists a nonzero $B\in L(M)$ which satisfies the condition of the theorem.
We may assume that $B^*=B$ and $\|B\|\le 1$ so that 
$B_\pm:=I_M\pm B\ge 0$, $B_+\ne B_-$, and $B_\pm$ commute with $\pi$.
Define $E_\pm\in{\bf CP}(A_e,V,s_e)$ by
$$
E_\pm(b)(v,v'):=\<Yv|B_\pm \pi(b)Yv'\>,\hspace{5mm}b\in B,\;v,\,v'\in V, 
$$
for which $E_+\ne E_-$ and $E=\frac12 E_++\frac12 E_-$
so that $E$ is not extreme.
\end{proof}

\begin{example}[Instruments] \rm \label{instruex}
Let $\M:\,\Sigma\times \lk\to \lh$ be a CP instrument and $\mu:\,\Sigma\to[0,\infty]$ a $\sigma$-finite measure such that $\M$ is absolutely continuous with respect to it. Denote, by the same symbol $\M$, the CP `extension' (or linearization) of $\M$ to the tensor product (von Neumann) algebra 
$\lk\otimes L^\infty(\mu)$ 
of von Neumann algebras $\lk$ and $L^\infty(\mu)$
(with the preduals $\tk$ and $L^1(\mu)$, respectively).
Recall that $\lk\otimes L^\infty(\mu)$ 
 can be viewed as the (equivalence classes of) $\mu$-measurable field of operators,  
$\Omega\ni x\mapsto B(x)\in\lk$, such that $\|B\|:=\mu\text{-ess sup}_{x\in\Omega}\|B(x)\|<\infty$. %Its predual consists of $\mu$-measurable fields $\Omega\ni x\mapsto \rho(x)\in\tk$ for which $\|\rho\|_1:=\int_\Omega\tr{\sqrt{\rho(x)^*\rho(x)}}\d\mu(x)<\infty$, and it is $\lk\otimes L^\infty(\mu)$-module.

In Theorems \ref{kuusi} ja \ref{seiska}, $E=\M$, $A_e=\lk\otimes L^\infty(\mu)$, $A=\C$, $V=\hi$, $s_e=I_\hi$, $M$ is a Hilbert space, $Y:\,\hi\to M$ is a linear isometry, and 
$
\pi:\,\lk\otimes L^\infty(\mu)\to\li(M)
$
a unital ${}^*$-homomorphism. As shown in the proof of Theorem \ref{th2},
$M$ can be chosen to be $M=\ki\otimes\hd$ and 
$\pi(B\otimes f)=B \otimes \hat f$.

For Theorem \ref{th2} one gets an interesting observation:
Let 
\begin{eqnarray*}
\<\fii|\M(X,B)\psi\> = \int_X \sum_{k=1}^{n(x)} \<\fii|\A_k(x)^*B\A_k(x)\psi\>\d\mu(x),
\hspace{0.5cm}\fii,\,\psi\in V_\e,
\end{eqnarray*}
be a minimal pointwise Kraus form of $\M$. For all $x\in\Omega$, define a rank $n(x)$ CP map
$$
\M(x,\bullet):\,\lk\to S_\C(V_\e),\;B\mapsto\M(x,B),
$$
where
$$
\M(x,B)(\fii,\psi):=\sum_{k=1}^{n(x)} \<\fii|\A_k(x)^*B\A_k(x)\psi\>,
\hspace{0.5cm}\fii,\,\psi\in V_\e,
$$
so that $\<\fii|\M(X,B)\psi\> = \int_X \M(x,B)(\fii,\psi)\d\mu(x)$ and we have obtained a CP density of a CP instrument.
Conversely, if a $\sigma$-finite positive measure $\mu$ and an ON basis $\e\subset\hi$ are given, then any (weakly $\mu$-measurable) family of CP maps 
$E_x:\,\lk\to S_\C(V_\e)$ defines an instrument if and only if $\int_\Omega E_x(I_\ki)(h_n,h_m)\d\mu(x)=\delta_{nm}$.  
\end{example}

\begin{remark}[Choi isomorphism]\rm
In this remark we generalize Choi's Theorem 2 of \cite{Ch}
which forms a basis for Choi isomorphism widely used in quantum information theory. 

Let $\k=\{k_s\}$ be a ON basis of $\ki$, $V$ a module over a $C^*$-algebra $A$, and $E:\,\lk\to S_A(V)$ a $\C$-linear map. 
Let $X:=\{s\in\N_+\,|\,s<\dim\ki+1\}$ and define a mapping $K_E:\,X\times X\to S_A(V)$ by
$K_E(s,t):=E\big(\kb{k_s}{k_t}\big)$.
We say that $K_E$ is a {\it matrix of $E$} (with respect to the basis $\k$) and denote briefly $K_E\equiv\Big(E\big(\kb{k_s}{k_t}\big)\Big)_{s,t=1}^{\dim\ki}$. Moreover, if $K_E$ is a positive-definite $A$-kernel, that is,
for all $n<\dim\ki+1$ and $v_1,\dots,v_n\in V$,
\begin{equation}\label{Choi2}
\sum_{s,t=1}^n\big[E\big(\kb{k_s}{k_t}\big)\big](v_s,v_t)\ge 0,
\end{equation}
we say that $K_E$ is {\it positive-(semi)definite matrix}. 
Equip $\lk$ with the (ultra)weak topology and $S_A(V)$ with the locally convex topology generated by the seminorms 
$s\mapsto\|s(v_1,v_2)\|$, $v_1,\,v_2\in V$.
Now we have the following generalization of \cite[Theorem 2]{Ch}:

\begin{theorem}\label{CJyleistys}
Let $E:\,\lk\to S_A(V)$ be a continuous $\C$-linear map. Then $E\in{\bf CP}(\lk,V)$ if and only if a matrix $K_E$ of $E$ is positive definite.
\end{theorem}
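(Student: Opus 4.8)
The plan is to prove the two implications separately; essentially all of the work lies in showing that positive-definiteness of $K_E$ forces $E$ to be CP, the converse being a short specialization of a fact already recorded. For the ``only if'' direction, suppose $E$ is CP. Then, by the characterization recalled just before Theorem~\ref{kuusi}, the map $\tilde E\colon\lk\times\lk\to S_A(V)$, $(b,b')\mapsto E(b^*b')$, is a positive-definite $A$-kernel. I would simply specialize its defining inequality to the points $b_s:=\kb{k_1}{k_s}\in\lk$, $s=1,\dots,n$: since $\|k_1\|=1$ one has $b_s^*b_t=\kb{k_s}{k_1}\kb{k_1}{k_t}=\kb{k_s}{k_t}$, so for all $v_1,\dots,v_n\in V$
\[
\sum_{s,t=1}^n\bigl[E\bigl(\kb{k_s}{k_t}\bigr)\bigr](v_s,v_t)=\sum_{s,t=1}^n\bigl[\tilde E(b_s,b_t)\bigr](v_s,v_t)\ge 0,
\]
which is precisely \eqref{Choi2}; hence $K_E$ is a positive-definite matrix. (This uses neither continuity of $E$ nor finiteness of $\dim\ki$.)

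For the converse, assume $K_E$ is positive definite. By the same characterization it suffices to show that $\tilde E$ is positive definite, i.e.\ that $\sum_{i,j=1}^n[E(b_i^*b_j)](v_i,v_j)\ge 0$ for arbitrary $b_1,\dots,b_n\in\lk$ and $v_1,\dots,v_n\in V$. The difficulty, which is exactly what is new compared with Choi's finite-dimensional theorem, is that the matrix expansion $b=\sum_{s,t}b_{st}\kb{k_s}{k_t}$ (with $b_{st}:=\<k_s|bk_t\>$) converges only ultraweakly when $\dim\ki=\infty$, so it cannot be fed into $E$ termwise without control. To handle this I would truncate: let $P_N$ be the orthogonal projection of $\ki$ onto $\lin\{k_1,\dots,k_N\}$ and set $b_i^{(N)}:=P_Nb_iP_N=\sum_{s,t=1}^N (b_i)_{st}\,\kb{k_s}{k_t}$, a finite linear combination of matrix units, where $(b_i)_{st}:=\<k_s|b_ik_t\>$. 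Since $P_N\to I_\ki$ strongly, a routine estimate gives $(b_i^{(N)})^*b_j^{(N)}=P_Nb_i^*P_Nb_jP_N\to b_i^*b_j$ strongly, hence ultraweakly; so, by continuity of $E$,
\[
\bigl[E\bigl((b_i^{(N)})^*b_j^{(N)}\bigr)\bigr](v_i,v_j)\longrightarrow\bigl[E(b_i^*b_j)\bigr](v_i,v_j)
\]
in $A$ as $N\to\infty$, for each $i,j$.

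It then remains to verify $\sum_{i,j=1}^n\bigl[E\bigl((b_i^{(N)})^*b_j^{(N)}\bigr)\bigr](v_i,v_j)\ge 0$ for each fixed $N$, which is now finite bookkeeping. Expanding $(b_i^{(N)})^*b_j^{(N)}=\sum_{t,t'=1}^N\bigl(\sum_{s=1}^N\overline{(b_i)_{st}}\,(b_j)_{st'}\bigr)\kb{k_t}{k_{t'}}$, using $\C$-linearity of $E$, and absorbing the scalars via the $A$-sesquilinearity of the forms $K_E(t,t')=E\bigl(\kb{k_t}{k_{t'}}\bigr)$, one obtains
\[
\sum_{i,j=1}^n\bigl[E\bigl((b_i^{(N)})^*b_j^{(N)}\bigr)\bigr](v_i,v_j)=\sum_{s=1}^N\ \sum_{t,t'=1}^N\bigl[K_E(t,t')\bigr]\Bigl(\sum_{i=1}^n (b_i)_{st}v_i,\ \sum_{j=1}^n (b_j)_{st'}v_j\Bigr).
\]
For each fixed $s$ the inner sum is of the form $\sum_{t,t'=1}^N[K_E(t,t')](w_t,w_{t'})$ with $w_t:=\sum_{i=1}^n (b_i)_{st}v_i\in V$, hence $\ge 0$ because $K_E$ is positive definite (inequality \eqref{Choi2} at the points $1,\dots,N$). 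Thus the left-hand side is a finite sum of positive elements of $A$ and is therefore positive; letting $N\to\infty$ and using that the positive cone of $A$ is norm-closed gives $\sum_{i,j}[E(b_i^*b_j)](v_i,v_j)\ge 0$, so $\tilde E$ is positive definite and $E$ is CP. I expect this passage to the limit to be the only genuine obstacle: it is precisely where the continuity hypothesis is indispensable and where the statement goes beyond the classical Choi theorem. (Alternatively one could insert a minimal Kolmogorov decomposition $(M,D)$ of $K_E$ from Theorem~\ref{t1} into the finite computation and rewrite each inner sum as $\langle\zeta_s\mid\zeta_s\rangle\ge 0$ for a suitable $\zeta_s\in M$, but this detour is not needed.)
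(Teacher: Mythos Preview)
Your proof is correct and follows essentially the same route as the paper's: specialize to $b_s=\kb{k_1}{k_s}$ for the forward direction, and for the converse approximate each $b_i$ by finite-matrix operators, verify positivity for those by expanding into matrix units and invoking \eqref{Choi2}, then pass to the limit using continuity of $E$. Your version is in fact more careful than the paper's on one point: you explicitly check that $(b_i^{(N)})^*b_j^{(N)}\to b_i^*b_j$ via strong convergence of $P_N$ (multiplication being only separately, not jointly, ultraweakly continuous), whereas the paper simply invokes ultraweak density of $\lk_{\rm fin}$ and continuity of $E$ without spelling this out.
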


\begin{proof}
Now $E\in{\bf CP}(\lk,V)$ if and only if
\begin{equation}\label{zip}
\sum_{i,j=1}^n E(b_i^*b_j)(v_i,v_j)\ge 0
\end{equation}
for all $n\in\N_+$, $b_1,\ldots,b_n\in\lk$, and $v_1,\ldots,v_n\in V$.
Obviously, \eqref{Choi2} follows from \eqref{zip} by setting $b_j\equiv\kb{k_1}{k_j}$. %(and $n<\dim\ki+1$).
Suppose then that \eqref{Choi2} holds. Then \eqref{zip} clearly holds for operators $b_i$ whose matrices are finite (i.e.\ $\<k_s|b_ik_t\>\ne 0$ for finitely many indices $s$ and $t$). The ${}^*$-subalgebra $\lk_{\rm fin}$ consisting of operators with finite matrices is (ultra)weakly dense in $\lk$ so that, by continuity of $E$, the condition \eqref{zip} must hold for all operators $b_i\in\lk$.
\end{proof}

\end{remark}

\section*{Appendix}

\begin{remark}\rm \label{remu}
The vectors and operators of Theorem \ref{th2} have the following interrelations: Let $\M\in\In(\Sigma,\,\ki,\,\hi)$ be an instrument with the {\it structure vectors} $\psi_n^s\in\hd\subseteq L^2(\mu,\li)$ associated with the ON bases $\{h_n\}\subset\hi$ and $\{k_s\}\subset\ki$. Let $\{b_k\}$ be an ON basis of $\li$. Then\footnote{If $S_i:\,V_i\times V_i\to\C$, $i=1,\,2$, are sesquilinear forms (e.g.\ bounded operators) and $V_2\subseteq V_1$ (a vector subspace), we 
denote $S_1\supseteq S_2$ if $S_1(v,w)=S_2(v,w)$ for all $v,\,w\in V_2$.}
\begin{eqnarray*}
\<b_k|\psi_n^s(x)\> &=& \<d^s_k(x)|h_n\> = \<k_s|\A_k(x)h_n \>=\<b_k|\A^s(x)h_n\>, \\
\psi_n^s(x)&=&\sum_k\<d^s_k(x)|h_n\>b_k=\sum_k \<k_s|\A_k(x)h_n\>b_k=\A^s(x)h_n, \\
d^s_k(x)&=&\A_k(x)^*k_s=\A^s(x)^*b_k=\sum_{n}\<\psi_n^s(x)|b_k\>h_n,\\
\A_k(x)&=&\sum_s\kb{k_s}{b_k}\A^s(x)=\sum_{s,n}\<b_k|\psi_n^s(x)\>\kb{k_s}{h_n}=\sum_s\kb{k_s}{d^s_k(x)}, \\
\A^s(x)&=&\sum_n\kb{\psi_n^s(x)}{h_n}=\sum_k \kb{b_k}{d_k^s(x)}=\sum_k\kb{b_k}{k_s}\A_k(x), \\
(Yh_n)(x)&=&\sum_{s}k_s\otimes \psi_n^s(x)=\sum_{s,k}\<d_k^s(x)|h_n\>k_s\otimes b_k\\
&=&\sum_{k}\big(\A_k(x)h_n\big)\otimes b_k=\sum_{s}k_s\otimes\big(\A^s(x)h_n\big),\\
\M(\Omega,I_\ki)&=&I_\hi=Y^*Y\supseteq\sum_{s=1}^{\dim\ki}\int_\Omega\A^s(x)^*\A^s(x)\d\mu(x)
=\int_\Omega \sum_{k=1}^{n(x)} \A_k(x)^*\A_k(x)\d\mu(x).
\end{eqnarray*}
Note that, since $\hd\subseteq L^2(\mu,\li)$, $n(x)\le\dim\hi\dim\ki$, and $\li$ is arbitrary, it is sometimes notationally convenient to replace $\li$ [resp.\ $\{b_k\}_{k=1}^\infty$] with $\hi\otimes\ki$ [resp.\ $\{h_n\otimes k_s\}$]. 
For example, when $\M\in\In(\Sigma,\,\ki,\,\hi)$ is concentrated on a point $y\in\Omega$, that is, the associated measure $\mu=\delta_y$ (the Dirac measure), then 
$$
\M(X,B)=\sum_{n,m}\sum_{s,t}B_{st}\<\psi_n^s|\psi_m^t\>\kb{h_n}{h_m},\qquad y\in X,
$$
where $\psi_n^s\in\hi\otimes\ki$. By choosing $\psi_n^s=\sqrt{\lambda_s}\,h_n\otimes k_s$, $\la_s\ge 0$, $\sum_s\la_s=1$ one gets a channel $\M\big(\Omega,B\big)=\big(\sum_s \lambda_s B_{ss}\big)I_{\hi}=\tr{\la B}I_{\hi}$ where $\la=\sum_s\la_s\kb{k_s}{k_s}$.
\end{remark}

\begin{remark}\rm
Let $\M(X,B)=Y^*(B\otimes\CHII X)Y=\tilde Y^*(B\otimes\CHII X)\tilde Y$ be two minimal Stinespring dilations for an instrument $\M$. Here $Y:\,\hi\to\ki\otimes\hd$ and $\tilde Y:\,\hi\to\ki\otimes\widetilde\hd$ are isometries where
$\hd=\int_\Omega^\oplus\hi_{n(x)}\d\mu(x)$ and $\widetilde\hd=\int_\Omega^\oplus\hi_{\tilde n(x)}\d\mu(x)$ are direct integral Hilbert spaces. Fix ON bases $\e$ and $\k$, and write
$Y=\sum_{m=1}^{\dim\hi}\sum_{t=1}^{\dim\ki}\kb{k_t\otimes\psi^t_m}{h_m}$ and 
$\tilde Y=\sum_{m=1}^{\dim\hi}\sum_{t=1}^{\dim\ki}\kb{k_t\otimes\tilde\psi^t_m}{h_m}$.
From Theorem \ref{kuusi} (see Example \ref{instruex}) it follows that there is a unitary mapping $U:\,\ki\otimes\hd\to\ki\otimes\widetilde\hd$ such that, on $\ki\otimes\widetilde\hd$,
$$
B\otimes\CHII X=U(B\otimes\CHII X)U^*, \hspace{5mm}B\in\lk,\;X\in\Sigma,
$$
and $\tilde Y=UY$. Hence, $U=I_\ki\otimes V$ where $V:\,\hd\to\widetilde\hd$ is a unitary map and $\hd\cong\widetilde\hd$. Moreover, $V$ commutes with the canonical spectral measure(s) and is thus decomposable:
$V=\int_\Omega^\oplus V(x)\d\mu(x)$ where $V(x):\,\hi_{n(x)}\to\hi_{\tilde n(x)}$ is unitary for almost all $x\in\Omega$. Especially, $\tilde n(x)=n(x)$ and $$\tilde\psi^t_m(x)=V(x)\psi^t_m(x)$$ for almost all $x\in\Omega$. If $\hd$ and $\widetilde\hd$ are embedded in $L^2(\mu,\li)$ as before, that is, $\hi_{\tilde n(x)}=\hi_{n(x)}={\lin_\C\{b_n\,|\,1\le n\le n(x)\}}$ one can view $V(x)$ as a change of the ON basis 
$\{b_n\}_{n=1}^{n(x)}$ of the fiber $\hi_{n(x)}$ at $x\in\Omega$: Let $d_k^s(x)$, $\A_k(x)$, and $\A^s(x)$ [resp.\ $\tilde d_k^s(x)$, $\tilde\A_k(x)$, and $\tilde\A^s(x)$] be the objects of Theorem \ref{th2} associated to the vectors $\psi^t_m$ [resp.\ $\tilde\psi^t_m$].
Then
\begin{align*}
d^s_k(x):=\sum_{n}\<\psi_n^s(x)|b_k\>h_n,\hspace{0.5cm}\tilde d^s_k(x)
&:=\sum_{n}\<\tilde \psi_n^s(x)|b_k\>h_n
=\sum_{n}\<\psi_n^s(x)|V(x)^*b_k\>h_n, \\
\A^s(x)=\sum_n\kb{\psi_n^s(x)}{h_n},\hspace{0.5cm}\tilde\A^s(x)&=\sum_n\kb{\tilde\psi_n^s(x)}{h_n}=V(x)\A^s(x),\\
\A_k(x)=\sum_s\kb{k_s}{b_k}\A^s(x),\hspace{0.5cm}
\tilde \A_l(x)&=\sum_s\kb{k_s}{b_l}\tilde \A^s(x)=\sum_s\kb{k_s}{b_l}V(x)\A^s(x) \\
&=\sum_{k}\<b_l|V(x)b_k\>\sum_s\kb{k_s}{b_k}\A^s(x)=\sum_{k}\<b_l|V(x)b_k\>\A_k(x)
\end{align*}
where
$\big(\<b_l|V(x)b_k\>\big)_{l,k=1}^{n(x)}$
 is a unitary matrix (compare to Remark 4 of \cite{Ch}).
Now, for example,
$$
\M(X,B)\supseteq\int_X\sum_{k=1}^{n(x)} \A_k(x)^*B\A_k(x)\d\mu(x)=\int_X\sum_{l=1}^{n(x)} \tilde\A_l(x)^*B\tilde\A_l(x)\d\mu(x).
$$
Note that if the bases $\e$ and $\k$ are changed to ON bases 
$\e'$ and $\k'$ of $\hi$ and $\ki$, respectively, then, e.g.\
$$
Y=\sum_{m=1}^{\dim\hi}\sum_{t=1}^{\dim\ki}\kb{k_t\otimes\psi^t_m}{h_m}
=\sum_{n=1}^{\dim\hi}\sum_{s=1}^{\dim\ki}\kb{k'_s\otimes\psi'^s_n}{h'_n}
$$
where 
$$
\psi'^s_n=\sum_{m=1}^{\dim\hi}\sum_{t=1}^{\dim\ki}
\<k'_s |k_t \>\psi^t_m\<h_m|h'_n\>.
$$
Now one can define the corresponding objects
$d'^s_k(x)$, $\A'^s(x)$, and $\A'_k(x)$ related to bases $\e'$ and $\k'$, and easily find their relations with $d^s_k(x)$, $\A^s(x)$, and $\A_k(x)$.
If the measure $\mu$ is replaced by an equivalent measure $\overline\mu$, $\d\overline\mu(x)=w(x)\d\mu(x)$, then the density (the Radon-Nikod\'ym derivative) $w$ can be absorbed into fibers $\hi_{n(x)}$ and the corresponding direct integral Hilbert spaces are unitarily equivalent.
Thus, we have seen that Theorem \ref{th2} is essentially independent of the choices of $\mu$, the ON bases, and the corresponding operators.
\end{remark}

\end{document}